\newcommand{\CNTY}{\texttt{CNTY}}
\newcommand{\CVX}{\texttt{CVX}}
\newcommand{\QCVX}{\texttt{Q-CVX}}
\newcommand{\NI}{\texttt{NI}}
\newcommand{\DPI}{\texttt{DPI}}
\newcommand{\MONO}{\texttt{MONO}}
\newcommand{\AVG}{\texttt{AVG}}
\newcommand{\GAVG}{\texttt{GAVG}}
\newcommand{\MAX}{\texttt{MAX}}
\newcommand{\GMAX}{\texttt{GMAX}}
\newcommand{\MIN}{\texttt{MIN}}
\newcommand{\CCVX}{\texttt{CCVX}}
\newcommand{\CIV}{\texttt{CIV}}
\newcommand{\suport}[1]{\lceil#1\rceil}
\pgfplotsset{compat=newest} 
\pgfplotsset{plot coordinates/math parser=false}
\newlength\figureheight
\newlength\figurewidth
\newcommand{\X}{\mathcal{X}}
\newcommand{\Y}{\mathcal{Y}}
\newcommand{\W}{\mathcal{W}}
\newcommand{\U}{\mathcal{U}}
\newcommand{\Lk}{\mathcal{L}}
\newcommand{\Real}{\mathbb{R}}
\newcommand{\argmin}{\arg\!\min}
\newcommand{\argmax}{\arg\!\max}
\newcommand\Apply {\triangleright}
\newcommand\Dist {\mathbb{D}}
\newcommand\Hyper[2]  {[#1{\Apply}#2]}
\newcommand\Thm[1] {Thm~\ref{#1}}
\newif\ifqif
\newtheorem{theorem}{Theorem}
\newtheorem{lemma}{Lemma}
\newtheorem{definition}{Definition}
\newtheorem{proposition}{Proposition}
\newtheorem{remark}{Remark}
\begin{document}

\title{An Extension of the Adversarial Threat Model in Quantitative Information Flow}

\author{Mohammad A.~Zarrabian,~\IEEEmembership{Member,~IEEE,} and~Parastoo~Sadeghi,~\IEEEmembership{Senior Member,~IEEE}
\thanks{Mohammad A.~Zarrabian is with the College of Engineering, Computing, and Cybernetics, Australian National University, Canberra, Australia, e-mail: mohammad.zarrabian@anu.edu.au. Parastoo Sadeghi is with the School of Engineering and Technology, the University of New South Wales, Canberra, Australia, e-mail: p.sadeghi@unsw.edu.au. This work was supported by the Australian Research Council Future Fellowship under Grant FT190100429.}}

\maketitle

\begin{abstract}
    In this paper, we propose an extended framework for quantitative information flow (QIF), aligned with the previously proposed core-concave generalization of entropy measures, to include adversaries that use Kolmogorov-Nagumo~\(f\)-mean to infer secrets in a private system. 
    Specifically, in our setting, an adversary uses Kolmogorov-Nagumo~\(f\)-mean to compute its best actions before and after observing the system's randomized outputs. This leads to generalized notions of prior and posterior vulnerability and generalized axiomatic relations that we will derive to elucidate how these \(f\)-mean-based vulnerabilities interact with each other. 
    We demonstrate the usefulness of this framework by showing how some notions of leakage that had been derived outside of the QIF framework and so far seemed incompatible with it are indeed explainable via such an extension of QIF. 
    These leakage measures include \(\alpha\)-leakage, which is the same as Arimoto mutual information of order \(\alpha\), maximal \(\alpha\)-leakage, which is the \(\alpha\)-leakage capacity, and maximal \( (\alpha,\beta) \)-leakage, which is a generalization of the above and captures local differential privacy as a special case. We define the notion of generalized capacity and provide partial results for special classes of functions used in the Kolmogorov-Nagumo mean. We also propose a new pointwise notion of gain function, which we coin pointwise information gain. We show that this pointwise information gain can explain R{\'e}yni divergence and Sibson mutual information of order \(\alpha \in [0,\infty]\) as the Kolmogorov-Nagumo average of the gain with a proper choice of function \(f\).
\end{abstract}

\begin{IEEEkeywords}
    quantitative information flow, vulnerability,  Kolmogorov-Nagumo mean, maximal leakage, \(\alpha\)-leakage, Sibson mutual Information, R{\'e}yni divergence, maximal \((\alpha,\beta)\)-leakage, 
\end{IEEEkeywords}

\section{Introduction}
    Information leakage is a main concern in computing and data processing systems. 
    To address this concern, \textit{quantitative information flow} (QIF)~\cite{quantitiveInformationFlow}, starting with the pioneering work
    of Smith~\cite{2009foundationsQIF}, has focused on interpreting privacy and the operational meaning of information leakage in a practical and meaningful way within a framework known as the adversarial threat model~\cite{2012Alvimgain}. 
    Consider a random variable $X,$ representing a secret to be protected from adversarial attacks. 
    The secret can be a database held by a government agency, an individual's unique typing pattern, a password, and so on. 
    To protect $X$, a privacy-enhancing procedure, also known as a mechanism, takes $X$ as the input and produces another random variable,
    denoted by $Y,$ as the system output through a probabilistic
    mapping given by the conditional probability $p_{Y|X}$.
    
    Operationally meaningful measures in QIF include \textit{Bayes vulnerability}~\cite{2009foundationsQIF} (complement of min-entropy) and its generalized version $g$-vulnerability~\cite{2012Alvimgain}, where $g$ is a gain function representing the guessing actions and rewards of the adversary. 
    This approach quantifies the threat as a vulnerability by maximizing the expected gain functions before and after observing randomized data.\footnote{The threat can alternatively be expressed as the minimization of expected loss,  yielding an uncertainty (entropy) measure~\cite{2016alvimaxioms}.} 
    Henceforth, information leakage is defined as the additive or multiplicative difference between the posterior and prior vulnerability~\cite{2014AdditiveMultpliQIF}. 
    A main strength of QIF is that worst-case threats can be quantified by taking the worst case of information leakage over all priors, gain functions, or both. This results in robust notions of leakage \emph{capacity}~\cite{2014AdditiveMultpliQIF}. 
    Two test-of-time awards for~\cite{2009foundationsQIF} and most recently for~\cite{2012Alvimgain} at CSF 2024 are testaments to the success of the QIF framework in the past 15 years. 
    
    In parallel with the QIF, other measures and frameworks to quantify privacy leakage have been developed and studied. Notable examples include differential privacy (DP)~\cite{2006CalibNoiseDFP,2006DFP} and local differential privacy (LDP)~\cite{2011Learnprivately,2013LDPMiniMax}. Connections between DP and LDP notions of privacy and operational quantities from QIF have been established in~\cite{Natash2019comparing,2022EXplainEps,2020MaxL,2023PML}. 
    In particular, it was revealed in~\cite{2022EXplainEps} that LDP is the leakage capacity among all adversaries interested in maximum (worst-case) information leakage across all outputs. 
    
    Information-theoretic privacy measures have also been investigated in the past few years. 
    A notable case is maximal leakage~\cite{2020MaxL}, which has spawned a growing research interest in the development of information-theoretical privacy measures.
    These include pointwise maximal leakage~\cite{2023PML,2023extremalPML,grosse2024quantifying,grosse2024quantifyingjournal}, \(\alpha\)-leakage and its maximal variant~\cite{2019TunMsurInfLeak_PUT,2020alphaproperties,2020OntheRobustness,2023Generalgain, kurri2024maximal}, as well as maximal $(\alpha,\beta)$-leakage~\cite{2022Alphabetleakage,2024GilaniUnifyingMabel}. 
    These measures have been proven useful in applications such as membership privacy~\cite{2021SaraMember,2021FarokhiMember} and machine learning~\cite{2021KurrialphaGan,2022KurriAlphaGan,2022AlphaClass,2024AlphaGAN}.
    
    The authors in~\cite{2019TunMsurInfLeak_PUT} extended the notion of maximal leakage from~\cite{2020MaxL} to \(\alpha\)-leakage and maximal \(\alpha\)-leakage. The \(\alpha\)-leakage measure is reduced to the Arimoto mutual information of order \(\alpha\)~\cite{arimoto1977information}, and maximal \(\alpha\)-leakage is the worst-case \(\alpha\)-leakage over all randomized guesses of the secret and its prior. 
    In particular, for $\alpha = \infty$, maximal \(\alpha\)-leakage becomes maximal leakage.
    Building on this, the authors in~\cite{2024GilaniUnifyingMabel} generalized the concept to maximal $(\alpha,\beta)$-leakage, which encompasses several privacy measures depending on the choice of $(\alpha,\beta)$.
    This includes maximal \(\alpha\)-leakage ($\beta\!=\!1$), maximal leakage ($\beta\!=\!1, \alpha\!=\!\infty$), R\'enyi LDP ($\alpha\!=\!\beta$), and LDP ($\alpha=\beta=\infty$)\footnote{Via vectorization of secrets and conditioning on a vector of secrets as side information, it is also possible to interpret DP as a maximal $(\alpha,\beta)$-leakage.}.
    While the practical applications of $(\alpha,\beta)$-leakage remain largely unexplored, a parallel from maximal $(\alpha,\beta)$-leakage can be drawn to Sharma-Mittal parameterized entropies~\cite{1975SharmaMittal}, which similarly generalize to different entropy measures such as Tsallis entropies~\cite{1988Tsallis}, with broad applications across various fields such as clustering~\cite{koltcov2019estimating} and fuzzy logic~\cite{verma2021sharma}.

   \subsection{Motivation for This Paper}
   
        Most \(\alpha\)-based leakage measures have been introduced following intuitive arguments.
        Despite interesting results and applications, there are still questions about the operational meaning or the adversarial threats such measures represent. We elaborate below.
        
        The first issue concerns what the adversary is guessing in \(\alpha\)-based leakage measures. 
        Instead of using the adversarial gain function model in QIF, the work~\cite{2020MaxL} and follow-up works advocate for a model where a randomized function of secret $X$, denoted by $U$, is guessed by the adversary, subject to the Markov chain $U-X-Y$.
        Despite the seemingly richer adversarial model, maximal leakage becomes identical to the Bayes capacity in QIF. Indeed, both coincide with the Sibson mutual information~\cite{1969SibsonInfRad} of order $\alpha = \infty$. 
        Therefore, a randomized guess model does not encompass any additional risk of information leakage. 
        This has been formally proved in works such as~\cite{2023PML,2022EXplainEps}. 
        See also~\cite{grosse2024quantifyingjournal} for general alphabets and risk-averse adversaries. 

        The second issue is that to our best knowledge, \(\alpha\)-leakage, maximal \(\alpha\)-leakage, and maximal $(\alpha,\beta)$-leakage have not been explained nor interpreted in the QIF framework for \(\alpha\) other than $\alpha=\infty$ or $\alpha=1$ (Shannon mutual information). 
        In other words, except for $\alpha=1, \infty$, no gain (or loss function) within the existing QIF framework is known, which leads to an \(\alpha\)-based leakage measure. 
        It is therefore unclear whether a Markovian randomized guessing model is essential for \(\alpha\)-based leakage measures for finite \(\alpha\).\footnote{We also note technical inconsistencies in some of the definitions and derivations of \(\alpha\)-based leakage measures that are beyond the scope of the Introduction but will be dealt with later in the paper.}

        In this paper, we are concerned with bridging the gap between QIF and leakage measures that have been derived outside of the QIF framework with an aim to extend encompassing features of the QIF framework to include \(\alpha\)-based leakage measures (and more) and to explain them using the consistent and robust language of QIF. 

        The most relevant works to this paper are~\cite{2019GeneEntropySymm,2020CondEntropyAxiom,2021ConvCorConV}, which propose and develop a similar generalized framework for entropy measures known as core-concave entropy.   
        In~\cite{2019GeneEntropySymm}, a generalized entropy was introduced using a generalized mean, which reduces to the Kolmogorov-Nagumo \(f\)-mean in the case of conditional entropy. 
        This formulation unifies various entropy measures, including conditional R\'enyi entropy, Sharma-Mittal entropy, and guessing entropy~\cite{1996IneqGuessing}.  
        The axiomatization of core-concave entropies was further developed in~\cite{2020CondEntropyAxiom,2021ConvCorConV}. Specifically,~\cite{2020CondEntropyAxiom} introduced a set of axioms based on generalized average and minimum conditional entropy, while~\cite{2021ConvCorConV} later resolved the dichotomy between these sets of axioms through a limit construction of generalized average entropies.
            
        The core-concave framework adopts a holistic information-theoretic perspective to establish a consistent generalization. 
        Instead, this paper follows a step-by-step QIF approach, where generalized vulnerabilities are explicitly defined, and for each considered leakage measure, the corresponding gain function and the optimization process are analyzed. 
        These gain functions and optimizations play a crucial role in adversarial modeling and have significant implications for applications such as private machine learning---elements that are bypassed in the core-concave approach. 
        We note that these steps were bypassed in the core-concave approach of~\cite{2019GeneEntropySymm,2020CondEntropyAxiom,2021ConvCorConV}.
        
        Furthermore, in the core-concave framework, the concavity of prior entropy is relaxed to core-concavity, which defines a generalized average of the entropy measure with a strictly increasing function. Since we maintain convexity as a fundamental axiom of prior vulnerability, the implications of our axioms differ from those in the core-concave framework, a distinction that will be clarified in Section~\ref{sec:axioms}.

    \subsection{Contributions and Organization of Results}
    
        Fortunately, it turns out that extended forms of prior and posterior vulnerability using the generalized Kolmogorov-Nagumo \(f\)-mean approach~\cite{2019GeneEntropySymm,2020CondEntropyAxiom,2021ConvCorConV} provide the key to explaining \(\alpha\)-leakage measures, and potentially much more, in the QIF framework. 
        From a high level, the adversary applies a more advanced averaging technique via the Kolmogorov-Nagumo \(f\)-mean (compared to a ``plain vanilla" averaging) to determine its best reward and corresponding action in guessing $X$. This leads to generalized vulnerability and leakage measures. To make these fully compatible with the existing QIF framework, we are also required to prove axiomatic relations for these quantities. 
        Our contributions and organization of the results are listed as follows:
    
        \begin{enumerate}
            \item In Section~\ref{sec:critical}, we will review the original definitions of \(\alpha\)-based leakage measures in previous works and discuss some inconsistencies in their definitions, which have been resolved in this paper.
            
            \item In Section~\ref{Sec:Genralized}, we propose extended forms of prior and posterior vulnerability and leakage using the generalized Kolmogorov-Nagumo \(f\)-mean approach. We also clarify the relationship between these extended forms and the core-concave framework. In the following sections, we use our generalized framework to explain most threat models developed elsewhere.
            
            \item Section~\ref{u:is:redundant} is devoted to showing the applicability of our generalized QIF framework to interpret the operational meaning of \(\alpha\)-based leakage measures, including \(\alpha\)-vulnerability, \(\alpha\)-leakage, maximal \(\alpha\)-leakage, and maximal $(\alpha,\beta)$-leakage. We first prove that guessing a randomized function of $X$, such as $U$ does not add to the adversarial threat model beyond what the generalized capacity in the extended QIF can do. Therefore, the generalized $g$-leakage framework encompasses all such guessing adversaries. This simplifies the operational interpretation of maximal \(\alpha\)-based measures and resolves some inconsistencies discussed in Section~\ref{sec:critical}. Since maximal $(\alpha,\beta)$-leakage encompasses LDP and R\'enyi LDP, our proposed framework also contains these leakage capacities as special cases.  While LDP was already characterized from the lens of QIF in~\cite{2022EXplainEps} as a max-case capacity, this is the first time R\'enyi LDP is explained from the lens of QIF. 
            
            \item We explore the notion of generalized leakage capacity in Section~\ref{sec:capacity}. We present partial results on the generalized leakage capacity for the special class of \(f\)-mean functions with a multiplicative inverse.
            
            \item In Section~\ref{Sec:Sibson}, we complete our interpretation of \(\alpha\)-based measures by proposing a new information gain function, which we coin \textit{pointwise information gain}. This pointwise information gain is inspired by the  R\'enyi's information gain~\cite{1961measuresreny}. Using this gain function, we interpret the operational meaning of R{\'e}nyi divergence as the pointwise \(\alpha\)-leakage and Sibson mutual information\cite{1969SibsonInfRad} for all $\alpha \in [0,\infty]$ for the first time, to the best of our knowledge. 
            
            \item Section~\ref{sec:axioms} reviews the existing axioms for prior and posterior vulnerabilities in QIF~\cite{2016alvimaxioms}, as well as the generalized axioms proposed in~\cite{2020CondEntropyAxiom,2021ConvCorConV}. We demonstrate that the conventional QIF axioms are satisfied by our proposed generalized vulnerabilities, following the approach of~\cite{2016alvimaxioms}, and clarify the specific relaxations considered in comparison to~\cite{2020CondEntropyAxiom,2021ConvCorConV}. The conclusions are presented in Section~\ref{sec:conclusion}.
        \end{enumerate}

\section{Background}\label{sec:background}

    This section provides a concise overview of key concepts and terminology of QIF as outlined in~\cite{2016alvimaxioms}, and {\(\alpha\)-information measures}~\cite{1961measuresreny, 1969SibsonInfRad, arimoto1977information, 1995GeneralizedCuttoff} which will be used throughout this paper. For more in-depth explanations, readers are referred to~\cite{2016alvimaxioms,quantitiveInformationFlow,2021ErrorExponentAlphaInfr} and the references therein.
    
    The \textit{secret} \( X \) represents the information that must be protected from adversaries who know \( X \) only via a prior probability distribution 
    \(\pi\) defined over the alphabet \(\mathcal{X}\). A system is characterized by the triple \( (\mathcal{X}, \mathcal{Y}, C) \), where \( \mathcal{X} \) and \( \mathcal{Y} \) are finite sets of input and output symbols, respectively, and \( C = P_{Y|X} \) is a channel matrix of size \( |\mathcal{X}| \times |\mathcal{Y}| \). The elements of matrix \( C \) represent the conditional probability \( C_{x,y} = \Pr[Y=y|X=x] = p(y|x) \), denoting the likelihood of observing output \( y \) given the input \( x \). Each row of \( C \) is a probability distribution over \( \mathcal{Y} \), with all elements being non-negative and summing to 1. 
    
    It is typically assumed that in addition to the prior distribution $\pi$, the adversary is also aware of the channel. Consequently, the adversary can compute the joint distribution \(p(x,y)\!\!=\!\pi_x C_{x,y} \), marginals \(\pi_{x}\!\!=\!\sum_{y\in\Y} p(x,y)\) and \(p(y)\!=\!\!\sum_{x\in\X}p(x,y) \), as well as posteriors \(\delta_{x}^{y}\!\!=\!\frac{p(x,y)}{p(y)}\), when $p(y)\!\neq\! 0$. The channel's function, therefore, is to update the adversary’s knowledge about $X$ from the prior \(\pi\) to a set of posterior distributions \(\delta^{y}\!\!=\!\!p(X|y)\), each occurring with probability $p(y)$.
    
    Let \( \mathbb{D}\mathcal{X} \) represents the set of distributions over \( \mathcal{X} \), and \( \suport{\pi} \) denote the support of $\pi$.
    The pair \( [\pi, C] \) yields the posterior \( \delta^{y} \) and corresponding \( p(y) \) for each \( y \in \mathcal{Y} \), which are referred to as the inner and outer distributions, respectively. Instead of treating \( p_{Y} \) as a distribution on \( \mathcal{Y} \), it can be viewed as a distribution over the posteriors \( \delta^{y}\). This creates a distribution over distributions, denoted by \( \mathbb{D}(\mathbb{D}\mathcal{X}) = \mathbb{D}^2\mathcal{X} \), also known as a hyper distribution. Let \( \Delta \) represent a general hyper-distribution, and \( [\pi, C] \) the hyper-distribution resulting from the channel \( C \) acting on the prior \( \pi \). The support of the hyper-distribution is denoted by \( \suport{\Delta} \), and \( [\pi] \) indicates a point hyper that assigns probability 1 to \( \pi \).

    \subsection{Definitions and Results from QIF}
    
        \begin{definition}
            For a set of (possibly infinite) guesses \( \mathcal{W} \)  that an adversary might make about \( X \), the gain function $g:\!\mathcal{W}\!\times\!\mathcal{X}\!\rightarrow\!\mathbb{R}$ measures the adversary's expected gain for a guess \( w \) when the actual secret value is \( x \). 
            The \( g \)-vulnerability function assesses the threat by calculating the adversary's expected gain for an optimal choice of \( w \). 
            Accordingly, the threat by prior distribution is given by the \textbf{prior $g$-vulnerability} as:
            \begin{align*}
                V_g(\pi) =\sup_{w \in \mathcal{W}} \sum_{x \in \mathcal{X}} \pi_x g(w,x).
            \end{align*}
            Moreover, the threat after observation of $Y$ is given by the posterior $g$-vulnerability. The first class of posterior $g$-vulnerability is the average (\AVG), which is defined as
            \begin{align*}
                \widehat{V}_g[\pi,C] &= \sum_{y\in\Y} \sup_{w\in\W} \sum_{x\in\X} \pi_x C_{x,y} g(w,x) \nonumber\\
                &= \sum_{y\in\Y} p(y) \sup_{w\in\W} \sum_{x\in\X} \delta_{x}^{y} g(w,x) 
                =\sum_{y\in\Y}p(y)V_{g}(\delta^{y}).
            \end{align*}
            The second class is the maximum (\MAX), which is given by 
            \begin{align*}
                \widehat{V}^{\max}_g[\pi,C] &= \max_{y\in \Y} V_{g}(\delta^{y}),
            \end{align*}
            and measures the worst-case posterior threat.
        \end{definition}
        In reference to~\cite{quantitiveInformationFlow}, we allow for negative values in the function $g(w,x)$ to indicate ``losses'' for guessing $w$ when the secret is $x$. However, it is necessary for the expected gain $V_g$ always to be non-negative so that a value of zero indicates no vulnerability. Therefore, at least one positive value should be in the co-domain of $g(w,x)$.
        An alternative way to assess a threat is through \textit{uncertainty}, which is defined based on a loss function that the adversary aims to minimize. Accordingly, the prior uncertainty is defined as \[U_{l}(\pi)=\inf_{w\in\W}\sum_{x\in\X}\pi_{x}l(w,x),\] where $l(w,x)$ is the loss function. 
    
        \begin{definition} 
            Leakage measures can be either additive or multiplicative, which are defined as follows:
            \begin{equation}
                \begin{array}{rlr}
                &\text{Additive:~~~~~~~} 
                \mathcal{L}_{g}^{+}(\pi,C)=\widehat{V}_{g}[\pi, C]-V_{g}(\pi),\nonumber \\
                &\text {Multiplicative:~} 
                \mathcal{L}_{g}^{\times}(\pi,C) =\log \left(\widehat{V}_{g}[\pi, C] / V_{g}(\pi)\right).
                \end{array}
            \end{equation}
            One can replace $\widehat{V}_{g}[\pi, C]$ with $\widehat{V}^{\max}_g[\pi,C]$, resulting in \textit{max-case $g$-leakage}~\cite{2022EXplainEps}. In this paper, we consider only multiplicative max-case $g$-leakage denoted by $\mathcal{L}_{g}^{\max}(\pi,C)$.
        \end{definition}

        Another notable measure is leakage capacity, which serves as a measure of the robustness of leakage by maximizing it over the prior $\pi$, the gain function $g$, or both. These maximizations account for our potential uncertainty regarding the prior knowledge or adversary's chosen gain function. In essence, capacities represent universal quantities that capture worst-case scenarios, defining the boundaries of maximum threat. There are six interpretations of capacity in total (three for each definition); however, since we focus exclusively on multiplicative leakage, we review results related to $\mathcal{L}_{g}^{\times}(\pi,C)$ only. 
        The three definitions of multiplicative capacity are:
        \begin{itemize}
            \item Supremum over prior $\pi:\mathbb{D}\mathcal{X}$ with fixed $g$: $$\mathcal{L}_{g}^{\times}(\forall,C)=\sup_{\pi}\mathcal{L}_{g}^{\times}(\pi,C);$$
            \item Supremum over $g$ with fixed $\pi$: $$\mathcal{L}_{\forall}^{\times}(\pi,C)=\sup_{g}\mathcal{L}_{g}^{\times}(\pi,C);$$
            \item Supremum over both $\pi$ and $g$: $$\mathcal{L}_{\forall}^{\times}(\forall,C)=\sup_{\pi}\mathcal{L}_{{\forall}}^{\times}(\pi,C)=\sup_{g}\mathcal{L}_{g}^{\times}(\forall,C).$$
        \end{itemize}
        
        For the class of non-negative gain functions $g$, the following results characterizes $\mathcal{L}_{\forall}^{\times}(\forall,C)$.
        \begin{theorem}
            $\mathcal{L}_{\forall}^{\times}(\forall,C)$ is given by the Bayes capacity $\mathcal{ML}(C)$
            \begin{align*}
                \mathcal{L}_{\forall}^{\times}(\forall,C)=\mathcal{ML}(C) = \log \sum_{y\in\Y}\max_{x\in\X}C_{x,y},
            \end{align*}
            where the supremum is achieved by uniform prior and identity gain function~\cite{2009Quantitaiveleakage,2012Alvimgain}. 
        \end{theorem}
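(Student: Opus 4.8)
The plan is to prove the two directions of the equality separately: a universal upper bound valid for every prior $\pi$ and every non-negative gain function $g$, and a matching witness that attains it. Since $\log$ is strictly increasing, maximizing $\mathcal{L}_{g}^{\times}(\pi,C)$ is equivalent to maximizing the ratio $\widehat{V}_{g}[\pi,C]/V_{g}(\pi)$, so it suffices to show this ratio is bounded above by $\sum_{y\in\Y}\max_{x\in\X}C_{x,y}$ and that equality is achievable. A pleasant feature of this route is that, by establishing a single two-sided bound, I never have to justify interchanging the two suprema appearing in the definition of $\mathcal{L}_{\forall}^{\times}(\forall,C)$.

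For the upper bound I would argue term-by-term in the outer sum defining $\widehat{V}_{g}[\pi,C]$. Fixing $y\in\Y$ and using $g(w,x)\geq 0$ together with $C_{x,y}\leq\max_{x'\in\X}C_{x',y}$, each inner expression satisfies $\sum_{x\in\X}\pi_x C_{x,y}g(w,x)\leq(\max_{x'\in\X}C_{x',y})\sum_{x\in\X}\pi_x g(w,x)$. Taking the supremum over $w\in\W$ on both sides and recognizing the right-hand factor as $V_{g}(\pi)$ gives $\sup_{w\in\W}\sum_{x\in\X}\pi_x C_{x,y}g(w,x)\leq(\max_{x'\in\X}C_{x',y})\,V_{g}(\pi)$. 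Summing over $y\in\Y$ then yields $\widehat{V}_{g}[\pi,C]\leq V_{g}(\pi)\sum_{y\in\Y}\max_{x'\in\X}C_{x',y}$, and dividing and taking logarithms gives $\mathcal{L}_{g}^{\times}(\pi,C)\leq\mathcal{ML}(C)$ for all $\pi$ and all non-negative $g$.

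For tightness I would exhibit a saturating pair: take $\pi$ to be the uniform prior $\uniform$ and $g$ to be the identity gain $\gid$ with $\W=\X$, where $\gid(w,x)=1$ if $w=x$ and $0$ otherwise. Then $V_{\gid}(\uniform)=\max_{x\in\X}\uniform_x=1/|\X|$, while $\widehat{V}_{\gid}[\uniform,C]=\sum_{y\in\Y}\max_{w\in\X}\uniform_w C_{w,y}=\tfrac{1}{|\X|}\sum_{y\in\Y}\max_{w\in\X}C_{w,y}$. The common factor $1/|\X|$ cancels in the ratio, leaving exactly $\sum_{y\in\Y}\max_{x\in\X}C_{x,y}$, so $\mathcal{L}_{\gid}^{\times}(\uniform,C)=\mathcal{ML}(C)$. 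Combined with the upper bound, this forces the supremum to equal $\mathcal{ML}(C)$ and identifies $(\uniform,\gid)$ as an optimizer.

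The step I expect to be the crux is the upper-bound inequality, and the essential caveat there is the hypothesis $g\geq 0$: pulling the factor $\max_{x'\in\X}C_{x',y}$ outside the inner sum is valid only because every $g(w,x)$ is non-negative, and the bound can genuinely fail without it. I would also flag a well-definedness point that must be handled cleanly, namely restricting attention to pairs $(\pi,g)$ with $V_{g}(\pi)>0$ so that the ratio and its logarithm are meaningful; this is guaranteed at the witness $(\uniform,\gid)$ since $\gid$ has a positive value in its co-domain and $\uniform$ has full support.
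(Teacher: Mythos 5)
Your proof is correct and follows essentially the same route as the paper's source for this result: the paper states this theorem as background, citing~\cite{2009Quantitaiveleakage,2012Alvimgain}, where the argument is exactly your two-part one --- bounding $\widehat{V}_{g}[\pi,C]\leq V_{g}(\pi)\sum_{y\in\Y}\max_{x\in\X}C_{x,y}$ by pulling the column maximum out of the inner sum (valid precisely because $g\geq 0$), and then saturating the bound with the uniform prior and identity gain. Your additional care about requiring $V_{g}(\pi)>0$ for the ratio to be well defined is consistent with the paper's standing assumption that $V_g$ is non-negative with at least one positive value in the co-domain of $g$.
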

        
        Bayes capacity is known as maximal leakage in information theory~\cite{2020MaxL}, and both are equal to Sibson mutual information of order $\alpha = \infty$. Another important measure is the LDP leakage, the max-case capacity among all adversaries interested in maximum information leakage across all outputs~\cite{2022EXplainEps}.
        \begin{theorem} 
            $\mathcal{L}_{\forall}^{\max}(\forall,C)$ is given by LDP leakage $\Lk^{\text{LDP}}(C)$.
            \begin{align*}
                \mathcal{L}_{\forall}^{\max}(\forall,C)&=\sup_{\pi,g}\mathcal{L}_{g}^{\max}(\pi,C)=\sup_{\pi,g}\log\frac{V_{g}^{\max}[\pi,C]}{V_{g}(\pi)} \nonumber\\
                &=\log\max_{y\in\Y}\frac{\max_{x\in\X}C_{x,y}}{\min_{x\in\X}C_{x,y}}=\Lk^{\text{LDP}}(C).
            \end{align*}      
        \end{theorem}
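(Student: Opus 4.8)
The plan is to establish the stated identity by showing that $\log\max_{y\in\Y}\frac{\max_{x\in\X}C_{x,y}}{\min_{x\in\X}C_{x,y}}$ is simultaneously an upper bound (converse) for and a limiting lower bound (achievability) of $\sup_{\pi,g}\mathcal{L}_{g}^{\max}(\pi,C)$. As in Theorem~1, I would restrict attention to non-negative gain functions: this is essential, since a signed $g$ (admissible as long as $V_g\ge0$ on every distribution) can be engineered so that $V_g(\pi)=0$ while $\widehat{V}^{\max}_g[\pi,C]>0$, yielding spurious infinite leakage; for $g\ge0$ this cannot occur, because $\supp\delta^{y}\subseteq\supp\pi$ forces $\widehat{V}^{\max}_g[\pi,C]=0$ whenever $V_g(\pi)=0$, so the degenerate case carries no leakage and may be discarded.

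For the converse, I would fix any $\pi$ and any $g\ge0$ and treat a single output $y$ with $p(y)>0$. Writing the posterior as $\delta_{x}^{y}=\pi_x C_{x,y}/p(y)$, for every guess $w$ I bound
\begin{align*}
\sum_{x\in\X}\delta_{x}^{y}\,g(w,x)
=\frac{1}{p(y)}\sum_{x\in\X}\pi_x C_{x,y}\,g(w,x)
\le\frac{\max_{x}C_{x,y}}{p(y)}\sum_{x\in\X}\pi_x g(w,x)
\le\frac{\max_{x}C_{x,y}}{p(y)}\,V_g(\pi),
\end{align*}
where non-negativity of $g(w,\cdot)$ is what lets me pull out $\max_x C_{x,y}$. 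Taking the supremum over $w$ gives $V_g(\delta^{y})\le\frac{\max_x C_{x,y}}{p(y)}V_g(\pi)$, and since $p(y)=\sum_x\pi_x C_{x,y}\ge\min_x C_{x,y}$, the per-output ratio is at most $\frac{\max_x C_{x,y}}{\min_x C_{x,y}}$. Maximizing over $y$ (which yields $\widehat{V}^{\max}_g[\pi,C]/V_g(\pi)$) and then over $\pi,g$ gives $\mathcal{L}_{\forall}^{\max}(\forall,C)\le\log\max_{y}\frac{\max_x C_{x,y}}{\min_x C_{x,y}}$.

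For achievability, let $y^{\star}$ attain $\max_{y}\frac{\max_x C_{x,y}}{\min_x C_{x,y}}$, and let $x_{\max},x_{\min}$ be the maximizing and minimizing rows in column $y^{\star}$. I would take the single-guess gain $g(w_0,x)=\mathbb{1}[x=x_{\max}]$ together with the prior $\pi^{(\eps)}$ placing mass $\eps$ on $x_{\max}$ and $1-\eps$ on $x_{\min}$, so that $V_g(\pi^{(\eps)})=\eps$ and $p^{(\eps)}(y^{\star})=\eps C_{x_{\max},y^{\star}}+(1-\eps)C_{x_{\min},y^{\star}}$, giving
\begin{align*}
\mathcal{L}_{g}^{\max}(\pi^{(\eps)},C)\ge\log\frac{V_g(\delta^{y^{\star}})}{V_g(\pi^{(\eps)})}
=\log\frac{C_{x_{\max},y^{\star}}}{\eps C_{x_{\max},y^{\star}}+(1-\eps)C_{x_{\min},y^{\star}}}\xrightarrow[\eps\to0]{}\log\frac{\max_x C_{x,y^{\star}}}{\min_x C_{x,y^{\star}}}.
\end{align*}
Hence $\sup_{\pi,g}\mathcal{L}_{g}^{\max}(\pi,C)$ meets the converse bound, completing the proof.

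The main obstacle is precisely this achievability step: the prior that is optimal for the converse, a point mass on $x_{\min}$, forces $V_g(\pi)=0$ and collapses $\delta^{y^{\star}}$ onto $\pi$, so the extremal ratio is only approached through the vanishing perturbation $\eps\to0$ rather than attained by any single $\pi$. I would also dispatch the edge cases separately: if $\min_x C_{x,y^{\star}}=0$ the construction drives the ratio to $+\infty$, consistent with $\Lk^{\text{LDP}}(C)=\infty$; and if $x_{\max}=x_{\min}$ the column is constant and contributes zero leakage.
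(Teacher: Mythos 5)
Your proof is correct. Note first that the paper itself contains no proof of this statement: it is quoted as a background theorem from~\cite{2022EXplainEps}, so the only in-paper material to compare against is the generalized version (Theorem~\ref{thm:LDPf-1}, proved in Appendix~\ref{app:proofofLDPf-1}) and the limiting-prior technique borrowed in Appendix~\ref{app:proofoflocalreyni}. Your argument matches that machinery almost exactly. Your converse uses the same two steps as Appendix~\ref{app:proofofLDPf-1}: pulling $\max_{x\in\X}C_{x,y}$ out of the posterior sum (valid precisely because $g\geq 0$, as you emphasize) and then bounding $\frac{\max_{x}C_{x,y}}{p(y)}$ by $\frac{\max_{x}C_{x,y}}{\min_{x}C_{x,y}}$; the latter equality, in the form $\sup_{\pi}\frac{\max_{x}C_{x,y}}{p(y)}=\frac{\max_{x}C_{x,y}}{\min_{x}C_{x,y}}$, is exactly what the paper cites from~\cite[Thm.~3]{2022EXplainEps}. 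Your achievability step is the same limiting idea as the sequence $\pi^{n}$ in Appendix~\ref{app:proofoflocalreyni} (mass concentrating on one row, vanishing mass elsewhere), instantiated with a cleaner two-point prior ($\eps$ on $x_{\max}$, $1-\eps$ on $x_{\min}$) paired with the indicator gain on $x_{\max}$; your computation that the ratio equals $\frac{C_{x_{\max},y^{\star}}}{\eps C_{x_{\max},y^{\star}}+(1-\eps)C_{x_{\min},y^{\star}}}$ and tends to the LDP ratio is right, and stating the converse per prior (rather than after a supremum over $\pi$) avoids an exchange of suprema. Two points you make explicit that the paper leaves implicit are genuine improvements: the justification that the supremum must be restricted to non-negative gains, since a signed $g$ with $V_{g}\geq 0$ everywhere can force $V_{g}(\pi)=0$ while $\widehat{V}^{\max}_{g}[\pi,C]>0$ and hence blow up the multiplicative leakage for essentially any non-trivial channel; and the observation that for $g\geq 0$ the degenerate case $V_{g}(\pi)=0$ propagates to every posterior because $\supp\,\delta^{y}\subseteq\supp\,\pi$, so it can be discarded without affecting the supremum.
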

        
    \subsection{Definitions of \texorpdfstring{$\alpha$}{alpha}-based Information-theoretic Measures} 
        
        \begin{definition}
            For a given distribution $\pi:\mathbb{D}\X$, the R{\'e}nyi entropy of order $\alpha \in [0,\infty]$ is defined as:
            \begin{align*}
                H_{\alpha}(\pi)=\frac{1}{1-\alpha}\log\sum_{x \in \X}\pi_{x}^{\alpha}=\frac{\alpha}{1-\alpha}\log\|\pi\|_{\alpha}.
            \end{align*}
            
            Let $\mu:\mathbb{D}\X$ be another distribution over $\X$. R{\'e}nyi divergence of order $\alpha \in [0,\infty]$ between $\mu$ and $\pi$ is defined as:
            \begin{align*}
                D_{\alpha}(\mu\|\pi)=\frac{1}{\alpha-1}\log \left(\sum_{x\in\X} \frac{(\mu_{x})^{\alpha}}{(\pi_{x})^{\alpha-1}}\right).
            \end{align*}
        \end{definition}
        We have used the range $[0,\infty]$ for \(\alpha\) since both quantities above are defined by their continuous extensions for $\alpha=0,1,\infty$.
        R{\'e}nyi entropy and divergence of order $\alpha=1$ are Shannon entropy and Kullback-Leibler divergence, respectively. Moreover, for $\alpha=\infty$, R{\'e}nyi entropy is the min-entropy $H_{\infty}(\pi)=-\log V_{b}(\pi)$, where $V_{\text{b}}(\pi)=\max_{x\in \X}\pi_{x}$ is known as Bayes vulnerability~\cite{2009foundationsQIF}.

        \begin{definition}
            For a given hyper $\Delta=[\pi,C]$ with inner $\delta^{y}$ and outer $p(y)$ for each $y\in \Y$, Arimoto mutual information of order $\alpha \in [0,\infty]$ is defined as follows:
            \begin{align*}
                I_{\alpha}^{A}(X;Y) 
                &=H_{\alpha}(\pi) - H_{\alpha}(X|Y)\\
                &=\frac{\alpha}{\alpha-1}\log \frac{\sum_{y\in\Y}p(y)\Big(\sum_{x\in\X}(\delta_{x}^{y})^{\alpha}\Big)^{\frac{1}{\alpha}}}{\Big(\sum_{x\in\X}(\pi_{x})^{\alpha}\Big)^{\frac{1}{\alpha}}},
            \end{align*} 
            where $H_{\alpha}(X|Y)$ is Arimoto conditional entropy of $X$ given $Y$ and is defined as:
            \begin{align*}
                H_{\alpha}(X|Y)=\frac{\alpha}{1-\alpha}\log \sum_{y\in\Y}p(y) \Big( \sum_{x\in\X}(\delta_{x}^{y})^{\alpha}\Big)^{\frac{1}{\alpha}}.
            \end{align*}
            
            Another \(\alpha\)-based leakage measure is Sibson mutual information of order $\alpha \in [0,\infty]$ that  is given by:
            \begin{align*}
                I_{\alpha}^{S} (X;Y)=\frac{\alpha}{\alpha-1}\log\sum_{y\in\Y}\Big(\sum_{x\in\X}\pi_{x}\left(C_{x,y}\right)^{\alpha} \Big)^{\frac{1}{\alpha}}.
            \end{align*}
        \end{definition} 
        
        Both Sibson and Arimoto mutual information give Shannon mutual information for $\alpha=1$. 
        Another notable order is $\alpha=\infty$, where Sibson mutual information reduces to maximal leakage or Bayes capacity:
        \begin{align*}
            I_{\infty}^{S}(X;Y)=\mathcal{ML}(C)=\log\sum_{y\in\Y}\max_{x\in\X}C_{x,y};
        \end{align*}
        and Arimoto mutual information will become 
        \begin{align*}
            I_{\infty}^{A}(X;Y)=\log\frac{\sum_{y\in\Y}p(y)\max_{x\in\X}\delta_{x}^y}{\max_{x\in\X}\pi_{x}}.
        \end{align*} 
        
        It has been proved that Arimoto and Sibson mutual information has the same supremum over $\pi$~\cite[Thm. 5]{2015alphaMI}, 
        \begin{equation*}
            \sup_{\pi}I_{\alpha}^{A}(X;Y)=\sup_{\pi}I_{\alpha}^{S}(X;Y).  
        \end{equation*}
        Despite recovering Shannon mutual information when $\alpha = 1$, these measures are not symmetric in general: $I_{\alpha}^{S}(X;Y)\neq I_{\alpha}^{S}(Y;X)$ and $I_{\alpha}^{A}(X;Y)\neq I_{\alpha}^{A}(Y;X)$.

\section{A Critical Review of Previous Derivations of \texorpdfstring{$\alpha$}{alpha}-Based Leakage Measures}\label{sec:critical}

    In this section, we review the original definitions of maximal leakage and \(\alpha\)-based measures and discuss some inconsistencies in those definitions. 
    Subsequent to $g$-leakage in QIF, maximal leakage~\cite{2020MaxL} was proposed as an alternative to characterizing adversarial threats. 
    In this scenario, a randomized function of secret $X$, denoted by $U$ is to be guessed over the alphabet $\U$.  The leakage is defined as the supremum over all $U$ and $\hat{U}$ functions subject to the Markov chain $U-X-Y-\hat{U}$, where $\hat{U}$ is the outcome of the guess over the same alphabet $\U$. The maximal leakage is given by:
        \begin{align}
            &\mathcal{L}_{\max}(X\!\rightarrow\!Y)=\sup_{U-X-Y-\hat{U}}\log \frac{\Pr(U=\hat{U})}{\max_{u\in \U}p(u)} \nonumber \\
            &=\sup_{U-X-Y}\log \frac{\sum_{y\in\Y}\max_{u}p(u,y)}{\max_{u\in \U}p(u)} 
            =\mathcal{ML}(C). \label{eq:maxLeak}
        \end{align}
    
    The \(\alpha\)-leakage and maximal \(\alpha\)-leakage have been defined in~\cite{2019TunMsurInfLeak_PUT} in connection with  \(\alpha\)-loss for $\alpha \in [1,\infty]$. 
    Subsequently, its definition was modified in~\cite{2020alphaproperties} to extend the range of \(\alpha\) to $(0,1)\cup [1,\infty]$, which is given as follows:

    \begin{definition}[\(\alpha\)-loss~\cite{2019TunMsurInfLeak_PUT,2020alphaproperties}]
        For a probabilistic estimator $\hat{\pi}:\mathbb{D}\X$ and a parameter $\alpha>0$, the \(\alpha\)-loss is given by:
        \begin{align}\label{eq:alphaloss}
            \hspace{-7pt}\ell_{\alpha}(\hat{\pi}_{x})\triangleq
            \begin{cases}
                \frac{\alpha}{\alpha-1}\left(1-(\hat{\pi}_{x})^{\frac{\alpha-1}{\alpha}}\right), &\alpha\in(0,1)\cup(1,\infty), \\
                \log \frac{1}{\hat{\pi}_{x}}, & \alpha=1, \\
                1-\hat{\pi}_{x}, &\alpha=\infty.
            \end{cases}
        \end{align}
    \end{definition}
    \begin{remark}\label{remark:alphalossex}
        For a given distribution \(\pi:\mathbb{D}\X\), the minimum expected \(\alpha\)-loss is given by:
        \begin{equation*}
             \min_{\hat{\pi}:\mathbb{D}\X} \sum_{x\in\X}{\pi}_{x}\ell_{\alpha}(\hat{\pi}_{x})
            =\frac{\alpha}{\alpha-1} \bigg( 1-\exp{\Big(\frac{1-\alpha}{\alpha}H_{\alpha}(\pi)\Big)} \bigg), 
        \end{equation*}
        with the optimal answer $\hat{\pi}_{x}^{*}=\frac{(\pi_{x})^{\alpha}}{\sum_{x\in\X}(\pi_{x})^{\alpha}}$.
    \end{remark}   
        
    \begin{definition}[\(\alpha\)-leakage and maximal \(\alpha\)-leakage~\cite{2019TunMsurInfLeak_PUT}] \label{def:alphaleak} 
        For a given joint distribution $p(X,Y)$, given \(\alpha\)-loss in~\eqref{eq:alphaloss}, and $\alpha \in [0,\infty]$ the \(\alpha\)-leakage from $X$ to $Y$ is defined as:
        \begin{align}\label{eq:alphaleak}
            \mathcal{L}_{\alpha}(X\!\!\rightarrow\!\!Y) 
            &\triangleq \frac{\alpha}{\alpha-1}\log\frac{\displaystyle \frac{\alpha}{\alpha-1}\!-\!\min_{\hat{\delta^{y}:\mathbb{D}\X}}\sum_{y\in\Y}\!p(y)\!\sum_{x\in\X}\delta_{x}^{y}\ell_{\alpha}(\hat{\delta}_{x}^{y})}{\displaystyle \frac{\alpha}{\alpha-1}-\min_{\hat{\pi}:\mathbb{D}\X}\sum_{x\in\X}\pi_{x}\ell_{\alpha}(\hat{\pi}_{x})}\nonumber\\
            &=I_{\alpha}^{A}(X;Y).
        \end{align}
        In~\eqref{eq:alphaleak}, $\hat{\pi}$ and  $\hat{\delta}^{y}$ are the prior and posterior estimators.
    \end{definition}
    \begin{remark}[{Maximal \(\alpha\)-leakage}]
        For a randomized function $U$ of secret $X$, the maximal \(\alpha\)-leakage is defined as:
        \begin{align}
            &\mathcal{L}^{max}_{\alpha}(X\!\!\rightarrow\!\!Y)\triangleq \sup_{U-X-Y} \mathcal{L}_{\alpha}(U\!\!\rightarrow\!\!Y)  \nonumber\\
            &=\begin{cases}\label{eq:Xmaximalleak}
                \sup_{{\pi}}I_{\alpha}^{A}(X;Y)=\sup_{{\pi}}I_{\alpha}^{S}(X;Y), & \alpha\neq 1 \\
                I(X;Y), &\alpha=1 ,
            \end{cases}
        \end{align} 
        where $ \mathcal{L}_{\alpha}(U\!\!\rightarrow\!\!Y)$ is given by~\eqref{eq:alphaleak} where all distributions over $\X$ are replaced by the ones over $\U$. 
    \end{remark}
       
    \begin{definition}[Maximal ($\alpha,\beta$)-leakage~\cite{2024GilaniUnifyingMabel}] 
        Given a hyper distribution $\Delta=[\pi,C]$ and a randomized function $U$ with probabilistic prior and posterior estimators $\hat{p}_{U}, \hat{\delta}^{y}:\mathbb{D}\U$,  respectively, the maximal $(\alpha,\beta)$-leakage from $X$ to $Y$ is defined as:
        \begin{align}
            &\mathcal{L}_{\alpha,\beta}(X\!\!\rightarrow\!\!Y)\triangleq 
            \sup_{\pi}\sup_{U-X-Y}\frac{\alpha}{\alpha-1} \nonumber \\
            &\log \frac{\displaystyle \max_{\hat{\delta}^{y}:\mathbb{D}\U}\left[\sum_{y\in\Y}p(y)\left(\sum_{u\in\U}p(u|y)(\hat{\delta}^{y}_{u})^{\frac{\alpha-1}{\alpha}} \right)^{\beta}\right]^{\frac{1}{\beta}}}{\displaystyle \max_{\hat{p}_{U}:\mathbb{D}\U}\sum_{u\in\U}p(u)\hat{p}(u)^{\frac{\alpha-1}{\alpha}}} \label{eq:alphabetadef}\\
            &=\max_{x'}\sup_{\Tilde{\pi}}\frac{\alpha}{(\alpha-1)\beta}\log\sum_{y\in\Y}C_{x',y}^{1-\beta}\left( \sum_{x\in\X}\Tilde{\pi}_{x}C_{x,y}^{\alpha} \right)^{\frac{\beta}{\alpha}},\label{eq:alphabetaorigin}
        \end{align} 
        where $\Tilde{\pi}$ is a probability distribution over $\X$ given by $$\Tilde{\pi}_{x}=\frac{\sum_{u\in\U}p(u)^{\alpha}p(x|u)}{\sum_{u\in\U}p(u)^{\alpha}}.$$
        In special cases, this measure represents maximal \(\alpha\)-leakage, maximal leakage,  R{\'e}yni LDP, and LDP (see also Remark~\ref{remark:specialcasealphabeta}). 
    \end{definition}
    These definitions have been valuable in advancing our knowledge of privacy leakage measures. 
    However, they suffer from some inconsistencies that have been resolved in this paper. The main issues are outlined below.
    \begin{itemize}
        \item \textbf{Inconsistency with QIF framework:}
            The primary issue with the above definitions is that they are inconsistent with the QIF framework, even though they have been modeled somewhat similar to the QIF framework; using the ratio of the maximum posterior guessing gain to the prior guessing gain. Specifically, no gain or vulnerability function properly interprets these leakages within the $g$-leakage framework.
            This inconsistency leads to other issues. For example, the coefficient $\frac{\alpha}{\alpha-1}$ leading the logarithm in~\eqref{eq:alphaleak} and~\eqref{eq:alphabetadef},  lack a rigorous justification, despite being intuitively correct.

        \item \textbf{Issues with $\sup_{U-X-Y}$ in maximal measures:}
            The expression $\sup_{U-X-Y}$ in maximal measures is problematic. First, it suggests that guessing a randomized function of $X$ (such as $U$) is necessary to address worst-case adversaries. However, the resulting capacity measure does not introduce any additional risk of information leakage. Second, the model complexity has led to a different interpretations and derivations in the literature. 
            In~\cite{2020MaxL}, it is replaced by $\sup_{\pi}\sup_{p_{U|X}}$, while in~\eqref{eq:alphabetadef} there is an extra $\sup_{\pi}$ that results in $\max_{x'}$, as well as  $\sup_{\Tilde{\pi}}$ in~\eqref{eq:alphabetaorigin}. 
            This may also account for the inconsistency in~\eqref{eq:Xmaximalleak}, where for $\alpha\neq 1$, we have capacities, but for $\alpha=1$, it is only a leakage measure. Note that for $\alpha=1$ in~\eqref{eq:Xmaximalleak}, there is no supremum over $\pi$.

            \item \textbf{Exclusion of the $\alpha = 0$ case:}
             While all \(\alpha\)-measures in information theory are consistently defined over the entire range of $\alpha \in [0, \infty]$, the \(\alpha\)-loss and $\mathcal{L}_{\alpha}(X\!\!\rightarrow\!\!Y)$ were initially introduced for $\alpha \in [1,\infty]$ and later extended to $\alpha \in (0,\infty)$ in~\cite{2020alphaproperties} through an intuitive approach. However, the $\alpha = 0$ case remained excluded, as the definitions cannot be continuously extended to this value.

        \item\textbf{Unclear relationship between \(\alpha\)-loss and \(\alpha\)-leakage:}
        The relationship between \(\alpha\)-loss and \(\alpha\)-leakage remains unclear. Although an uncertainty measure can be defined based on expected \(\alpha\)-loss in Remark~\eqref{remark:alphalossex} with the prior uncertainty measure given by $U_{\ell_{\alpha}}(\pi)=\frac{\alpha}{\alpha-1}(1-\exp(\frac{1-\alpha}{\alpha}H_{\alpha}(\pi)))$, this measure cannot be employed to define a leakage measure.  Furthermore, for $\alpha\!<\!1$, it lacks concavity, violating the axioms of uncertainty measures (See Section~\ref{sec:axioms}).
        In~\cite{ding2024cross}, an \(f\)-mean approach was proposed to bridge these gaps by defining \(\alpha\)-loss and \(\alpha\)-leakage through a new formulation of cross-entropy. However, as highlighted in Remark~\ref{remark:Niloss}, this approach is not consistent with the original QIF framework.      
    \end{itemize}

\section{Generalized Vulnerability and Leakage}\label{Sec:Genralized}

    This section proposes generalized vulnerability, leakage, and capacity measures using the Kolmogorov–Nagumo mean~\cite{1952inequalities} (quasi-arithmetic mean or generalized \(f\)-mean). In the next three sections, we will demonstrate the usefulness of the proposed generalized measures in explaining seemingly incompatible \(\alpha\)-based information leakage measures developed outside of QIF in the new generalized framework. In Section~\ref{sec:axioms}, we will prove the axioms of vulnerability are satisfied for the generalized versions. 
    
    \begin{definition}[\textbf{Kolmogorov–Nagumo mean}]\label{def:Kolmogorov–Nagumo average}
        \emph{Given a set of real numbers $t=\{t_1, t_2, \ldots, t_n\}$ with corresponding weights $\omega_1, \omega_2, \ldots, \omega_n$, where $\omega_k > 0$ and $\sum_{k=1}^{n}\omega_k=1$, the general form of a mean value is expressed as:
        \begin{equation*}\label{eq:Kol-Nag-mean}
            \Bar{t}=f^{-1}\Big(\sum_{k=1}^{n}\omega_kf(t_k)\Big),
        \end{equation*}
        where \(f\) is a strictly monotonic and continuous function with the inverse function of $f^{-1}$.}
    \end{definition}

    \begin{definition}[\textbf{Generalized Prior Vulnerability}]\label{def:generalprior}
        For a given prior $\pi\!:\!\Dist\X$, a gain function $g: \W \times \X \rightarrow \Real$, and any strictly monotonic and continuous function \(f\) with a \textbf{convex} inverse $f^{-1}$, the generalized prior vulnerability is defined as:
        \begin{align}\label{eq:G(PX)}
            V_{f,g}(\pi):=\sup_{w\in\W}f^{-1}\Big(\sum_{x\in\X}\pi_{x} f\big(g(w,x)\big)\Big).
        \end{align} 
         If \(f\) is affine, i.e., $f(t)=at+b$, then, $ V_{f,g}(\pi)=V_{g}(\pi).$ Note that the convexity of $f^{-1}$ implies that \(f\) is either \textbf{convex} and \textbf{decreasing} or \textbf{concave} and \textbf{increasing}. This assumption is sufficient for the axioms of prior vulnerability for  $V_{f,g}(\pi)$ (See Sec.~\ref{sec:axioms}).
    \end{definition}
   
    \begin{definition}[\textbf{Generalized Average Posterior Vulnerability}]\label{def:poset-vulnerability}
        For a hyper $\Delta=[\pi,C]$ and each $y \in \Y$, the generalized vulnerability of each inner $\delta^{y}$ is given by:
        \begin{align}
            V_{f,g}(\delta^{y}) &= \sup_{w\in\W}f^{-1}\bigg(\sum_{x\in\X}\delta_{x}^{y} f\big(g(w,x)\big)\bigg). \label{eq:V(deltay)}    
        \end{align}
        According to~\eqref{eq:V(deltay)}, the generalized average posterior vulnerability is defined as:
        \begin{align}\label{eq:h(t)posterior}
            \widehat{V}_{h,f,g}[\pi, C]
            &:=h^{-1}\bigg(\sum_{y\in\Y}p(y)h\big(V_{f,g}(\delta^{y})\big) \bigg),
        \end{align} 
        where $h$ is a strictly monotonic and continuous function that could potentially be different from \(f\).  
        If $h\neq f$, then we assume it is \textbf{convex} and \textbf{increasing} or \textbf{concave} and \textbf{decreasing}. 
        These assumptions are sufficient for the \DPI~axiom for $\widehat{V}_{h,f,g}[\pi,C]$ (See Sec.~\ref{sec:axioms}). 
         If $h$ is affine, i.e., $h(t)=at+b$, then we have:
        \begin{align}
            \widehat{V}_{h,f,g}[\pi, C]&=\sum_{y\in\Y}p(y)V_{f,g}(\delta^{y})\label{eq:p(y)posetrior}.
        \end{align}
    \end{definition}
    
    \begin{remark}
        For the special case of $h=f$, we have:
        \begin{align}
          &\widehat{V}_{f,f,g}[\pi,C] \nonumber \\
          &=f^{-1}\left(\sum_{y\in\Y}p(y) f\bigg(\sup_{w\in\W} f^{-1} \Big(\sum_{x\in\X}\delta_{x}^{y}f\big(g(w,x)\big)\Big)\bigg)\right)\nonumber\\
            &=\begin{cases}\label{eq:h=f}
                \displaystyle f^{-1}\Big(\sum_{y\in\Y} \sup_{w\in\W} \sum_{x\in\X}C_{x,y}\pi_{x} f\big( g(w, x)\big)\!\Big), &  \!\!f^{-1}\text{increasing},\\
                \displaystyle f^{-1}\Big( \sum_{y\in\Y}  \inf_{w\in\W} \sum_{x\in\X} C_{x,y}\pi_{x} f\big( g(w, x)\big)\!\Big), & \!\!f^{-1}\text{decreasing}.
            \end{cases}
        \end{align}
        When $f^{-1}$ is increasing, we can move the $\sup_{w\in\W}$ inside the function. Then \(f\) and $f^{-1}$ cancel each other. 
        For decreasing $f^{-1}$, $\sup_{w\in\W}$ becomes $\inf_{w\in\W}$ when it is moved inside $f^{-1}$.
    \end{remark}
      
    \begin{definition}[\textbf{Generalized Max Posterior Vulnerability}]
        Generalized maximum posterior vulnerability is defined as:
        \begin{align}
            \widehat{V}^{\max}_{f,g}[\pi,C]=\max_{y\in\Y}V_{f,g}(\delta^{y}).
        \end{align}
    \end{definition}

    \begin{definition} 
        Similar to~\cite{2016alvimaxioms}, the generalized forms of leakage measures are defined as:
        \begin{equation}
            \begin{array}{rlr}
                &\text{Additive:~~\quad\quad} 
                \mathcal{L}_{h,f,g}^{+}(\pi,C)=\widehat{V}_{h,f,g}[\pi, C]-V_{f,g}(\pi),\nonumber \\
                &\text {Multiplicative:~} 
                \mathcal{L}_{h,f,g}^{\times}(\pi,C) =\log\left({\widehat{V}_{h,f,g}[\pi, C]}/{V_{f,g}(\pi)}\!\right).
            \end{array}
        \end{equation}
    \end{definition}
        
    If we replace $\widehat{V}_{h,f,g}[\pi, C]$ with $\widehat{V}^{\max}_{f,g}[\pi,C]$, we obtain the generalized max-case leakage.
    Accordingly, the generalized capacities are the supremum of generalized leakage over $\pi$, $g$, or both for fixed \(f\) and $h$. The generalized multiplicative capacities are denoted by
    \begin{align}\label{eq:GenCap}
         \mathcal{L}_{h,f,g}^{\times}(\forall,C),~~ \mathcal{L}_{h,f,\forall}^{\times}(\pi,C),~~  \mathcal{L}_{h,f,\forall}^{\times}(\forall,C).
    \end{align}

    We will study some of these quantities and their applications in  Sections~\ref{u:is:redundant} to~\ref{sec:capacity}.

    \subsection{Relation to Core-Concave Generalized Framework}

        Here, we discuss the relationship between our generalized definitions, and the framework proposed in~\cite{2019GeneEntropySymm,2020CondEntropyAxiom,2021ConvCorConV}. 

        \begin{definition}[{Def.1~\cite{2020CondEntropyAxiom}}]
            A \textit{core-concave entropy} $H\!=\!(\eta,F)$ is a pair such that:
            \begin{enumerate}
                \item \(f\) is a real-valued function over an $n$-dimensional simplex $\Lambda_{n}$ that is continuous and concave;
                \item $\eta$ is a continuous and strictly increasing real-valued function defined over the image of \(f\).
            \end{enumerate}
        \end{definition}
        According to this definition, a general form of entropy is given by $H(\pi)=\eta(F(\pi))$. 
        This definition captures most of the entropy measures in the literature. 
        While different choices of $(\eta, F)$ can result in the same entropy measure, the conditional form of entropy determines $(\eta,F)$ uniquely up to a linear transformation~\cite[Thm. 1]{2020CondEntropyAxiom}. 
            
        \begin{definition}[{Def. 2~\cite{2020CondEntropyAxiom}}]
            Given a core-concave entropy $H =(\eta,F)$, its “conditional” form is defined as:
            \begin{align}\label{eq:condEtaF}
                H(X|Y)=\eta\bigg( \sum_{y\in \Y^{+}}p(y)F(X|y)\bigg),
            \end{align}
            where $\Y^{+}$ is the support of $Y$ and $F(X|y)$ is shorthand for  $F(p_{X|y})$.
            In terms of the (unconditional) entropy,~\eqref{eq:condEtaF} is equivalent to:
            \begin{align}
                 H(X|Y)=\eta\bigg(\sum_{y\in \Y^{+}}p(y)\eta^{-1}\Big(H(X|y)\Big)\bigg),
            \end{align}
            which has a Kolmogorov–Nagumo form.
        \end{definition}   
        A generalized core-convex vulnerability can similarly be defined by a pair $(\eta, F)$, where \(f\) is convex. Consequently, we can map our generalized vulnerability definitions to the core-convex framework as follows:
        \begin{align*}
            &\eta=h^{-1}, &&\eta^{-1}=h,\\
            &F(X|y)=h\big(V_{f,g}(\delta^{y})\big),  &&h^{-1}\big(F(X|y)\big)=V_{f,g}(\delta^{y}).
        \end{align*}
            
        While our framework employs a generalized mean to define $V_{f,g}(\pi)$, the core-convex framework bypasses these intermediate steps. 
        For instance, for R\'enyi entropy, $F(\pi) = -||\pi||_{\alpha}$ represents the final result derived 
        in our Theorem~\ref{theo:alpha-vunle}, which incorporates a gain function, an \(f\)-mean function, and an optimization process.
        Moreover, while the core-convex framework relaxes the convexity axiom to core-convexity, requiring only that $h$ be increasing, 
        we adhere to convexity and impose additional conditions on $h$. This will be explained further in Section~\ref{sec:axioms}.

\section{Bringing \texorpdfstring{$\alpha$}{alpha}-based Leakage Measures into QIF}\label{u:is:redundant}

    In this section, we interpret \(\alpha\)-vulnerability (this resolves the issues with the R{\'e}nyi entropy), \(\alpha\)-leakage, maximal \(\alpha\)-leakage, and maximal $(\alpha,\beta)$-leakage within the generalized framework. 
    The two latter have been proposed in the maximal leakage framework.  
    In Section~\ref{u:is:redundantA}, we propose generalized maximal leakage and show that it is equivalent to the generalized capacity $\mathcal{L}_{h,f,g}^{\times}(\forall,C)$ given in~\eqref{eq:GenCap}. 
    Then, in Section~\ref{sec:interpret}, we interpret the above mentioned \(\alpha\)-measures in a consistent and simpler format using this result and the generalized framework in Section~\ref{Sec:Genralized}.
    
    \subsection{On the Maximal Leakage}\label{u:is:redundantA}

        Upon a closer look, we realize that maximal leakage in~\ref{eq:maxLeak} has been defined for a special gain $g_{id}:\W \times \U \rightarrow \Real$ as:
        \begin{align*}
            g_{\text{id}}(w,u)=\begin{cases}
                1, & w=u,\\
                0, & w\neq u,
            \end{cases}
        \end{align*}
        that leads to the following vulnerabilities:
        \begin{align*}
            V_{g_{\text{id}}}(p_{U})&=\max_{u}p(u),\\
            \widehat{V}_{g_{\text{id}}}[p_{U},C]&=\sum_{y\in\Y}\max_{u}p(u,y).
        \end{align*}
        Accordingly, maximal leakage is given as:
        \begin{align*}
          \sup_{U-X-Y}\Lk^{\times}_{g_{\text{id}}}(p_{U},C)=\sup_{U-X-Y}\log\frac{\widehat{V}_{g_{\text{id}}}[p_{U},C]}{V_{g_{\text{id}}}(p_{U})}.
        \end{align*} 
       Therefore, it is natural to extend maximal leakage by incorporating $f, h$ functions, as well as a general gain function $g$, into it as follows.  We call this \textit{\textbf{generalized maximal leakage}}:
        \begin{align}
            \sup_{U-X-Y}\Lk^{\times}_{h,f,g}\left(p_{U},C\right)= \sup_{U-X-Y} \log \frac{\widehat{V}_{h,f,g}[p_{U},C]}{V_{f,g}(p_{U})}.
        \end{align}
        The generalized maximal leakage includes maximal leakage as a special case when \(f\) and $h$ are affine and $g= 
        g_{id}$.
        The main question is then which parts of the above formulation are essential to obtaining this generalized maximal leakage and which are superfluous and can, hence, be dropped without affecting the generality of results.
        In the following, we prove that the generalized maximal leakage for any gain function $g:\W \times \U \rightarrow \Real$ and for given $f,h$ functions is equivalent to the generalized $g$-leakage capacity for the same gain function over the alphabet $\W \times \X$ when we take the supremum over all priors in $ \mathbb{D}\mathcal{\X}$. That is, the introduction of $U$ is superfluous. Similar findings have been reported in~\cite{2023PML,2022EXplainEps}.
        \begin{theorem}\label{thm:maximal=capcity}
            For fixed  \(f\) and $h$, the generalized maximal leakage for a given gain function $g: \W \times \U \rightarrow \Real$  is equivalent to the generalized multiplicative leakage capacity of the same gain function $g: \W \times \X \rightarrow \Real$. That is:
            \begin{align*}
                \sup_{U-X-Y}\Lk^{\times}_{h,f,g}(p_{U},C)=\sup_{\pi}\mathcal{L}_{h,f,g}^{\times}(\pi,C)=\mathcal{L}_{h,f,g}^{\times}(\forall,C).
            \end{align*}
            The proof is given in  Appendix~\ref{appen:maximal=capcity}.
        \end{theorem}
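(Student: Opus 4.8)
The plan is to prove the equality by two inequalities, after first moving every quantity attached to $U$ back onto $\X$. Fix a chain $U-X-Y$ and let $R=p_{U|X}$ be its (row-stochastic) forward matrix. Because $U-X-Y$ is Markov, the adversary's view of $U$ is a deterministic image of its view of $X$: the prior is $p_U=R^{\top}\pi$, each inner is $\delta^{y}_{U}=R^{\top}\delta^{y}$, and the outer $p(y)=\sum_{x}\pi_{x}C_{x,y}$ is untouched (the induced channel from $U$ to $Y$ is $C'_{u,y}=\sum_{x}p(x|u)C_{x,y}$, but only $p(y)$ and the inners enter the leakage ratio). I would establish these three identities first, since they convert the Markov constraint into pure row-mixing of $\pi$ and of the $\delta^{y}$ by $R^{\top}$.

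The core of the argument is a reduction that absorbs $U$ into the gain. I define the effective gain $\tilde g:\W\times\X\to\Real$ by $f(\tilde g(w,x))=\sum_{u}R_{x,u}\,f(g(w,u))$. Exchanging the two finite sums inside $f^{-1}$ then gives, for every $\mu:\mathbb{D}\X$, the identity $V_{f,g}(R^{\top}\mu)=\sup_{w}f^{-1}\!\big(\sum_{x}\mu_{x}f(\tilde g(w,x))\big)=V_{f,\tilde g}(\mu)$, with the same optimizing $w$ on both sides and no appeal to the monotonicity direction of $f$. Applying this with $\mu=\pi$ and with each $\mu=\delta^{y}$, and using that $p(y)$ is shared, yields $V_{f,g}(p_U)=V_{f,\tilde g}(\pi)$ and $\widehat V_{h,f,g}[p_U,C']=\widehat V_{h,f,\tilde g}[\pi,C]$, hence $\Lk^{\times}_{h,f,g}(p_U,C')=\mathcal{L}^{\times}_{h,f,\tilde g}(\pi,C)$. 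In words, guessing the randomized proxy $U$ under $g$ is identical to guessing $X$ under the row-mixed gain $\tilde g$, so $U$ is never more than a relabeling of the gain.

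With the reduction in hand, the lower bound $(\ge)$ is immediate: taking $U=X$ makes $R$ the identity and $\tilde g=g$, so each $\mathcal{L}^{\times}_{h,f,g}(\pi,C)$ is realized, and sweeping the prior gives $\mathcal{L}^{\times}_{h,f,g}(\forall,C)$ as a lower bound (if the input prior is instead held fixed, one reaches the capacity through a saturating choice of $U$, exactly as maximal leakage attains $\mathcal{ML}(C)$ for any full-support prior). The hard part will be the upper bound $(\le)$: I must show that the extra freedom in $R$ cannot raise the capacity, i.e. $\sup_{R,\pi}\mathcal{L}^{\times}_{h,f,\tilde g}(\pi,C)\le\sup_{\pi}\mathcal{L}^{\times}_{h,f,g}(\pi,C)$. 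My plan is to recognize the row-mixing as an input pre-processing $C'=AC$ with $A=p_{X|U}$, and to argue that such pre-processing cannot increase the generalized multiplicative capacity---a data-processing statement for $\widehat V_{h,f,g}$ that rests on the convexity of the prior vulnerability (axiom \CVX) and the monotonicity/convexity conditions on $f^{-1}$ and $h$ that underlie the \DPI~axiom proved in Section~\ref{sec:axioms}. The genuine difficulty, and the step I expect to require the most care, is that $A$ depends on $\pi$ through Bayes' rule, so the pre-processing and the prior are coupled; disentangling this coupling (for instance by bounding the mixed-gain ratio at $\pi$ by the original-gain ratio at the induced prior $R^{\top}\pi$) is where the argument must be made rigorous, and it is presumably what the appendix carries out in full.
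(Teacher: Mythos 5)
Your reduction is sound, and it is in fact the same computation that drives the paper's own proof: the quantity $f\big(\tilde g(w,x)\big)=\sum_{u}R_{x,u}f\big(g(w,u)\big)$ is exactly the inner sum $\sum_{u}p(u|x)f\big(g(w,u)\big)$ that Lemma~\ref{lem:infp(u|x)} manipulates, and your lower bound via the identity coupling $U=X$ is precisely the paper's use of the coupling $q_{U|X}$ in~\eqref{eq:qu|x}. The problem is that you stop exactly where the theorem begins to need work: the upper bound, i.e.\ $\sup_{R,\pi}\mathcal{L}^{\times}_{h,f,\tilde g}(\pi,C)\le\sup_{\pi}\mathcal{L}^{\times}_{h,f,g}(\pi,C)$, is only conjectured, and you yourself defer it to ``what the appendix presumably carries out.'' That is a genuine gap, not a stylistic one; without it the theorem is half proved.

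Moreover, the route you sketch for that half is unlikely to close the gap. First, the \DPI~axiom available in this paper (Proposition~\ref{prop:AVG+CVX=DPI}) is a \emph{post}-processing statement, $\widehat{V}_{h,f,g}[\pi,C]\ge\widehat{V}_{h,f,g}[\pi,CR]$ with $R$ acting on the outputs in $\Y$; what you need is a \emph{pre}-processing statement about $C\mapsto AC$ with $A$ acting on the inputs, which simultaneously changes the input alphabet from $\X$ to $\U$, the prior from $\pi$ to $R^{\top}\pi$, and the gain's domain --- and, as you note, couples $A=p_{X|U}$ to $\pi$ through Bayes' rule. No such axiom is proved anywhere, and your fallback of ``bounding the mixed-gain ratio at $\pi$ by the original-gain ratio at $R^{\top}\pi$'' does not even typecheck, since the channel attached to the prior $R^{\top}\pi$ is $AC$, not $C$. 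Worse, the assertion ``pre-processing cannot increase the generalized capacity'' is essentially a restatement of the theorem itself (the supremum over reverse channels $A=p_{X|U}$ cannot beat the identity), so invoking it is circular unless proved independently. The paper's appendix avoids all of this by never comparing two channels: it decouples the ratio as $\sup_{R}N(R)/D(R)\le\sup_{R}N(R)\,/\,\inf_{R}D(R)$, applies the minimax inequality $\inf_{R}\sup_{w}\ge\sup_{w}\inf_{R}$ to the denominator so that the coupling is optimized for each fixed guess $w$, and then uses Lemma~\ref{lem:infp(u|x)} (both its $\inf$ and $\sup$ versions, cf.~\eqref{eq:supux}) to collapse the row-by-row mixing of $f\big(g(w,u)\big)$ back to $\sum_{x}\pi_{x}f\big(g(w,x)\big)$. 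In your language, this amounts to proving the two pointwise bounds $\inf_{R}V_{f,\tilde g}(\pi)\ge V_{f,g}(\pi)$ and $\sup_{R}\widehat{V}_{h,f,\tilde g}[\pi,C]\le\widehat{V}_{h,f,g}[\pi,C]$ separately for each fixed $\pi$, with no data-processing inequality at all; supplying exactly these two bounds is what your proposal still owes.
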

       
        \subsection{Interpretation of \texorpdfstring{$\alpha$}{alpha}-based Leakage Measures}\label{sec:interpret}
      
            In~\cite{2014AdditiveMultpliQIF}, a special gain function was proposed where $\W$ is the set of all probability distributions $w$ on $\X$, $w:\mathbb{D}\X$, and $ g(w,x)=\log w_x,$ where $w_x \in [0,1]$ and $\sum_{x\in\X}w_x=1$. 
            We use the exponential form of this gain function:
            \begin{align}\label{eq:g(w,x)=wx}
                g(w,x)=w_x.
            \end{align}
            For $\alpha \in [0,\infty]$, the \(f\)-mean function $f_{\alpha}:\!\Real^{+}\!\!\!\rightarrow\!\Real$ and its inverse are: 
            \begin{align}\label{eq:alpha-f}
                f_{\alpha}(t)=t^{\frac{\alpha-1}{\alpha}},~ f_{\alpha}^{-1}(s)=s^{\frac{\alpha}{\alpha-1}}.
            \end{align}
         
            \begin{theorem}\label{theo:alpha-vunle}
                For $g(w,x)$ and $f_{\alpha}$ in~\eqref{eq:g(w,x)=wx} and~\eqref{eq:alpha-f} and $h=f_{\alpha}$:
                \begin{align}
                    V_{f_{\alpha},g}(\pi)&=\exp{\Big(-H_{\alpha}(\pi)\Big)},~~ &\alpha \in[0,\infty], \label{eq:alphavulprior}\\ 
                    \widehat{V}_{f_{\alpha},f_{\alpha},g}[\pi,C]&=\exp\Big(-H_{\alpha}(X|Y)\Big), ~~ &\alpha \in[0,\infty] \label{eq:alphavulposter}.
               \end{align}
            \end{theorem}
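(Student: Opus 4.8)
The plan is to treat the prior identity \eqref{eq:alphavulprior} as a constrained optimization over the probability simplex, and then obtain the posterior identity \eqref{eq:alphavulposter} as a direct corollary by substituting the prior result into the $h=f$ averaging formula.

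First I would fix $\alpha\in(0,1)\cup(1,\infty)$ and write $V_{f_\alpha,g}(\pi)=\sup_{w:\Dist\X} f_\alpha^{-1}\big(S(w)\big)$ with $S(w)=\sum_{x}\pi_{x}w_{x}^{(\alpha-1)/\alpha}$. The core step is to extremize $S$ over the simplex $\{w\geq 0,\ \sum_x w_x=1\}$ by Lagrange multipliers; the stationarity condition $\frac{\alpha-1}{\alpha}\pi_{x}w_{x}^{-1/\alpha}=\lambda$ gives $w_x^{*}\propto \pi_x^{\alpha}$, i.e. the normalized tilt $w_x^{*}=\pi_x^{\alpha}/\sum_{x'}\pi_{x'}^{\alpha}$ (matching the optimal estimator in Remark~\ref{remark:alphalossex}, restricted to $\suport{\pi}$ when $\pi$ has zeros). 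Substituting back and using the two algebraic identities $\alpha\cdot\frac{\alpha-1}{\alpha}=\alpha-1$ and $1-\frac{\alpha-1}{\alpha}=\frac1\alpha$ collapses $S(w^{*})$ to $\big(\sum_x \pi_x^{\alpha}\big)^{1/\alpha}$, and applying $f_\alpha^{-1}(s)=s^{\alpha/(\alpha-1)}$ yields $\big(\sum_x\pi_x^{\alpha}\big)^{1/(\alpha-1)}=\exp(-H_{\alpha}(\pi))$, as required.

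The delicate point --- and the step I expect to be the main obstacle --- is confirming that this single stationary point really is the extremum selected by the outer supremum in each regime of $\alpha$. For $\alpha>1$ the exponent $(\alpha-1)/\alpha\in(0,1)$ makes $S$ concave while $f_\alpha^{-1}$ is increasing, so $\sup_w f_\alpha^{-1}(S(w))$ is attained at the \emph{maximizer} $w^{*}$ of $S$; for $\alpha\in(0,1)$ the exponent is negative, making $S$ convex while $f_\alpha^{-1}$ is decreasing, so the supremum of $f_\alpha^{-1}(S(w))$ is attained at the \emph{minimizer} $w^{*}$ of $S$. In both cases the same $w^{*}$ is the correct extremizer, but establishing this requires matching the monotonicity direction of $f_\alpha^{-1}$ against the curvature of $S$, together with a boundary check that assigning zero weight off $\suport{\pi}$ is optimal. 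The endpoint values $\alpha\in\{0,1,\infty\}$ are then handled by continuity: at $\alpha=\infty$ the function $f_\infty(t)=t$ is affine and $V_{f_\infty,g}(\pi)=\sup_w\sum_x\pi_x w_x=\max_x\pi_x=\exp(-H_{\infty}(\pi))$ directly, while $\alpha\in\{0,1\}$ follow from the continuous extension of both sides in $\alpha$.

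Finally, for \eqref{eq:alphavulposter} I would apply the prior result to each inner distribution $\delta^{y}$, giving $V_{f_\alpha,g}(\delta^{y})=\exp(-H_{\alpha}(\delta^{y}))=\big(\sum_x(\delta_{x}^{y})^{\alpha}\big)^{1/(\alpha-1)}$, and substitute into the definition $\widehat{V}_{f_\alpha,f_\alpha,g}[\pi,C]=f_\alpha^{-1}\big(\sum_y p(y)\,f_\alpha(V_{f_\alpha,g}(\delta^{y}))\big)$. Here $f_\alpha$ sends each $\exp(-H_{\alpha}(\delta^{y}))$ to $\big(\sum_x(\delta_{x}^{y})^{\alpha}\big)^{1/\alpha}$, so after forming the $p(y)$-weighted sum and applying the outer $f_\alpha^{-1}$ the expression becomes $\big(\sum_y p(y)(\sum_x(\delta_{x}^{y})^{\alpha})^{1/\alpha}\big)^{\alpha/(\alpha-1)}$, which is exactly $\exp(-H_{\alpha}(X|Y))$ by the definition of Arimoto conditional entropy. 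This part is routine once the prior identity is established, so no further obstacle arises.
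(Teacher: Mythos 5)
Your proposal is correct and follows essentially the same route as the paper's proof: the same case split on the monotonicity of $f_\alpha^{-1}$ (sup for $\alpha\geq 1$, inf for $\alpha\in[0,1)$), the same optimizer $w_x^{*}=\pi_x^{\alpha}/\sum_{x'}\pi_{x'}^{\alpha}$, and the same substitution of the prior result into the $h=f_\alpha$ averaging formula for the posterior. The only differences are cosmetic: you derive $w^{*}$ explicitly via Lagrange multipliers where the paper cites~\cite{2019TunMsurInfLeak_PUT}, and you make the continuity handling of $\alpha\in\{0,1,\infty\}$ explicit where the paper leaves it implicit.
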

        
            \begin{proof}
                For $\alpha \in [0,\infty]$, $f_{\alpha}^{-1}$ is convex. Hence, it is valid to be used in the generalized prior vulnerability (\ref{eq:G(PX)}). Thus:
                \begin{align}
                    &V_{f_{\alpha},g}(\pi)=\sup_{w\in\W}f_{\alpha}^{-1}\bigg(\sum_{x\in\X}\pi_{x}f_{\alpha}\left(g(w,x)\right) \bigg) \nonumber\\ 
                    &=\begin{cases}\label{eq:ReyniEnt-inf-sup}
                    \displaystyle \bigg( \inf_{w\in\W} \sum_{x\in\X} \pi_{x} (w_{x})^{\frac{\alpha-1}{\alpha}}\bigg)^{\frac{\alpha}{\alpha-1} } & \alpha \in [0,1),\\ 
                    \displaystyle \bigg( \sup_{w\in\W} \sum_{x\in\X}  \pi_{x}\left(w_{x}\right)^{\frac{\alpha-1}{\alpha}} \bigg)^{\frac{\alpha}{\alpha-1} } & \alpha \in [1,\infty].
                    \end{cases}
                \end{align}
                For $\!\alpha\!\in\![0,1)$, $f_{\alpha}^{-1}$ is decreasing and $\sup_{w\in\W}$ becomes $\inf_{w\in\W}$ when moved inside $f_{\alpha}^{-1}$. With $f_{\alpha}$ being convex in this range, the optimization is also convex with the solution~\cite{2019TunMsurInfLeak_PUT}:
                \begin{align}\label{eq:w* Vfg}
                    w_{x}^{*}=\frac{\pi_{x}^{\alpha}}{\sum_{x\in\X}\pi_{x}^{\alpha}}.
                \end{align}
                For $\alpha \in [1,\infty]$, $f_{\alpha}^{-1}$ is increasing and $f_{\alpha}$ is concave. Thus, the optimization is still convex with the same solution in~\eqref{eq:w* Vfg}.
                Applying the optimal answer $w_{x}^{*}$ in~\eqref{eq:ReyniEnt-inf-sup}, we have:
                \begin{align*}
                    V_{f_{\alpha},g}(\pi)=\bigg(\sum_{x\in\X}\pi_{x}^{\alpha}\bigg)^{\frac{1}{\alpha-1}}=\exp\Big(-H_{\alpha}(\pi)\Big).
                \end{align*} 
                
                For $\widehat{V}_{h,f_{\alpha},g}[\pi,C]$, let $h=f_{\alpha}$ in~\eqref{eq:h(t)posterior} to obtain
                \begin{align*}
                    &\widehat{V}_{f_{\alpha},f_{\alpha},g}[\pi,C] 
                    =f_{\alpha}^{-1}\bigg( \sum_{y\in\Y}p(y)f_{\alpha}\big(V_{f_{\alpha},g}(\delta_{y})\big) \bigg)\nonumber \\
                    &=\bigg( \sum_{y\in\Y}p(y)\Big(\sum_{x\in\X}(\delta_{x}^{y})^{\alpha}\Big)^{\frac{1}{\alpha}}\bigg)^{\frac{\alpha}{\alpha-1}}
                    =\exp\big(-H_{\alpha}(X|Y)\big). \hspace{0.2cm}\qedhere
                \end{align*}  
            \end{proof}       
            \begin{proposition}\label{prop:Vfgconvex}
                $V_{f_{\alpha},g}(\pi)$  in~\eqref{eq:alphavulprior} is convex for $\alpha \in [0,\infty]$. 
            \end{proposition}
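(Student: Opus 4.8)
The plan is to avoid the closed form $\exp(-H_\alpha(\pi))$ altogether and argue directly from the defining supremum in~\eqref{eq:G(PX)}. This yields a single proof valid for every $\alpha\in[0,\infty]$ with no case split on the range of $\alpha$, and in fact no reliance on the explicit evaluation carried out in Theorem~\ref{theo:alpha-vunle}.

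First I would fix an arbitrary guess $w\in\W$ and observe that the inner argument $\pi\mapsto\sum_{x\in\X}\pi_x f_\alpha\big(g(w,x)\big)$ is affine (indeed linear) in $\pi$, since $f_\alpha\big(g(w,x)\big)$ is a constant once $w$ is fixed. Next I would recall from the proof of Theorem~\ref{theo:alpha-vunle} that $f_\alpha^{-1}$ is convex for all $\alpha\in[0,\infty]$, which is exactly the admissibility condition imposed in Definition~\ref{def:generalprior}. Because the composition of a convex function with an affine map is convex, for each fixed $w$ the map $\pi\mapsto f_\alpha^{-1}\big(\sum_{x\in\X}\pi_x f_\alpha(g(w,x))\big)$ is convex on $\Dist\X$. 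The key point is that this holds irrespective of whether $f_\alpha^{-1}$ is increasing (for $\alpha\ge 1$) or decreasing (for $\alpha\in[0,1)$): convexity of a composition with an affine inner map does not depend on monotonicity. Finally I would invoke the standard fact that the pointwise supremum of any family of convex functions is convex. Since $V_{f_\alpha,g}(\pi)$ is, by~\eqref{eq:G(PX)}, precisely the pointwise supremum over $w\in\W$ of the convex functions just identified, it is convex, which closes the argument.

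The main obstacle is not the convexity step itself but ensuring the treatment is genuinely uniform across the boundary orders $\alpha\in\{0,1,\infty\}$ and the decreasing-$f_\alpha^{-1}$ regime $\alpha\in[0,1)$. An alternative route through the closed form $\big(\sum_{x\in\X}\pi_x^\alpha\big)^{1/(\alpha-1)}=\|\pi\|_\alpha^{\alpha/(\alpha-1)}$ would force an unpleasant case analysis: a convex nondecreasing outer power composed with the convex norm $\|\cdot\|_\alpha$ for $\alpha>1$, versus a convex nonincreasing outer power composed with the concave map $\sum_{x\in\X}\pi_x^\alpha$ for $\alpha\in(0,1)$, together with separate continuity arguments at $\alpha=1$ and the degenerate support-counting behaviour at $\alpha=0$. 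The supremum-of-convex-functions argument sidesteps all of these difficulties, so I would adopt it as the primary proof and mention the closed-form composition route only as a consistency check.
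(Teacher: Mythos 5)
Your proof is correct, but it is not the route the paper takes for this proposition. The paper argues from the closed form: it writes $V_{f_{\alpha},g}(\pi)=f_{\alpha}^{-1}\big(\|\pi\|_{\alpha}\big)$ and then splits cases --- for $\alpha\in[1,\infty]$ a convex nondecreasing outer function composed with the convex norm, and for $\alpha\in[0,1)$ a convex nonincreasing outer function composed with the concave map $\pi\mapsto\|\pi\|_{\alpha}$ --- exactly the case analysis you set out to avoid. Your argument (affine inner map in $\pi$, convex $f_{\alpha}^{-1}$, composition with an affine map preserves convexity irrespective of monotonicity, pointwise supremum of convex functions is convex) is sound, and it is in fact precisely the argument the paper itself deploys later in Theorem~\ref{thm:Vfg axioms} to show that \emph{any} $V_{f,g}$ with convex $f^{-1}$ satisfies \CVX: there the convexity of $f\!\circ\!g_{w}$ is obtained via Jensen's inequality for $f^{-1}$, which is the same fact as your composition step. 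So your proof amounts to specializing that general axiom theorem to $f=f_{\alpha}$, whereas the paper's proof of Proposition~\ref{prop:Vfgconvex} is an independent verification through the explicit expression. What your route buys: no case split on $\alpha$, and no reliance on the evaluation in Theorem~\ref{theo:alpha-vunle} beyond the identification of the two expressions. What the paper's route buys: it certifies convexity of the closed form $\exp(-H_{\alpha}(\pi))$ on its own terms, independently of the supremum representation, which is a useful consistency check on Theorem~\ref{theo:alpha-vunle}.

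One caveat: your claim of genuine uniformity at the boundary orders is overstated. At $\alpha=1$ the function $f_{\alpha}(t)=t^{(\alpha-1)/\alpha}$ degenerates to a constant, and at $\alpha=0$ the exponent diverges, so the defining supremum in~\eqref{eq:G(PX)} does not literally apply there either; the quantity in~\eqref{eq:alphavulprior} at those orders is a continuous extension, and convexity there is inherited as a pointwise limit of convex functions. This is the same gloss the paper's proof makes, so it is not a gap relative to the paper, but the criticism you level at the closed-form route (separate continuity arguments at $\alpha=1$ and support-counting at $\alpha=0$) applies equally to your own.
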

            \begin{proof}
                We can write $V_{f_{\alpha},g}(\pi)$ as:
                \begin{align*}
                    V_{f_{\alpha},g}(\pi)=\Big(||\pi||_{\alpha}\Big)^{\frac{\alpha}{\alpha-1}}=f^{-1}_{\alpha}\Big(||\pi||_{\alpha}\Big).
                \end{align*}
                The function $f_{\alpha}^{-1}(s)=s^{\frac{\alpha}{\alpha-1}}$ is convex in the whole range of \(\alpha\) and norm $||\pi||_{\alpha}$ is convex and non-decreasing for $\alpha \in [1,\infty]$ and concave and non-increasing for $\alpha \in [0,1)$. Thus, their composition is convex for $\alpha \in [0,\infty]$. This proposition indicates that $ V_{f_{\alpha},g}$ satisfies axioms of prior vulnerability.
            \end{proof}
     
            \begin{remark}\label{remark:Niloss}
                In~\cite{ding2024cross} an expected loss function $\exp(H_{\alpha}(\pi))$ has been introduced in a similar vein as Thm.~\ref{theo:alpha-vunle}. As an expected loss, it can be a candidate for uncertainty measures. However, it is not concave for $0.5 \leq \alpha$, which contradicts the axiom of prior uncertainty. 
            \end{remark}

    \begin{proposition}[\textbf{$\boldsymbol{\alpha}$-leakage}]\label{prop:Arimoto leakage}
        Using the multiplicative definition of leakage for $V_{f_{\alpha},g}(\pi)$ and  $\widehat{V}_{f_{\alpha},f_{\alpha},g}[\pi,C]$ in~\eqref{eq:alphavulprior} and~\eqref{eq:alphavulposter} we obtain \(\alpha\)-leakage as
        \begin{align}
            \Lk^{\times}_{f_{\alpha},f_{\alpha},g}(\pi,C)&=\log\frac{\widehat{V}_{f_{\alpha},f_{\alpha},g}[\pi,C]}{V_{f_{\alpha},g}(\pi)} =\log\frac{\exp\left(-H_{\alpha}(X|Y)\right)}{\exp\left(-H_{\alpha}(\pi)\right)} \nonumber\\
                       &=H_{\alpha}(X)-H_{\alpha}(X|Y)=I_{\alpha}^{A}(X,Y).\label{eq:IAalpha}
        \end{align}
    \end{proposition}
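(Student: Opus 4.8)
The plan is to recognize that this proposition is an immediate corollary of Theorem~\ref{theo:alpha-vunle}: once the prior and posterior vulnerabilities are known in closed form, the claim reduces to a one-line manipulation of the logarithm. First I would invoke Theorem~\ref{theo:alpha-vunle} to substitute $V_{f_{\alpha},g}(\pi)=\exp(-H_{\alpha}(\pi))$ and $\widehat{V}_{f_{\alpha},f_{\alpha},g}[\pi,C]=\exp(-H_{\alpha}(X|Y))$, both valid over the full range $\alpha\in[0,\infty]$.

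Next I would plug these into the multiplicative leakage $\Lk^{\times}_{f_{\alpha},f_{\alpha},g}(\pi,C)=\log\big(\widehat{V}_{f_{\alpha},f_{\alpha},g}[\pi,C]/V_{f_{\alpha},g}(\pi)\big)$ and collapse the ratio of exponentials using $\log(\exp(a)/\exp(b))=a-b$. This gives $H_{\alpha}(\pi)-H_{\alpha}(X|Y)$, which is exactly the definition of Arimoto mutual information $I_{\alpha}^{A}(X;Y)$ recalled in Section~\ref{sec:background}, thereby establishing the identity claimed in~\eqref{eq:IAalpha}.

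The only points warranting care are well-definedness of the quotient and the boundary orders. Since $V_{f_{\alpha},g}(\pi)=\|\pi\|_{\alpha}^{\alpha/(\alpha-1)}>0$ whenever $\pi$ has nonempty support, both the ratio and its logarithm are defined. For $\alpha\in\{0,1,\infty\}$ the vulnerability identities must be read as continuous extensions, consistent with the convention that the R\'enyi and Arimoto quantities are themselves defined by continuity at these points; because Theorem~\ref{theo:alpha-vunle} already asserts the formulas over all of $[0,\infty]$, no separate limiting argument is required here. I anticipate essentially no obstacle: all the analytic content---solving the optimization over $w$ to reach the R\'enyi form and verifying convexity of $f_{\alpha}^{-1}$---was discharged in Theorem~\ref{theo:alpha-vunle} and Proposition~\ref{prop:Vfgconvex}, so the present statement is purely a bookkeeping assembly through the logarithm.
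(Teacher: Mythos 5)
Your proposal is correct and matches the paper's own argument exactly: the proposition is proved there by the same substitution of the closed forms from Theorem~\ref{theo:alpha-vunle} into the multiplicative leakage, collapsing $\log(\exp(a)/\exp(b))=a-b$, and identifying the result with $I_{\alpha}^{A}(X;Y)$ via its definition in Section~\ref{sec:background}. Your added remarks on positivity of $V_{f_{\alpha},g}(\pi)$ and the continuous extension at $\alpha\in\{0,1,\infty\}$ are sound but not a departure from the paper's route.
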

    \begin{remark}\label{remark:maxalphaleak}
        Equation~\eqref{eq:IAalpha} represents \(\alpha\)-leakage in~\cite{2019TunMsurInfLeak_PUT} consistently within our generalized framework. 
        Moreover, the \textbf{maximal $\boldsymbol{\alpha}$-leakage} is given by Thm.~\ref{thm:maximal=capcity}:
        \begin{align}
            \Lk_{f_{\alpha},f_{\alpha},g}^{\times}(\forall,C)=\sup_{\pi}I_{\alpha}^{A}(X,Y), \quad \alpha \in [0,\infty].
        \end{align}
        This interpretation extends \(\alpha\) to the whole $[0,\infty]$ range originally defined by Arimoto in~\cite{arimoto1977information}. Note that for $\alpha=\infty$ we have $\Lk^{\times}_{f_{\infty},f_{\infty},g}(\forall,C)=\mathcal{ML}(C).$
    \end{remark}
     Next, we show that the generalized framework can express maximal $(\alpha,\beta)$-leakage. 
    \begin{proposition}[\textbf{Maximal} $(\boldsymbol{\alpha,\beta})$-\textbf{leakage}~\cite{2022Alphabetleakage}]\label{prop:maximalalphabeta}
        Consider the same $g$ and $f_{\alpha}$ given in~\eqref{eq:g(w,x)=wx} and~\eqref{eq:alpha-f}. Let $h_{(\alpha,\beta)}:\Real^{+}\rightarrow\Real$ be $h_{(\alpha,\beta)}(t)=t^{\frac{(\alpha-1)\beta}{\alpha}}$ and $h^{-1}_{(\alpha,\beta)}(s)=s^{\frac{\alpha}{(\alpha-1)\beta}}$, where $\alpha \in (1,\infty]$ and $\beta \in [1,\infty]$. The generalized leakage is given by:
        {\interdisplaylinepenalty=10000
        \begin{align}
            &\Lk^{\times}_{h_{(\alpha,\beta)},f_{\alpha},g}\left(\pi,C\right)\nonumber\\
            &=\frac{\alpha}{(\alpha-1)\beta}\log \sum_{y\in\Y}p(y)^{1-\beta} \left[ \frac{\sum_{x\in\X}\pi_{x}^{\alpha}C_{x,y}^{\alpha}}{{\sum_{x\in\X}\pi_{x}^{\alpha}}} \right]^{\frac{\beta}{\alpha}}.\label{eq:alphabeatleak}
        \end{align}
        }
        
        Then the maximal $(\alpha,\beta)$-leakage is given by Thm.\ref{thm:maximal=capcity} and the generalized capacity as:
        \begin{align}\label{eq:alphabetacap}
            \Lk^{\times}_{h_{(\alpha,\beta)},f_{\alpha},g}(\forall,C)=\sup_{\pi}\Lk^{\times}_{h_{(\alpha,\beta)},f_{\alpha},g}(\pi,C).
        \end{align}
        The proof is provided in Appendix~\ref{app:proofofalphabeta}.
    \end{proposition}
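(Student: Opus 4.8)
The plan is to obtain \eqref{eq:alphabeatleak} by direct substitution into the generalized average posterior vulnerability \eqref{eq:h(t)posterior}, and then to derive the maximal version by reducing it to the generalized capacity through Theorem~\ref{thm:maximal=capcity}. First I would evaluate $\widehat{V}_{h_{(\alpha,\beta)},f_{\alpha},g}[\pi,C]$ with $h=h_{(\alpha,\beta)}$. Since the inner function is $f_{\alpha}$ and $\alpha\in(1,\infty]$, Theorem~\ref{theo:alpha-vunle} gives $V_{f_{\alpha},g}(\delta^{y})=\big(\sum_{x}(\delta_{x}^{y})^{\alpha}\big)^{1/(\alpha-1)}$, so that $h_{(\alpha,\beta)}\big(V_{f_{\alpha},g}(\delta^{y})\big)=\big(\sum_{x}(\delta_{x}^{y})^{\alpha}\big)^{\beta/\alpha}$, because the exponents $\tfrac{1}{\alpha-1}$ and $\tfrac{(\alpha-1)\beta}{\alpha}$ multiply to $\tfrac{\beta}{\alpha}$. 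Applying $h_{(\alpha,\beta)}^{-1}$ to the $p(y)$-weighted sum then yields a closed form for $\widehat{V}_{h_{(\alpha,\beta)},f_{\alpha},g}[\pi,C]$.

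Next I would form the multiplicative ratio $\log\big(\widehat{V}/V\big)$ and pull out the common factor $\tfrac{\alpha}{(\alpha-1)\beta}$, using $V_{f_{\alpha},g}(\pi)=\big(\sum_{x}\pi_{x}^{\alpha}\big)^{1/(\alpha-1)}$ from \eqref{eq:alphavulprior} together with the identity $\tfrac{1}{\alpha-1}=\tfrac{\alpha}{(\alpha-1)\beta}\cdot\tfrac{\beta}{\alpha}$ to absorb the denominator as a $\tfrac{\beta}{\alpha}$ power inside the logarithm. The last algebraic move is to substitute $\delta_{x}^{y}=\pi_{x}C_{x,y}/p(y)$, which gives $\sum_{x}(\delta_{x}^{y})^{\alpha}=p(y)^{-\alpha}\sum_{x}\pi_{x}^{\alpha}C_{x,y}^{\alpha}$; raising this to the power $\tfrac{\beta}{\alpha}$ and multiplying by the outer weight $p(y)$ produces the factor $p(y)^{1-\beta}$ and the normalized bracket $\big[\sum_{x}\pi_{x}^{\alpha}C_{x,y}^{\alpha}/\sum_{x}\pi_{x}^{\alpha}\big]^{\beta/\alpha}$, which is exactly \eqref{eq:alphabeatleak}.

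For the maximal statement I would show that the generalized maximal leakage of Section~\ref{u:is:redundantA} instantiated with $(g,f_{\alpha},h_{(\alpha,\beta)})$ reproduces the original definition \eqref{eq:alphabetadef}, and then invoke Theorem~\ref{thm:maximal=capcity} to replace $\sup_{U-X-Y}$ by $\sup_{\pi}$. Working over the alphabet $\U$, the denominator $\max_{\hat{p}_{U}}\sum_{u}p(u)\hat{p}(u)^{(\alpha-1)/\alpha}$ equals $V_{f_{\alpha},g}(p_{U})^{(\alpha-1)/\alpha}$ (the optimizer being $w_{u}^{*}=p(u)^{\alpha}/\sum_{u}p(u)^{\alpha}$ as in Theorem~\ref{theo:alpha-vunle}), so the leading $\tfrac{\alpha}{\alpha-1}\log$ of it collapses to $\log V_{f_{\alpha},g}(p_{U})$. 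Likewise, because the outer map $[\,\cdot\,]^{1/\beta}$ is increasing and the summands separate over $y$, the maximization over each posterior estimator $\hat{\delta}^{y}$ can be carried out termwise, turning the numerator into $\big[\sum_{y}p(y)\,h_{(\alpha,\beta)}\big(V_{f_{\alpha},g}(\delta^{y})\big)\big]^{1/\beta}$, whose $\tfrac{\alpha}{\alpha-1}\log$ equals $\log\widehat{V}_{h_{(\alpha,\beta)},f_{\alpha},g}[p_{U},C]$. Identifying numerator and denominator in this way shows \eqref{eq:alphabetadef} coincides with $\Lk^{\times}_{h_{(\alpha,\beta)},f_{\alpha},g}(p_{U},C)$, after which Theorem~\ref{thm:maximal=capcity} delivers \eqref{eq:alphabetacap}.

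I expect the main obstacle to lie in the reconciliation of the third paragraph rather than the computation of the first two. In particular, one must justify the termwise maximization over the family $\{\hat{\delta}^{y}\}$ carried out through the $\beta$-power and the outer root, and track carefully how the single coefficient $\tfrac{\alpha}{\alpha-1}$ in \eqref{eq:alphabetadef} combines with the $\tfrac{1}{\beta}$ produced by $[\,\cdot\,]^{1/\beta}$ to give precisely $\tfrac{\alpha}{(\alpha-1)\beta}$, confirming the two formulations agree and do not differ by a $\beta$-dependent normalization. I would also verify the boundary cases $\alpha=\infty$ and $\beta=\infty$ by continuous extension, and check that $h_{(\alpha,\beta)}$ is strictly increasing on $\RealP$ for $\alpha\in(1,\infty]$, $\beta\in[1,\infty]$ (its exponent $\tfrac{(\alpha-1)\beta}{\alpha}>0$) so that passing the optimization through $f_{\alpha}^{-1}$ and the averaging through $h_{(\alpha,\beta)}^{-1}$ is legitimate.
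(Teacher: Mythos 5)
Your first two paragraphs are exactly the paper's proof in Appendix~\ref{app:proofofalphabeta}: compute $\widehat{V}_{h_{(\alpha,\beta)},f_{\alpha},g}[\pi,C]$ by substituting $V_{f_{\alpha},g}(\delta^{y})=\big(\sum_{x}(\delta_{x}^{y})^{\alpha}\big)^{1/(\alpha-1)}$ into the $h_{(\alpha,\beta)}$-average, form the multiplicative ratio against $V_{f_{\alpha},g}(\pi)=\big(\sum_{x}\pi_{x}^{\alpha}\big)^{1/(\alpha-1)}$, and rewrite posteriors as $\pi_{x}C_{x,y}/p(y)$ to produce the $p(y)^{1-\beta}$ factor and the normalized bracket; your exponent bookkeeping matches the paper's computation line by line. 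Where you genuinely go beyond the paper is your third paragraph. The paper's appendix proves only \eqref{eq:alphabeatleak}; the capacity claim \eqref{eq:alphabetacap} is justified in the proposition statement solely by citing Theorem~\ref{thm:maximal=capcity}, and the identification of the original definition \eqref{eq:alphabetadef} with the generalized maximal leakage $\sup_{\pi}\sup_{U-X-Y}\Lk^{\times}_{h_{(\alpha,\beta)},f_{\alpha},g}(p_{U},C)$ is left implicit. Your reconciliation supplies exactly that missing bridge: the denominator of \eqref{eq:alphabetadef} equals $V_{f_{\alpha},g}(p_{U})^{(\alpha-1)/\alpha}$ (with optimizer $\hat{p}^{*}(u)=p(u)^{\alpha}/\sum_{u}p(u)^{\alpha}$), and since each summand over $y$ involves its own estimator $\hat{\delta}^{y}$ only, $t\mapsto t^{\beta}$ is increasing on $t\geq 0$ for $\beta\geq 1$, and $[\,\cdot\,]^{1/\beta}$ is increasing, the maximization in the numerator legitimately splits termwise and yields $\big[\sum_{y}p(y)\,h_{(\alpha,\beta)}\big(V_{f_{\alpha},g}(p_{U|y})\big)\big]^{1/\beta}$; the coefficients then combine as $\tfrac{\alpha}{\alpha-1}\cdot\tfrac{1}{\beta}=\tfrac{\alpha}{(\alpha-1)\beta}$, so \eqref{eq:alphabetadef} is literally $\sup_{\pi}\sup_{U-X-Y}\Lk^{\times}_{h_{(\alpha,\beta)},f_{\alpha},g}(p_{U},C)$ and Theorem~\ref{thm:maximal=capcity} finishes the argument. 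The paper's route buys brevity; yours buys an explicit verification that the generalized framework reproduces the definition of~\cite{2024GilaniUnifyingMabel} itself, rather than merely producing a formula of the same shape.
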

        \begin{remark}
            To keep it consistent within our framework, we need to check the ranges of \(\alpha\) and $\beta$ such that $h$ is convex and increasing or concave and decreasing. We have :
            \begin{align*}
                &h'_{(\alpha,\beta)}(t)=\frac{(\alpha-1)\beta}{\alpha}t^{\frac{(\alpha-1)\beta}{\alpha}-1},\\
                &h''_{(\alpha,\beta)}(t)=\left(\frac{(\alpha-1)\beta}{\alpha}\right)\left(\frac{(\alpha-1)\beta}{\alpha}-1\right)t^{\frac{(\alpha-1)\beta}{\alpha}-2}.
            \end{align*}
            For $\alpha \in (1,\infty]$ and $\beta \in [1,\infty]$, $h'_{\alpha,\beta}(t)\geq 0$ and it is increasing, thus $h_{(\alpha,\beta)}$ should be convex. If $\beta \geq \frac{\alpha}{\alpha-1}$ then $h''_{(\alpha,\beta)}(t)\geq 0$ and the function is convex. 
        \end{remark}
        \begin{remark}\label{remark:specialcasealphabeta}
        
            We demonstrate that our generalized result can achieve all special cases of maximal $(\alpha,\beta)$-leakage as presented in~\cite{2022Alphabetleakage}.
            \begin{enumerate}
                \item \textbf{Maximal $\!\alpha$-leakage} ($\beta=1$): If $\beta=1$ then $h_{(\alpha,1)}=f_{\alpha}$ and this case is given by Proposition~\ref{prop:Arimoto leakage} and Remark~\ref{remark:maxalphaleak}.
                
                \item \textbf{Maximal leakage} ($\alpha=\infty$, $\beta=1$): This case is easily given by the maximal \(\alpha\)-leakage when $\alpha=\infty$.
                
                \item \textbf{R{\'e}nyi LDP} ($\alpha=\beta$): It is given by~\eqref{eq:alphabeatleak} and~\eqref{eq:alphabetacap} as:
                \begin{align}\label{eq:localreyni}
                    &\Lk^{\times}_{h_{(\alpha,\alpha)},f_{\alpha},g}(\forall,C)=\max_{x,x'}\frac{1}{\alpha-1}\log\sum_{y\in\Y}C_{x',y}^{1-\alpha}C_{x,y}^{\alpha}.
                \end{align}
                The proof of this item is given in Appendix~\ref{app:proofoflocalreyni}.
                
                \item \textbf{LDP} ($\alpha\!=\!\beta\!=\!\infty$): It is given by R{\'e}nyi LDP for $\alpha=\infty$:
                \begin{align*}
                     \Lk^{\times}_{h_{(\infty,\infty)},f_{\infty},g}(\forall,C)
                     =\log\max_{y\in\Y}\frac{\displaystyle\max_{x\in\X}C_{x,y}}{\displaystyle \min_{x\in\X}C_{x,y}}=\Lk^{\text{LDP}}(C).
                \end{align*}
                 
                 \item \textbf{Differential Privacy:} In~\cite{2024GilaniUnifyingMabel}, it was demonstrated that with vectorized inputs and a conditional definition of maximal $(\alpha,\beta)$-leakage, R{\'e}nyi differential privacy and standard differential privacy are achievable when $\alpha = \beta$ and $\alpha = \beta = \infty$. We omit the details for brevity.
            \end{enumerate}
        \end{remark}

        It is notable that when $\alpha = \beta = \infty$, maximal \(\alpha\)-leakage and maximal $(\alpha,\beta)$-leakage reduce to two primary capacity measures: Bayes capacity and LDP, respectively. These represent the worst-case leakage measures known so far, where the former is the average-case and the latter the max-case capacity~\cite{2022EXplainEps}. However, these results have been derived when the adversary applies normal averaging, rather than generalized \(f\)-mean averaging. The key question is:  what are the worst-case capacities, beyond these special cases, for arbitrary functions \(f\) and $h$? The following section provides partial answers to this question for the class of multiplicative $f^{-1}$. Incidentally, the $f^{-1}$ functions we had to apply to obtain maximal \(\alpha\) leakage and maximal $(\alpha, \beta)$ leakage in the generalized QIF framework are multiplicative. Establishing a comprehensive and general capacity result remains an open problem.

\section{Some Results on The Generalized Leakage and Capacity}\label{sec:capacity}    

    In this section, we first derive the multiplicative generalized leakage capacity in the special case that $f = h$ and $f^{-1}$ is a multiplicative function (that is, in the special case when the adversary uses the same averaging function for computing its best posterior and prior actions and  $f^{-1}(ab) = f^{-1}(a)f^{-1}(b)$ for all $a,b$ in the domain of $f^{-1}$). The multiplicative generalized leakage capacity turns out to be the $\log f^{-1}\left(\exp{\left(\mathcal{ML}(C)\right)}\right)$---which can exceed the exponent of the Bayes capacity \(\exp{\big(\mathcal{ML}(C)\big)} \) when \( f^{-1}(x) > x \). 
    We also show that the max-case leakage capacity is the $ \log f^{-1}\left( \exp{\left(\Lk^{\text{LDP}}(C)\right)}\right)$ when $f^{-1}$ is increasing and $ \log f^{-1}\left( \exp{\left(-\Lk^{\text{LDP}}(C)\right)}\right)$ when $f^{-1}$ is decreasing. 
    We then establish the fact that the generalized average posterior vulnerability is upper bounded by the generalized maximum posterior vulnerability (for the same $f,h$). We then use this fact to report partial results on generalized leakage and capacity measures when $f \neq h$, but $f^{-1}$ is still a multiplicative function. 
       
    \begin{theorem}
        For any valid $h,f,g, \pi$ and $C$, if $h=f$ and $f^{-1}$ is multiplicative, i.e., $f^{-1}(ab)$ $= f^{-1}(a)f^{-1}(b),$ for all  $a,b$ in the domain of $f^{-1}$, then
        \begin{align}
            \mathcal{L}^{\times}_{f,f,g}(\pi,C) \leq \log f^{-1}\bigg(\sum_{y\in\Y} \max_{x\in\X} C_{x,y} \bigg).
        \end{align}
    \end{theorem}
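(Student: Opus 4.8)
The plan is to bound the ratio $\widehat{V}_{f,f,g}[\pi,C]/V_{f,g}(\pi)$ directly, since by definition $\mathcal{L}^{\times}_{f,f,g}(\pi,C)=\log\!\big(\widehat{V}_{f,f,g}[\pi,C]/V_{f,g}(\pi)\big)$. The natural starting point is the $h=f$ simplification recorded in \eqref{eq:h=f}, which collapses the posterior vulnerability into $f^{-1}$ applied to a single sum over $y$ of an optimized inner term, the optimization being $\sup_{w}$ when $f^{-1}$ is increasing and $\inf_{w}$ when it is decreasing. Before the main estimate, I would first extract the consequence of multiplicativity that actually drives the argument: a continuous, strictly monotonic $f^{-1}$ satisfying $f^{-1}(ab)=f^{-1}(a)f^{-1}(b)$ must be a power map $f^{-1}(s)=s^{c}$ on the positive reals, so its domain (the range of $f$) lies in $(0,\infty)$ and hence $f\big(g(w,x)\big)\ge 0$ for all $w,x$. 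This positivity is exactly what licenses comparing the channel-weighted sums against the column maxima.

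For the increasing branch the core chain is short. Writing $B(w):=\sum_{x\in\X}\pi_{x}f\big(g(w,x)\big)$, I would fix $y$ and $w$ and use $C_{x,y}\le \max_{x'\in\X}C_{x',y}$ together with $f\big(g(w,x)\big)\ge 0$ and $\pi_{x}\ge 0$ to obtain the termwise bound $\sum_{x\in\X}C_{x,y}\pi_{x}f\big(g(w,x)\big)\le \big(\max_{x\in\X}C_{x,y}\big)\,B(w)$. Taking $\sup_{w}$ of both sides (the column maximum is a nonnegative constant in $w$) and summing over $y$ gives $\sum_{y\in\Y}\sup_{w}\sum_{x}C_{x,y}\pi_{x}f\big(g(w,x)\big)\le \big(\sum_{y\in\Y}\max_{x}C_{x,y}\big)\,B^{*}$, where $B^{*}:=\sup_{w}B(w)>0$. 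Applying the increasing $f^{-1}$ preserves this inequality, and multiplicativity splits the right-hand side as $f^{-1}\big(\sum_{y}\max_{x}C_{x,y}\big)\cdot f^{-1}(B^{*})$. Since $f^{-1}$ is increasing and continuous, $f^{-1}(B^{*})=\sup_{w}f^{-1}(B(w))=V_{f,g}(\pi)$ by \eqref{eq:G(PX)}. Dividing by $V_{f,g}(\pi)>0$ and taking logarithms then yields the stated bound.

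The step I expect to be the main obstacle is the decreasing branch of $f^{-1}$. There the posterior vulnerability in \eqref{eq:h=f} carries $\inf_{w}$ in place of $\sup_{w}$, and applying the decreasing $f^{-1}$ reverses every inequality, so the identical termwise estimate $\sum_{x}C_{x,y}\pi_{x}f\big(g(w,x)\big)\le \big(\max_{x}C_{x,y}\big)B(w)$ now pushes $\widehat{V}_{f,f,g}[\pi,C]$ in the unhelpful direction and does not close the argument. Handling this cleanly appears to require either a matching reverse control on $\inf_{w}\sum_{x}C_{x,y}\pi_{x}f\big(g(w,x)\big)$ or the additional hypothesis that $f^{-1}$ be increasing; determining which is the correct fix — and whether the decreasing case warrants a separate statement — is where I anticipate spending the bulk of the effort. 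The remaining bookkeeping (positivity of $V_{f,g}(\pi)$, attainment of the optima, and the boundary orders handled by continuous extension) is routine.
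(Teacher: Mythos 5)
Your treatment of the increasing branch is exactly the paper's own proof: starting from \eqref{eq:h=f}, bound $C_{x,y}\le\max_{x'\in\X}C_{x',y}$ termwise, pull the (nonnegative) column maximum out of $\sup_{w\in\W}$ and out of the sum over $y$, apply the increasing $f^{-1}$, split by multiplicativity, and recognize the second factor as $V_{f,g}(\pi)$. Your preliminary observation that a continuous, strictly monotone, multiplicative $f^{-1}$ must be a power map, hence positive-valued, so that $f\big(g(w,x)\big)\ge 0$, is a worthwhile addition: the paper allows gain functions with negative values and applies the termwise bound without comment, and it is precisely this positivity that licenses that step.

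Your hesitation about the decreasing branch is not a defect of your attempt; it exposes a genuine error in the paper. The paper dismisses that case with the single sentence ``The same relations are held for decreasing $f$ and $f^{-1}$,'' but this is false: the termwise estimate gives an \emph{upper} bound on the argument of $f^{-1}$, and a decreasing $f^{-1}$ converts it into a \emph{lower} bound on $\widehat{V}_{f,f,g}[\pi,C]$, so the chain actually yields $\mathcal{L}^{\times}_{f,f,g}(\pi,C)\ \ge\ \log f^{-1}\big(\sum_{y\in\Y}\max_{x\in\X}C_{x,y}\big)$, the reverse of the claim. A concrete counterexample: take $f(t)=t^{-1}$, so $f^{-1}(s)=s^{-1}$ (this is $f_{\alpha}$ of \eqref{eq:alpha-f} with $\alpha=\nicefrac{1}{2}$: multiplicative, decreasing, with convex inverse), $g(w,x)=w_x$ as in \eqref{eq:g(w,x)=wx}, $\X=\Y=\{0,1\}$, $C$ the identity channel, and $\pi$ uniform. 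Then $\widehat{V}_{f,f,g}[\pi,C]=1$ and $V_{f,g}(\pi)=\nicefrac{1}{2}$, so $\mathcal{L}^{\times}_{f,f,g}(\pi,C)=\log 2$ (equivalently, by Theorem~\ref{theo:alpha-vunle} and Proposition~\ref{prop:Arimoto leakage} it equals $I^{A}_{1/2}(X;Y)=H_{1/2}(\pi)=\log 2$), whereas $\log f^{-1}\big(\sum_{y\in\Y}\max_{x\in\X}C_{x,y}\big)=\log f^{-1}(2)=-\log 2<0$. Hence the theorem as stated fails for decreasing multiplicative $f^{-1}$, and of the two fixes you propose, both have content: adding the hypothesis that $f^{-1}$ is increasing recovers the stated bound (and is the only case the paper's subsequent achievability construction uses, since it takes an increasing $f$), while for decreasing $f^{-1}$ the matching provable bound replaces $\max_{x\in\X}C_{x,y}$ by $\min_{x\in\X}C_{x,y}$ inside the sum---the same max/min swap the paper itself performs for decreasing $f^{-1}$ in Theorem~\ref{thm:LDPf-1}.
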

    \begin{proof} First we show the result for $\widehat{V}_{f,f,g}[\pi,C]$ and increasing \(f\) and $f^{-1}$:
        \begin{align}
            &\widehat{V}_{f,f,g}[\pi,C]=f^{-1}\!\bigg(\sum_{y\in\Y} \sup_{w\in\W} \sum_{x\in\X} C_{x,y}\pi_{x} f\big( g(w, x) \big)\!\!\bigg) \label{eq:vhath=f}\\
            &\leq f^{-1}\bigg(\sum_{y\in\Y} \sup_{w\in\W} \sum_{x\in\X} \left(\max_{x\in\X} C_{x,y}\right) \pi_{x} f\big( g(w, x) \big) \bigg) \\
            &= f^{-1}\bigg(\sum_{y\in\Y} \max_{x\in\X} C_{x,y} \sup_{w\in\W} \sum_{x\in\X} \pi_{x} f\big( g(w, x) \big) \bigg)\\
            &=f^{-1}\bigg(\sum_{y\in\Y} \max_{x\in\X} C_{x,y} \!\bigg)\!f^{-1}\!\bigg(\!\sup_{w\in\W}\!\sum_{x\in\X} \!\pi_{x} f\big( g(w, x)\big) \!\!\bigg) \label{eq:multiplf-1} \\
            &=f^{-1}\bigg(\sum_{y\in\Y} \max_{x\in\X} C_{x,y}\! \bigg)V_{f,g}(\pi),
        \end{align}
        where~\eqref{eq:vhath=f}  is given by~\eqref{eq:h=f} for increasing \(f\) and~\eqref{eq:multiplf-1} is due to the assumption about $f^{-1}$ being multiplicative. The same relations are held for decreasing \(f\) and $f^{-1}$.
        Now, for the generalized multiplicative leakage, we have:
        \begin{align*}
            \mathcal{L}^{\times}_{f,f,g}(\pi,C)
            & = \log\frac{\widehat{V}_{f,f,g}[\pi,C]}{V_{f,g}(\pi)}\\
            & \leq \log f^{-1}\bigg(\sum_{y\in\Y} \max_{x\in\X} C_{x,y} \bigg). \hspace{2.2cm}\qedhere
        \end{align*}
    \end{proof}
    
    It can be shown that the equality above is achieved when we select a uniform prior $\pi$, and we have $f\big(g(w,x)\big)=ag_{\text{id}}$ for any $a>0$:
    \begin{align}\label{eq:fgwxid}
        f\big(g(w,x)\big)=\begin{cases}
            a, &w=x,\\
            0, &\text{otherwise}.
        \end{cases}
    \end{align}
     Under the special case where $h=f=f^{-1}$ is the identity function, we will recover the Bayes capacity. We capture these results in the following theorem.
    \begin{theorem}
        For any valid $h,f,g, \pi$ and $C$, if $h=f$ and $f^{-1}$ is multiplicative, then
        \begin{align}\label{eq:f-1MAXL}
            \mathcal{L}^{\times}_{f,f,\forall}(\forall,C) = \log f^{-1}\bigg(\sum_{y\in\Y} \max_{x\in\X} C_{x,y} \bigg).
        \end{align}     
    \end{theorem}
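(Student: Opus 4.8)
The plan is to establish two matching inequalities: the upper bound is inherited essentially verbatim from the preceding theorem, while the matching lower bound comes from the explicit construction indicated in the discussion preceding the statement---a uniform prior paired with a gain function whose image under $f$ is a scaled identity gain. Since both ingredients are already in place, the proof is mostly a matter of assembling them and checking that the achieving construction is admissible.

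For the upper bound, recall that the preceding theorem gives $\mathcal{L}^{\times}_{f,f,g}(\pi,C)\le \log f^{-1}\big(\sum_{y\in\Y}\max_{x\in\X}C_{x,y}\big)$ for \emph{every} valid prior $\pi$ and gain $g$, under the standing assumptions $h=f$ and $f^{-1}$ multiplicative. Since the right-hand side depends on neither $\pi$ nor $g$, it is a uniform bound over both; taking $\sup_{\pi}\sup_{g}$ on the left therefore yields $\mathcal{L}^{\times}_{f,f,\forall}(\forall,C)\le \log f^{-1}\big(\sum_{y\in\Y}\max_{x\in\X}C_{x,y}\big)$.

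For achievability I would take $\W=\X$, the uniform prior $\pi_x=1/|\X|$, and a gain $g$ with $f\big(g(w,x)\big)$ equal to the scaled identity profile of~\eqref{eq:fgwxid}, namely $a$ when $w=x$ and $0$ otherwise for some $a>0$. The prior vulnerability then collapses to $V_{f,g}(\pi)=\sup_{w}f^{-1}(\pi_w a)=f^{-1}(a/|\X|)$, the argument being independent of $w$ under the uniform prior. For the posterior I would invoke the $h=f$ identity~\eqref{eq:h=f}: assuming $f^{-1}$ increasing, the inner optimization evaluates to $\sup_{w}\sum_{x}C_{x,y}\pi_x f(g(w,x))=(a/|\X|)\max_{x}C_{x,y}$, so that $\widehat{V}_{f,f,g}[\pi,C]=f^{-1}\big((a/|\X|)\sum_{y}\max_{x}C_{x,y}\big)$. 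The decisive step is the multiplicativity of $f^{-1}$, which factors this as $f^{-1}(a/|\X|)\,f^{-1}\big(\sum_{y}\max_{x}C_{x,y}\big)$; dividing by $V_{f,g}(\pi)=f^{-1}(a/|\X|)$ cancels the common factor and leaves $\mathcal{L}^{\times}_{f,f,g}(\pi,C)=\log f^{-1}\big(\sum_{y}\max_{x}C_{x,y}\big)$, matching the upper bound. Specializing $f=f^{-1}=\mathrm{id}$ recovers $\mathcal{ML}(C)$ as a consistency check.

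The main obstacle is the bookkeeping around the monotonicity of $f^{-1}$. The clean derivation above is for increasing $f^{-1}$, where the supremum in~\eqref{eq:h=f} pulls out the column maximum $\max_x C_{x,y}$; for decreasing $f^{-1}$ the infimum form of~\eqref{eq:h=f} is in force, the same construction isolates $\min_x C_{x,y}$ instead, and the target extremum must be tracked accordingly. I would also check that the profile~\eqref{eq:fgwxid} is realizable by an admissible gain (possibly as a limiting choice), that $f(g(w,x))\ge 0$ keeps $V_{f,g}$ nonnegative and well defined, and that the cancellation of the $f^{-1}(a/|\X|)$ factor is legitimate---this uses $f^{-1}(1)=1$, which itself follows from multiplicativity. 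Verifying these monotonicity-dependent details is where the care lies; the algebra in each branch is routine once the correct extremum and the factorization of $f^{-1}$ are fixed.
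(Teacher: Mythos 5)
Your proposal is correct and follows essentially the same route as the paper: the upper bound is inherited from the preceding theorem, and equality is achieved with the uniform prior together with the scaled-identity profile $f\big(g(w,x)\big)=a\,\mathrm{g_{id}}$, computing $V_{f,g}(\pi)=f^{-1}(a/|\X|)$ and $\widehat{V}_{f,f,g}[\pi,C]=f^{-1}\big(\tfrac{a}{|\X|}\sum_{y}\max_{x}C_{x,y}\big)$ and then factoring via multiplicativity of $f^{-1}$, exactly as in the paper's proof. One small note: the cancellation only requires $f^{-1}(a/|\X|)\neq 0$, not $f^{-1}(1)=1$, and your extra care about the decreasing-$f^{-1}$ branch is actually more thorough than the paper, which treats only the increasing case explicitly.
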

    \begin{proof}
         Let $\pi_{x}=\frac{1}{|\X|}$ and $f\big(g(w,x)\big)$ be the function in~\eqref{eq:fgwxid} for an increasing \(f\). Then,
        \begin{align*}
            V_{f,g}(\pi)&=f^{-1}\bigg(\sup_{w\in\W}\sum_{x\in\X} \pi_{x}f\big( g(w, x) \big) \bigg)
            =f^{-1}\left(\frac{a}{|\X|} \right),
        \end{align*}
        where $ a=\sup_{w\in\W}\sum_{x\in\X}f\big(g(w,x)\big)$. 
        Similarly for ${V}_{f,f,g}\Hyper{\pi}{C}$:
        \begin{align*}
            \widehat{V}_{f,f,g}[\pi,C]&=f^{-1}\bigg(\sum_{y\in\Y} \frac{1}{|\X|}  \sup_{w\in\W} \sum_{x\in\X} C_{x,y}f\big( g(w, x) \big) \bigg)\\
            &=f^{-1}\bigg(\frac{a}{|\X|}\sum_{y\in\Y} \max_{x\in\X}C_{x,y} \bigg)\\
            &=f^{-1}\bigg(\frac{a}{|\X|} \bigg)f^{-1}\bigg(\sum_{y\in\Y} \max_{x\in\X}C_{x,y} \bigg)\Rightarrow\\
            &\mathcal{L}^{\times}_{f,f,g}(\pi,C)= \log f^{-1}\bigg(\sum_{y\in\Y} \max_{x\in\X} C_{x,y} \bigg).~~~~~\qedhere
        \end{align*}
    \end{proof}
        
    \begin{theorem}\label{thm:LDPf-1}
        For any valid $f,g, \pi$ and $C$, if $f^{-1}$ is multiplicative, i.e., $f^{-1}(ab)$ $=f^{-1}(a)f^{-1}(b),$ for all $a,b$ in the domain of $f^{-1}$, we have:
        \begin{align*}
            &\mathcal{L}^{\max}_{f,\forall}(\forall,C)=\sup_{\pi,g}\mathcal{L}^{\max}_{f,g}(\pi,C)  \\
            &\leq
            \begin{cases}
                 \displaystyle \log \max_{y\in\Y} f^{-1}\left( \frac{\max_{x\in\X}C_{x,y}}{\min_{x\in\X}C_{x,y}}\right), & f^{-1} \text{increasing}, \\
                  \displaystyle \log  \max_{y\in\Y} f^{-1}\left( \frac{\min_{x\in\X}C_{x,y}}{\max_{x\in\X}C_{x,y}}\right). & f^{-1} \text{decreasing.}
            \end{cases}
        \end{align*}
          For an increasing $f^{-1}$, we can transfer $\max_{y\in\Y}$ into the function resulting in $ f^{-1}\left(\Lk^{\text{LDP}}(C)\right)$. When $f^{-1}$ is increasing, $\max_{y\in\Y}$ becomes $\min_{y\in\Y}$ when it is moved inside the function and we have  $ f^{-1}\left(\left(\Lk^{\text{LDP}}(C)\right)^{-1}\right)$.\\
          See Appendix~\ref{app:proofofLDPf-1} for the proof.
    \end{theorem}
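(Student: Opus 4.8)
The plan is to reduce the whole statement to a single per-output estimate and then mirror the structure of the earlier LDP leakage capacity derivation, but carried through the pair $(f,f^{-1})$. Concretely, for every valid $\pi,g$ and every $y\in\Y$ I aim to prove the scalar bound
\[
  V_{f,g}(\delta^{y})\ \leq\ f^{-1}\!\left(\frac{\max_{x\in\X}C_{x,y}}{\min_{x\in\X}C_{x,y}}\right)V_{f,g}(\pi)
\]
when $f^{-1}$ is increasing, and the same inequality with the reciprocal ratio $\frac{\min_{x}C_{x,y}}{\max_{x}C_{x,y}}$ inside $f^{-1}$ when $f^{-1}$ is decreasing. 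Granting this, I divide by $V_{f,g}(\pi)$, take $\max_{y}$ to form $\widehat{V}^{\max}_{f,g}[\pi,C]$, apply $\log$, and finally take $\sup_{\pi,g}$; since the resulting factor depends only on $C$, it survives the supremum unchanged and yields the two displayed cases.

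The core estimate rests on the pointwise identity $\delta_{x}^{y}=\pi_{x}C_{x,y}/p(y)$ together with the fact that $p(y)=\sum_{x}\pi_{x}C_{x,y}$ is a convex combination of the column entries $\{C_{x,y}\}_{x}$, hence $\min_{x}C_{x,y}\leq p(y)\leq \max_{x}C_{x,y}$. Working in the regime where each summand $f(g(w,x))$ is nonnegative---which covers every gain/$f$ pair used in this paper, e.g.\ $g(w,x)=w_{x}\geq 0$ together with $f_{\alpha}$---I can sandwich, for each fixed $w$,
\[
  \tfrac{\min_{x}C_{x,y}}{p(y)}\sum_{x}\pi_{x}f(g(w,x))\ \leq\ \sum_{x}\delta_{x}^{y}f(g(w,x))\ \leq\ \tfrac{\max_{x}C_{x,y}}{p(y)}\sum_{x}\pi_{x}f(g(w,x)),
\]
and then replace $p(y)$ in the denominators by its $\min_{x}C_{x,y}$ or $\max_{x}C_{x,y}$ bound to obtain the column-ratio factors.

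Next I apply $f^{-1}$ to the scalar inequality and invoke multiplicativity, splitting $f^{-1}(ab)=f^{-1}(a)f^{-1}(b)$. A continuous multiplicative map on the positive reals is a power function, so the column-ratio factor $f^{-1}(\cdot)$ is positive and may be pulled outside the $\sup_{w}$. If $f^{-1}$ is increasing it preserves the upper bound, giving the first inequality directly; if $f^{-1}$ is decreasing I instead start from the lower sandwich bound, and the order-reversing $f^{-1}$ turns it into the required upper bound carrying the factor $f^{-1}\!\left(\frac{\min_{x}C_{x,y}}{\max_{x}C_{x,y}}\right)$. Taking $\sup_{w}$ on both sides then converts the right-hand side into $f^{-1}(\text{ratio})\,V_{f,g}(\pi)$, completing the per-output estimate.

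To finish I take $\max_{y}$ of the per-output factors and collapse it into $f^{-1}$ using monotonicity: $\max_{y}f^{-1}(r_{y})=f^{-1}(\max_{y}r_{y})$ for increasing $f^{-1}$ and $\max_{y}f^{-1}(r_{y})=f^{-1}(\min_{y}r_{y})$ for decreasing $f^{-1}$, which recovers $f^{-1}(\exp(\Lk^{\text{LDP}}(C)))$ and $f^{-1}(\exp(-\Lk^{\text{LDP}}(C)))$, respectively, as stated in the remark following the theorem. I expect the main obstacle to be bookkeeping rather than conceptual depth: tracking the direction of every inequality as it passes through the order-reversing $f^{-1}$ in the decreasing case, and---more substantively---justifying the nonnegativity of $f(g(w,x))$ that the column-entry sandwich requires. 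This is precisely where the argument must be confined to the class of valid (effectively nonnegative-gain) functions, exactly as the Bayes capacity theorem earlier restricts to nonnegative $g$, rather than to sign-indefinite gain functions; a uniform prior with the extremal choice in \eqref{eq:fgwxid} is the natural candidate for showing the bound is essentially tight.
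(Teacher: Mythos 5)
Your proposal is correct and mirrors the paper's own proof: the same per-output estimate that sandwiches $\sum_{x}\delta_{x}^{y}f\big(g(w,x)\big)$ between column extremes of $C$ times $\sum_{x}\pi_{x}f\big(g(w,x)\big)$, the same use of multiplicativity of $f^{-1}$ to split off the ratio factor before the $\sup_{w}$, and the same case split on the monotonicity of $f^{-1}$. The only differences are minor: you absorb the bound $\min_{x}C_{x,y}\le p(y)\le\max_{x}C_{x,y}$ per output so that the final $\sup_{\pi,g}$ is trivial (the paper instead keeps $p(y)$ in the factor and only passes to $\min_{x}C_{x,y}$ at the capacity step, citing an external LDP result that your route does not need), and you make explicit the nonnegativity of $f\big(g(w,x)\big)$ that the paper's column-extreme inequality uses implicitly.
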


   The following result establishes the fact that the generalized average posterior vulnerability is upper bounded by the maximum generalized posterior vulnerability. We will then use this result in the next proposition. 
    \begin{lemma} \label{lemm:vhg<vmax} For any valid $h,f,g,\pi$, and  $C$ we have:
        \begin{align}
             \widehat{V}_{h,f,g}[\pi,C] \leq \widehat{V}^{\max}_{f,g}[\pi,C].
        \end{align}
    \end{lemma}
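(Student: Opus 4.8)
The plan is to recognize that $\widehat{V}_{h,f,g}[\pi,C]$ is exactly the Kolmogorov--Nagumo $h$-mean of the numbers $V_{f,g}(\delta^{y})$ with weights $p(y)$, and to invoke the elementary fact that such a mean always lies between the minimum and the maximum of the averaged quantities. Concretely, I would set $M := \max_{y\in\Y} V_{f,g}(\delta^{y}) = \widehat{V}^{\max}_{f,g}[\pi,C]$, which is attained since $\Y$ is finite, and then show directly that $\widehat{V}_{h,f,g}[\pi,C] = h^{-1}\big(\sum_{y\in\Y} p(y)\, h(V_{f,g}(\delta^{y}))\big) \leq M$.

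Since the only structure needed is that $h$ is strictly monotonic and continuous (so $h^{-1}$ exists and inherits the same monotonicity), I would split into the two admissible cases. When $h$ is increasing, $h(V_{f,g}(\delta^{y})) \leq h(M)$ for every $y$, so the convex combination obeys $\sum_{y\in\Y} p(y)\, h(V_{f,g}(\delta^{y})) \leq h(M)$ because $\sum_{y\in\Y} p(y) = 1$; applying the increasing $h^{-1}$ preserves the inequality and gives $\widehat{V}_{h,f,g}[\pi,C] \leq h^{-1}(h(M)) = M$. When $h$ is decreasing, the pointwise inequality reverses to $h(V_{f,g}(\delta^{y})) \geq h(M)$, hence $\sum_{y\in\Y} p(y)\, h(V_{f,g}(\delta^{y})) \geq h(M)$; now applying the decreasing $h^{-1}$ flips the inequality back and again yields $\widehat{V}_{h,f,g}[\pi,C] \leq M$. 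This argument simultaneously covers the $h \neq f$ regimes (convex increasing or concave decreasing) and the special case $h = f$ (convex decreasing or concave increasing), because in every valid instance $h$ is monotonic, which is all that is used.

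I do not anticipate a genuine obstacle: the statement is the standard ``the generalized mean is sandwiched between its min and its max'' fact, and the convexity/concavity hypotheses imposed elsewhere for the axioms are not needed here. The only points requiring care are bookkeeping the direction of the inequality under $h^{-1}$ in the decreasing case, and noting that terms with $p(y)=0$ (where $\delta^{y}$ is undefined) drop out, so the sum is a bona fide convex combination over the support of $Y$.
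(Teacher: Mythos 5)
Your proof is correct and follows essentially the same route as the paper's: both arguments bound the convex combination $\sum_{y} p(y)\,h\big(V_{f,g}(\delta^{y})\big)$ by the extremal value $h\big(\max_{y} V_{f,g}(\delta^{y})\big)$ (the paper writes this as $\min_{y} h(V_{f,g}(\delta^{y}))$ in the decreasing case, which equals your $h(M)$), and then apply $h^{-1}$ with the appropriate case split on the monotonicity of $h$. The only cosmetic difference is that you compare directly against $h(M)$ rather than passing through the intermediate $\min_{y}/\max_{y}$ of the $h$-values, which changes nothing of substance.
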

    \begin{proof}
        For decreasing $h$ and $h^{-1}$ we have:
        \begin{align}
            \widehat{V}_{h,f,g}[\pi,C]
            &=h^{-1}\bigg( \sum_{y\in\Y}p(y) h\Big(V_{f,g}(\delta^{y})\Big) \bigg)\\
            &\leq h^{-1}\bigg(\min_{y\in\Y} h\Big(V_{f,g}(\delta^{y})\Big) \bigg) \label{eq:h-1max}\\
            &= h^{-1}\bigg( h\Big(\max_{y\in\Y}V_{f,g}(\delta^{y})\Big) \bigg) \label{eq:h-min-max}\\
            &=\max_{y\in\Y}V_{f,g}(\delta^{y})=\widehat{V}^{\max}_{f,g}[\pi,C].
        \end{align}
       In~\eqref{eq:h-min-max} $\min_{y\in\Y}$ turns into $\max_{y\in\Y}$ since $h$ is decreasing. 
       The proof is similar for an increasing $h$, but in~\eqref{eq:h-1max} we use $\max_{y\in\Y}$.
    \end{proof}
   
    \begin{proposition}\label{Thm.GenLDP}
        For any valid $h,f,g, \pi$ and $C$, if $h\neq f$ and $f^{-1}$ is multiplicative, i.e., $f^{-1}(ab)$ $=           f^{-1}(a)f^{-1}(b),$ for all  $a,b$ in the domain of $f^{-1}$, we have:
        \begin{align}\label{eq:LDP-1f}
            &\mathcal{L}^{\times}_{h,f,\forall}(\forall,C)=\sup_{\pi,g}\mathcal{L}^{\times}_{h,f,g}(\pi,C) \leq \mathcal{L}^{\max}_{f,\forall}(\forall,C)\\
            &\leq
            \begin{cases}
                 \displaystyle\log \max_{y\in\Y} f^{-1}\left( \frac{\max_{x\in\X}C_{x,y}}{\min_{x\in\X}C_{x,y}}\right), & f^{-1} \text{ increasing},\nonumber\\
                 \displaystyle \log \max_{y\in\Y} f^{-1}\left( \frac{\min_{x\in\X}C_{x,y}}{\max_{x\in\X}C_{x,y}}\right). & f^{-1} \text{ decreasing},
            \end{cases}
        \end{align}
        where the inequalities are given by Lemma~\ref{lemm:vhg<vmax} and Theorem~\ref{thm:LDPf-1}.
    \end{proposition}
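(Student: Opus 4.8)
The plan is to chain together the two results cited in the statement, Lemma~\ref{lemm:vhg<vmax} and Theorem~\ref{thm:LDPf-1}, via a pointwise-to-supremum argument. First I would fix an arbitrary admissible prior $\pi$ and gain function $g$ and compare the generalized average-case multiplicative leakage with the generalized max-case leakage for the same $f$. By definition both share the denominator $V_{f,g}(\pi)$, which is strictly positive under the standing assumption that a vulnerability is positive (at least one positive value lies in the co-domain of $g$). Lemma~\ref{lemm:vhg<vmax} supplies exactly the inequality between the numerators, $\widehat{V}_{h,f,g}[\pi,C] \leq \widehat{V}^{\max}_{f,g}[\pi,C]$, and this holds regardless of whether $h$ equals $f$.

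Next I would divide this inequality by the common positive factor $V_{f,g}(\pi)$ and apply the (monotone increasing) logarithm, obtaining the pointwise bound
\begin{align*}
\mathcal{L}^{\times}_{h,f,g}(\pi,C)
=\log\frac{\widehat{V}_{h,f,g}[\pi,C]}{V_{f,g}(\pi)}
\leq\log\frac{\widehat{V}^{\max}_{f,g}[\pi,C]}{V_{f,g}(\pi)}
=\mathcal{L}^{\max}_{f,g}(\pi,C).
\end{align*}
Since this is valid for every admissible pair $(\pi,g)$, taking the supremum over both arguments on each side preserves the direction of the inequality, yielding $\sup_{\pi,g}\mathcal{L}^{\times}_{h,f,g}(\pi,C) \leq \sup_{\pi,g}\mathcal{L}^{\max}_{f,g}(\pi,C)$. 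By the definitions of generalized capacity, the left-hand side is $\mathcal{L}^{\times}_{h,f,\forall}(\forall,C)$ and the right-hand side is $\mathcal{L}^{\max}_{f,\forall}(\forall,C)$, which establishes the first displayed inequality. The second inequality then follows immediately by invoking Theorem~\ref{thm:LDPf-1}: its hypothesis (that $f^{-1}$ is multiplicative) is precisely the standing assumption here, so it bounds $\mathcal{L}^{\max}_{f,\forall}(\forall,C)$ above by the case expression involving $f^{-1}$ of the ratio $\max_{x}C_{x,y}/\min_{x}C_{x,y}$ for increasing $f^{-1}$ and its reciprocal for decreasing $f^{-1}$.

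There is essentially no heavy computation, as this is a composition proof whose content is entirely carried by the two cited results. The only point requiring care is the interaction of the supremum with the logarithm and division: I must confirm that the denominator $V_{f,g}(\pi)$ is strictly positive so that dividing and taking logs is legitimate and order-preserving, and that passing from a pointwise inequality to one between suprema does not reverse the direction. The conceptually important observation—which I would highlight rather than belabor—is that the assumption $h \neq f$ causes no difficulty precisely because Lemma~\ref{lemm:vhg<vmax} decouples $\widehat{V}_{h,f,g}$ from $\widehat{V}^{\max}_{f,g}$, so all dependence on $h$ is absorbed and then discarded at the first step, leaving the $f$-only max-case capacity of Theorem~\ref{thm:LDPf-1} as the governing bound.
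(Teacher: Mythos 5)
Your proposal is correct and takes essentially the same route as the paper, which likewise obtains the first inequality by chaining Lemma~\ref{lemm:vhg<vmax} through the common denominator $V_{f,g}(\pi)$ and then invokes Theorem~\ref{thm:LDPf-1} for the second inequality. Your added care about the positivity of $V_{f,g}(\pi)$ and the pointwise-to-supremum step merely fills in routine details the paper leaves implicit in its one-line citation of the two results.
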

    
    Bayes capacity and LDP leakage represent the worst-case scenarios for average-case and max-case leakage, respectively. The results in this section provide an upper bound on these capacities, potentially introducing new worst-case scenarios. However, since both results in~\eqref{eq:f-1MAXL} and~\eqref{eq:LDP-1f} are expressed as functions of Bayes capacity and LDP leakage, defining a new worst-case scenario requires specific conditions—particularly when \(f^{-1}(x)>x\). The existence of an operationally meaningful function \( f^{-1} \) that satisfies this condition remains an open research question.

\section{Pointwise \texorpdfstring{$\alpha$}{alpha}-leakage and Sibson mutual information}\label{Sec:Sibson}

    In addition to the \(\alpha\)-based measures we considered in Section~\ref{u:is:redundant}, two other important measures with applications in privacy are R{\'e}nyi divergence and Sibson mutual information. 
    To our knowledge, their operational meaning cannot be expressed via the proposed generalized leakage measures as a ratio of separate posterior and prior vulnerabilities. 
    In this section, we are inspired by R{\'e}nyi's interpretation of information gain to propose an operational meaning for R{\'e}nyi divergence and Sibson mutual information for the whole range of $\alpha \in [0,\infty]$  as a measure of privacy.  \footnote{We note that~\cite{ding2024alpha} made some progress in this direction and showed how R\'enyi divergence and Sibson mutual information can be interpreted as \(f\)-mean information gain measures. However, the current paper takes a more general approach and shows how these measures fit within the proposed QIF generalized framework.}.

    The definition of Bayes and  $g$-vulnerability in~\cite{2014AdditiveMultpliQIF} are rooted in the concept of entropy and uncertainty in information theory. 
    In~\cite{1961measuresreny}, R{\'e}nyi generalized Shannon entropy by relaxation of one of the five postulates considered for a measure of uncertainty to achieve new definitions of entropy. 
    Moreover, he also generalized the characterization of the amount of information which led to R{\'e}nyi divergence~\cite[Sec. 3]{1961measuresreny}. 
    In his seminal paper,~\cite[Sec. 3]{1961measuresreny},  R{\'e}nyi provides elegant remarks, which we quote: 
    ``\textit{Entropy can be interpreted not only as a measure of uncertainty but also as a measure of information}.''
    
    Then he continued by saying there are other ways to quantify the amount of information. 
    ``\textit{For instance, we may ask what is the amount of information concerning a random variable $\zeta$ obtained from observing an event $E$, which is in some way connected with the random variable $\zeta$. If P denotes the original (unconditional) distribution of the random variable $\zeta$ and $Q$ the conditional distribution of $\zeta$ under the condition that the event $E$ has taken place, we shall denote a measure of the amount of information concerning the random variable $\zeta$ contained in the observation of the event $E$ by $I(Q|P)$.''} (In most subsequent literature, $I(Q|P)$ has been changed to $D(P|Q)$.)
   
    By considering five postulates for the amount of information~\cite{1961measuresreny}, R{\'e}nyi proved that the amount of information obtained about each $x\!\in\!\X$ by a singleton observation $y\!\in\!\Y$  is given by \(\displaystyle \log\frac{\delta^{y}_{x}}{\pi_{x}}\). 
    Accordingly, we propose a new gain function as an information gain, leading to a new leakage measure called \textbf{\textit{pointwise information gain}}. By this definition, we quantify the information gain of each channel's output $y \in \Y$ according to its corresponding inner $\delta^{y}$. Interestingly, this definition of leakage satisfies axioms of information measure but is not given by separate prior and posterior vulnerabilities. 
    
    \begin{definition}\label{def:information gain and leakage}
        Let $\W$ be the (uncountable infinite) set of all probability distributions $w$ on $\X$. The pointwise information gain is defined as:
        \begin{align}
            \gamma(w,x)=\log\frac{w_{x}}{\pi_{x}}.
        \end{align}
            Then, for a given strictly monotonic and continuous function  $\ell$, the\textit{ \textbf{generalized pointwise posterior leakage}} is defined as:
        \begin{align}\label{eq:informative leakage}
            \mathcal{I}_{\ell,\gamma}(\pi,\delta^{y}) =\sup_{w\in\W}\ell^{-1}\bigg(\sum_{x\in\X}\delta_{x}^{y}\ell\big(\gamma(w,x)\big) \bigg).
        \end{align}
    \end{definition}
    In~\eqref{eq:informative leakage}, we take the expectation w.r.t to $\delta^{y}$ since $w$ quantifies the information gain for each $x$ after observation of $y \in \Y$. Consequently, the \textbf{\textit{generalized average posterior leakage}} is:
    \begin{align}
        \mathcal{I}_{h,\ell,\gamma}(\pi,C)=h^{-1}\bigg(\sum_{y\in\Y}p(y)h\Big(\mathcal{I}_{\ell,\gamma}(\pi,\delta^{y})\Big) \bigg),
    \end{align}
    where $h$ is a strictly monotonic and continuous function.
    
    Since  $\mathcal{I}_{\ell,\gamma}(\pi,\delta^{y})$ provides information leakage for each $y \in \Y$, it is a proper candidate for the operational meaning of a class of privacy measures called pointwise measures. An example of pointwise measures is pointwise maximal leakage~\cite{2023PML}, which is an extension of maximal leakage that quantifies leakage for each $y \in \Y$ and is given by:
    \begin{align*}
        \mathfrak{L}(X\rightarrow y)=\log\max_{x \in \suport{\pi}}\frac{\delta_{x}^{y}}{\pi_{x}}=D_{\infty}(\delta_{y}\|\pi)=D_{\infty}(p_{X|y}\|\pi),
    \end{align*}
    where $D_{\infty}(\delta_{y}\|\pi)$ is R{\'e}yni divergence of order $\infty$.
    
    In~\cite{2023PML}, pointwise maximal leakage has been proposed in the $U$ function framework. However, the authors also proved that this framework is equivalent to the $g$-leakage framework.
    Here, we use generalized pointwise posterior leakage to propose the operational meaning of R{\'e}yni divergence and Sibson mutual information for a whole range of $\alpha \in [0,\infty]$.
    Then, pointwise maximal leakage is given as the special case of $\alpha=\infty$.
    
    \begin{definition}[\textbf{Pointwise \(\alpha\)-leakage}] \label{def:P-alpha-leak}
        Let $\ell_{\alpha}(t)=\exp(\frac{\alpha-1}{\alpha}t)$ with inverse $\ell_{\alpha}^{-1}(s)=\frac{\alpha}{\alpha-1}\log(s)$.
        For each $y \in \Y$ with inner $\delta^{y}$ given by the hyper $\Delta=[\pi,C]$, pointwise \(\alpha\)-leakage is defined as:
        \begin{align}\label{eq:P-alpha-leak} \nonumber      
            &\mathcal{I}_{\ell_{\alpha},\gamma}(\pi,\delta^{y}) \triangleq 
            \sup_{w\in\W} {\ell_{\alpha}}^{-1} \bigg( \sum_{x\in\X} \delta_{x}^{y} \ell_{\alpha}\big(\gamma(w,x)\big)\bigg)\\
            =&\begin{cases}
                \displaystyle \frac{\alpha}{\alpha-1}\log  \inf_{w\in\W} \sum_{x\in\X} \delta_{x}^{y} \left(\frac{w_{x}}{\pi_{x}}\right)^{\frac{\alpha-1}{\alpha}}, & \alpha \in [0,1),\\ 
                \displaystyle \frac{\alpha}{\alpha-1}\log  \sup_{w\in\W} \sum_{x\in\X}  \delta_{x}^{y}\left(\frac{w_{x}}{\pi_{x}}\right)^{\frac{\alpha-1}{\alpha}}, & \alpha \in [1,\infty].
            \end{cases}
        \end{align}
    \end{definition}

    \begin{theorem} 
        Pointwise \(\alpha\)-leakage is the R{\'e}nyi divergence of order $\alpha \in [0,\infty]$ between $\delta^{y}$ and $\pi$:
        \begin{align*}\label{eq:post=reyni}
             \mathcal{I}_{\ell_{\alpha},\gamma}(\pi,\delta^{y})&=\frac{1}{\alpha -1}\log\sum_{x\in\X}\left(\delta_{x}^{y}\right)^{\alpha}\pi_{x}^{\alpha-1}=D_{\alpha}\left(\delta^{y}\|\pi\right).
        \end{align*}
     \end{theorem}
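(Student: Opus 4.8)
The plan is to prove the identity by direct computation, reducing everything to a single constrained optimization over $w$. First I would substitute the explicit forms $\gamma(w,x)=\log(w_x/\pi_x)$ and $\ell_\alpha(t)=\exp(\frac{\alpha-1}{\alpha}t)$ into~\eqref{eq:P-alpha-leak}. Because $\ell_\alpha(\gamma(w,x))=(w_x/\pi_x)^{\frac{\alpha-1}{\alpha}}$ and $\ell_\alpha^{-1}(s)=\frac{\alpha}{\alpha-1}\log s$, the whole expression collapses to $\frac{\alpha}{\alpha-1}$ times the logarithm of the optimized sum $\sum_{x\in\X}\delta_x^y\,\pi_x^{-\frac{\alpha-1}{\alpha}}w_x^{\frac{\alpha-1}{\alpha}}$, where the optimization is a supremum for $\alpha\in[1,\infty]$ and an infimum for $\alpha\in[0,1)$. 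This sup/inf split is exactly the one already displayed in Definition~\ref{def:P-alpha-leak}, and it arises because $\ell_\alpha^{-1}$ is increasing for $\alpha>1$ and decreasing for $\alpha<1$, so that $\sup_w$ passes through $\ell_\alpha^{-1}$ either unchanged or flipped to $\inf_w$.

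The central step is to solve this optimization. Writing $s=\frac{\alpha-1}{\alpha}$ and $c_x=\delta_x^y\pi_x^{-s}\ge 0$, I must optimize $\sum_x c_x w_x^{s}$ subject to $\sum_x w_x=1$, $w_x\ge 0$. For $\alpha>1$ we have $s\in(0,1)$, so $w_x^s$ is concave and the objective is concave (a maximization); for $\alpha\in(0,1)$ we have $s<0$, so $w_x^s$ is convex and the objective is convex (a minimization). In both regimes a Lagrange-multiplier stationarity condition yields the same optimizer $w_x^\ast\propto c_x^{1/(1-s)}$. The key arithmetic fact driving the whole result is $\frac{1}{1-s}=\frac{1}{1-\frac{\alpha-1}{\alpha}}=\alpha$, so $w_x^\ast\propto c_x^{\alpha}=(\delta_x^y)^\alpha\pi_x^{1-\alpha}$, and the optimal value equals $\big(\sum_x c_x^{\alpha}\big)^{1-s}=\big(\sum_x(\delta_x^y)^\alpha\pi_x^{1-\alpha}\big)^{1/\alpha}$, using $1-s=1/\alpha$. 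Equivalently, this optimal value is immediate from Hölder's inequality (in its reverse form when $s<0$) applied to the factors $c_x$ and $w_x^s$ with conjugate exponents $\frac{1}{1-s}$ and $\frac1s$, which I could cite in place of verifying the Lagrange stationary point.

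Substituting the optimal value back and applying $\ell_\alpha^{-1}$ gives $\frac{\alpha}{\alpha-1}\cdot\frac{1}{\alpha}\log\sum_x(\delta_x^y)^\alpha\pi_x^{1-\alpha}=\frac{1}{\alpha-1}\log\sum_x(\delta_x^y)^\alpha\pi_x^{1-\alpha}$, which is precisely $D_\alpha(\delta^y\|\pi)$ by the definition of R\'enyi divergence. I would also note that the optimizer is supported exactly on $\{x:\delta_x^y>0\}$ and that $\delta_x^y>0$ forces $\pi_x>0$ (since $\delta_x^y=\pi_xC_{x,y}/p(y)$), so every quantity is well-defined on the support and no term diverges; this is also the support condition that makes the R\'enyi divergence finite.

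The main obstacle, and really the only nontrivial point, is justifying that the Lagrange stationary point is the \emph{global} optimizer with the correct sense (maximum for $\alpha>1$, minimum for $\alpha<1$), which rests on the concavity/convexity of $w_x^s$ and on confirming the optimum is interior rather than on the boundary of the simplex; invoking Hölder's inequality sidesteps this cleanly. The remaining loose ends are the boundary orders $\alpha\in\{0,1,\infty\}$, which the stationarity argument does not cover and which must instead be handled by continuous extension: as $\alpha\to 1$ the expression tends to the Kullback--Leibler divergence, as $\alpha\to\infty$ the optimizer concentrates on $\arg\max_x \delta_x^y/\pi_x$ giving $\log\max_x\delta_x^y/\pi_x$, and $\alpha\to 0$ gives $D_0$. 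Each limit matches the continuous extension of R\'enyi divergence used throughout the paper, so the identity holds on the full range $\alpha\in[0,\infty]$.
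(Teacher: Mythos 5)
Your proof is correct and follows essentially the same route as the paper's: the paper likewise reduces the claim to the convex optimization over $w$ on the simplex, quotes the optimizer $w^{*}_{x}\propto(\delta_{x}^{y})^{\alpha}/\pi_{x}^{\alpha-1}$ (by analogy with its Theorem on $\alpha$-vulnerability), and substitutes back, while you merely make the same steps self-contained via the Lagrange/H\"older argument and add the boundary cases $\alpha\in\{0,1,\infty\}$ by continuity. One small remark: your exponent $\pi_{x}^{1-\alpha}$ is the correct one and agrees with the paper's own definition of $D_{\alpha}(\delta^{y}\|\pi)$; the factor $\pi_{x}^{\alpha-1}$ in the theorem's displayed middle expression is a typo.
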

     \begin{proof}
        In a similar vein to the proof of Theorem~\ref{theo:alpha-vunle}, both optimizations in~\ref{eq:P-alpha-leak} are convex and have the solution:
         \begin{align*}
             w^{*}_{x}=\frac{ (\delta_{x}^{y})^{\alpha}/(\pi_{x})^{\alpha-1} }{ \sum_{x\in\X}(\delta_{x}^{y})^{\alpha}/(\pi_{x})^{\alpha-1} }.
         \end{align*}
         By replacing $w^{*}_{x}$ in~\eqref{eq:P-alpha-leak}, the result is achieved.
     \end{proof}

    \begin{proposition}
        For $h=\ell_{\alpha}$, the generalized average of pointwise \(\alpha\)-leakage is Sibson mutual information:  
    \end{proposition}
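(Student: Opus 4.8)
The plan is to proceed by direct computation, substituting $h=\ell_\alpha$ into the definition of the generalized average posterior leakage and exploiting a clean cancellation between the outer weights $p(y)$ and the normalization of the inner distributions $\delta^y$. First I would invoke the identity just established, namely $\mathcal{I}_{\ell_\alpha,\gamma}(\pi,\delta^y)=D_\alpha(\delta^y\|\pi)$, so that the quantity to evaluate is
\begin{align*}
    \mathcal{I}_{\ell_\alpha,\ell_\alpha,\gamma}(\pi,C)=\ell_\alpha^{-1}\bigg(\sum_{y\in\Y}p(y)\,\ell_\alpha\big(D_\alpha(\delta^y\|\pi)\big)\bigg).
\end{align*}

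The key algebraic step is to compute $\ell_\alpha\big(D_\alpha(\delta^y\|\pi)\big)$. Since $\ell_\alpha(t)=\exp(\tfrac{\alpha-1}{\alpha}t)$ and, by the definition of R\'enyi divergence, $D_\alpha(\delta^y\|\pi)=\tfrac{1}{\alpha-1}\log\sum_{x\in\X}(\delta_x^y)^\alpha/\pi_x^{\alpha-1}$, the factor $\tfrac{\alpha-1}{\alpha}$ cancels the leading $\tfrac{1}{\alpha-1}$ of the divergence, leaving
\begin{align*}
    \ell_\alpha\big(D_\alpha(\delta^y\|\pi)\big)=\bigg(\sum_{x\in\X}\frac{(\delta_x^y)^\alpha}{\pi_x^{\alpha-1}}\bigg)^{\frac{1}{\alpha}}.
\end{align*}
Next I would substitute $\delta_x^y=\pi_x C_{x,y}/p(y)$, which collapses $(\delta_x^y)^\alpha/\pi_x^{\alpha-1}$ to $\pi_x C_{x,y}^\alpha/p(y)^\alpha$ and hence turns the inner sum into $p(y)^{-\alpha}\sum_{x\in\X}\pi_x C_{x,y}^\alpha$; raising this to the power $1/\alpha$ extracts a factor $1/p(y)$. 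This is the crucial cancellation: the extracted $1/p(y)$ exactly annihilates the outer weight $p(y)$ of the $\ell_\alpha$-average, so that
\begin{align*}
    \sum_{y\in\Y}p(y)\,\ell_\alpha\big(D_\alpha(\delta^y\|\pi)\big)=\sum_{y\in\Y}\bigg(\sum_{x\in\X}\pi_x C_{x,y}^\alpha\bigg)^{\frac{1}{\alpha}}.
\end{align*}
Applying $\ell_\alpha^{-1}(s)=\tfrac{\alpha}{\alpha-1}\log s$ then reproduces exactly $I_\alpha^S(X;Y)$, establishing the identity for $\alpha\in(0,1)\cup(1,\infty)$.

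I expect the main obstacle to lie not in the algebra but at the boundary orders $\alpha\in\{0,1,\infty\}$, where both $\ell_\alpha$ and $D_\alpha$ are defined only by continuous extension and the factors $\tfrac{1}{\alpha-1}$ and $\tfrac{\alpha-1}{\alpha}$ become singular. I would handle these by a limiting argument: at $\alpha=1$ the $f$-mean degenerates to an ordinary expectation and the divergence becomes Kullback--Leibler, so a L'H\^opital computation recovers Shannon mutual information $I(X;Y)$; at $\alpha=\infty$ the pointwise leakage is $D_\infty(\delta^y\|\pi)$ and the average collapses to $I_\infty^S(X;Y)=\mathcal{ML}(C)$, consistent with pointwise maximal leakage being the $\alpha=\infty$ special case. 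Since every intermediate expression is continuous in $\alpha$ away from these points, the closed-form identity extends to the endpoints by continuity.
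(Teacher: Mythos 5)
Your proposal is correct and follows essentially the same route as the paper's proof: expand the $\ell_\alpha$-average of the pointwise leakage $D_\alpha(\delta^y\|\pi)$ via $\ell_\alpha\big(D_\alpha(\delta^y\|\pi)\big)=\big(\sum_{x}(\delta^y_x)^\alpha/\pi_x^{\alpha-1}\big)^{1/\alpha}$ and simplify to $I_\alpha^S(X;Y)$. The only differences are that you spell out the substitution $\delta^y_x=\pi_x C_{x,y}/p(y)$ and the resulting cancellation of the outer weight $p(y)$, which the paper leaves implicit, and you add a limiting treatment of the boundary orders $\alpha\in\{0,1,\infty\}$ that the paper omits.
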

    \begin{align}
        \mathcal{I}_{\ell_{\alpha},\ell_{\alpha},\gamma}(\pi,C)=I_{\alpha}^{S}(X;Y).
    \end{align}
    \begin{proof}
        We expand  $\mathcal{I}_{\ell_{\alpha},\ell_{\alpha},\gamma}(\pi,C)$ as
        \begin{align*}
            \mathcal{I}_{\ell_{\alpha},\ell_{\alpha},\gamma}(\pi,C)
            &=\ell_{\alpha}^{-1}\bigg(\sum_{y\in\Y}p(y)\ell_{\alpha}\Big(\mathcal{I}_{\ell_{\alpha},\gamma}\big(\pi,\delta^{y}\big)\Big) \bigg)\\
            &=\frac{\alpha}{\alpha-1}\log\Bigg(\sum_{y\in\Y}p(y)\bigg(\sum_{x\in\X}(\delta_{x}^{y})^{\alpha}\pi_{x}^{\alpha-1}\bigg)^{\frac{1}{\alpha}} \Bigg)\\
            &=I_{\alpha}^{S}(X;Y). \hspace{4.55cm} \qedhere
        \end{align*}
    \end{proof}
   
\section{Axiomatic Relations of Generalized Vulnerabilities}\label{sec:axioms}

    At first, we review the axioms of prior and posterior vulnerabilities given in~\cite{2016alvimaxioms}.
    Generic vulnerability measures have been defined as functions of the following types:
    \begin{equation}
        \begin{array}{rll}
            \text{Prior vulnerability}: & \mathbb{V}: \mathbb{D}\mathcal{X} \rightarrow \mathbb{R}^{+}, \nonumber \\
            \text{Posterior vulnerability}: & \widehat{\mathbb{V}}: \mathbb{D}^2\mathcal{X} \rightarrow \mathbb{R}^{+}.
        \end{array}
    \end{equation}
    
    The following axioms are adopted specifically for $\mathbb{V}$:
    \begin{itemize}
        \item \textbf{Continuity} (\CNTY): The vulnerability function \(\mathbb{V}\) is continuous with respect to \(\pi\) (in terms of the standard topology on \( \mathbb{D}\mathcal{X} \)).
        \item \textbf{Convexity}  (\CVX): The vulnerability function \( \mathbb{V} \) is convex in \(\pi\), meaning for all convex combinations 
        \(\sum_{i}a_{i} \pi^{i} \):
        \[\mathbb{V}\Big(\sum_{i} a_{i} \pi^{i}\Big) \leq \sum_{i} a_{i} \mathbb{V}\left(\pi^{i}\right).\]
        \item \textbf{Quasi-convexity} (\QCVX):  \( \mathbb{V} \) is quasi-convex in \( \pi \) where for all convex combinations \( \sum_{i} a_{i} \pi^{i} \):
        \[\mathbb{V}\Big(\sum_{i} a_{i} \pi^{i}\Big) \leq \max_{i} \mathbb{V}\left(\pi^{i}\right).\]
    \end{itemize}
    Based on these axioms, the following results were proven~\cite{2016alvimaxioms}.
    \begin{theorem}[\!\!{\cite[Prop. 2, Thm. 3, Cor. 4]{2016alvimaxioms}}]\label{thm:Vg-satisfy}
        Any \( g \)-vulnerability \( V_g \) satisfies \CNTY~and \CVX.
    \end{theorem}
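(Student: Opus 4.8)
The plan is to exploit the single structural fact that $V_g$ is a pointwise supremum of affine functions of the prior. Writing $g_w := (g(w,x))_{x\in\X}$ for the vector of gains under a guess $w$, we have
\[
V_g(\pi)=\sup_{w\in\W}\sum_{x\in\X}\pi_x\, g(w,x)=\sup_{w\in\W}\langle \pi, g_w\rangle ,
\]
so $V_g$ is exactly the support function of the set $\{g_w : w\in\W\}\subseteq\Real^{|\X|}$ restricted to the simplex $\Dist\X$. Both \CVX{} and \CNTY{} are standard consequences of this representation, and I would establish them in that order.

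For \CVX{} I would argue directly. Fix priors $\pi^1,\dots,\pi^k$ and convex weights $a_i\ge 0$ with $\sum_i a_i=1$. For every fixed $w$, linearity of $\pi\mapsto\langle\pi,g_w\rangle$ gives $\sum_{x}\big(\sum_i a_i\pi^i_x\big)g(w,x)=\sum_i a_i\langle \pi^i,g_w\rangle\le\sum_i a_i V_g(\pi^i)$, since each inner product is dominated by the corresponding supremum. Taking the supremum over $w$ on the left preserves the inequality, yielding $V_g(\sum_i a_i\pi^i)\le\sum_i a_i V_g(\pi^i)$, which is \CVX{} (and hence also \QCVX{}). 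No optimisation over $w$ is needed; this is just the ``supremum of convex functions is convex'' principle made explicit.

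For \CNTY{} I would first record that, as a supremum of the continuous (indeed affine) maps $\pi\mapsto\langle\pi,g_w\rangle$, $V_g$ is automatically lower semicontinuous on $\Dist\X$. The cleanest route to the matching upper bound is a Lipschitz estimate: using $|\sup_w A_w-\sup_w B_w|\le\sup_w|A_w-B_w|$ with $A_w=\langle\pi,g_w\rangle$ and $B_w=\langle\pi',g_w\rangle$, one obtains
\[
|V_g(\pi)-V_g(\pi')|\le\sup_{w\in\W}\Big|\sum_{x\in\X}(\pi_x-\pi'_x)g(w,x)\Big|\le \Big(\sup_{w,x}|g(w,x)|\Big)\,\|\pi-\pi'\|_1 ,
\]
so whenever the gains are bounded, $V_g$ is globally Lipschitz and therefore continuous on $\Dist\X$.

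The main obstacle is continuity when $g$ is unbounded yet $V_g$ is still finite everywhere (which, as the paper notes, only needs $\sup_w g(w,x)<\infty$ for each $x$). There the Lipschitz constant above is $+\infty$ and the estimate breaks precisely at the boundary of $\Dist\X$, where the supremum over an infinite $\W$ can concentrate. I would handle this by separating the relative interior from the boundary: on the relative interior a finite convex function is automatically continuous, so it remains only to prove upper semicontinuity at boundary priors. For this I would invoke the classical fact that a convex function finite on a polytope is upper semicontinuous relative to that polytope; combined with the lower semicontinuity already obtained from the supremum representation, this yields full continuity on $\Dist\X$. This boundary analysis is the only genuinely delicate step, since the convexity and the interior continuity fall out immediately from the affine-supremum structure.
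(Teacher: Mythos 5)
Your proposal is correct and takes essentially the same route as the proof the paper relies on (the cited Prop.~2, Thm.~3, Cor.~4 of~\cite{2016alvimaxioms}, mirrored in this paper's proof of Theorem~\ref{thm:Vfg axioms}): convexity and lower semicontinuity from the supremum-of-affine-functions representation, and continuity by combining that with the Gale--Klee--Rockafellar upper semicontinuity of finite convex functions on polytopes to handle the boundary of the simplex. Your bounded-gain Lipschitz estimate is a harmless shortcut for the easy case; the boundary argument you flag as the delicate step is exactly the step the cited proof resolves in the same way.
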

    \begin{theorem}[\!\!{\!\cite[Thm. 5]{2016alvimaxioms}}]\label{thm:bbV-Vg} 
        Let \(\mathbb{V}:\mathbb{D}\mathcal{X}\rightarrow\mathbb{R}^{+}\) be a vulnerability function satisfying \CNTY~and \CVX. Then there exists a gain function \(g\) with a countable number of guesses such that \(\mathbb{V}\!=\!V_{g}\).
    \end{theorem}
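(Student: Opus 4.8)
The plan is to prove the converse representation by convex duality. Note first that for any gain function, $V_g(\pi)=\sup_{w\in\W}\sum_{x\in\X}\pi_x g(w,x)$ is a supremum of functionals that are \emph{linear} in $\pi$, so $V_g$ is automatically convex and lower semicontinuous; the content of the theorem is the reverse construction. The key observation is that on the simplex $\mathbb{D}\X$ every affine function coincides with a linear functional: using $\sum_{x\in\X}\pi_x=1$, an affine map $\pi\mapsto\sum_{x\in\X}c_x\pi_x+d$ equals $\pi\mapsto\sum_{x\in\X}(c_x+d)\pi_x$. Thus an affine minorant of $\mathbb{V}$ is exactly an inner sum $\sum_{x\in\X}\pi_x g(w,x)$ for the choice $g(w,x)=c_x+d$, and the whole theorem reduces to writing $\mathbb{V}$ as a supremum of countably many affine minorants.

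First I would establish the (uncountable) representation $\mathbb{V}(\pi)=\sup\{L(\pi): L\text{ affine on }\Real^{|\X|},\ L\le\mathbb{V}\text{ on }\mathbb{D}\X\}$. To handle the boundary of the simplex as cleanly as its interior, extend $\mathbb{V}$ to all of $\Real^{|\X|}$ by setting it to $+\infty$ off $\mathbb{D}\X$. Because $\mathbb{V}$ is \CVX~and \CNTY~on the compact simplex, this extension is proper, convex, and lower semicontinuous, so the Fenchel--Moreau biconjugate theorem gives $\mathbb{V}=\mathbb{V}^{**}$, and $\mathbb{V}^{**}(\pi)$ is precisely the supremum of all affine functions dominated by $\mathbb{V}$. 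This is where the two axioms enter: convexity makes the biconjugate agree with $\mathbb{V}$, while continuity guarantees lower semicontinuity of the $+\infty$-extension (approaching a boundary point from inside the simplex the values converge to $\mathbb{V}$ by \CNTY, from outside they are $+\infty$), which is exactly what Fenchel--Moreau requires.

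Next I would reduce to a countable family of guesses. Given any affine minorant with real coefficients $c_x$, replace each $c_x$ by a rational $c_x'\le c_x$ with $c_x'\ge c_x-\eps$; the perturbed functional is still dominated by $\mathbb{V}$ (we only lowered coefficients) and satisfies $\sum_{x\in\X}c_x'\pi_x\ge\sum_{x\in\X}c_x\pi_x-\eps$ for every $\pi\in\mathbb{D}\X$. Hence the supremum over the countable set of rational affine minorants is within $\eps$ of the supremum over all affine minorants for every $\eps>0$, so the two suprema coincide and equal $\mathbb{V}$. Indexing $\W$ by this countable collection and setting $g(w,x)$ equal to the $w$-th rational coefficient vector yields $V_g=\mathbb{V}$, with $\W$ countable as required; non-negativity of $V_g=\mathbb{V}$ is inherited from $\mathbb{V}:\mathbb{D}\X\to\RealP$, so no extra positivity condition on $g$ need be imposed.

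The main obstacle I anticipate is controlling the boundary of the simplex: affine minorants obtained from subgradients can have slopes that blow up as one approaches lower-dimensional faces, so a naive ``take supporting hyperplanes at a countable dense set of points'' argument needs a uniform bound on slopes that may fail near the boundary. Routing the representation through the $+\infty$-extension and Fenchel--Moreau sidesteps this entirely, since the biconjugate identity holds verbatim on the closed simplex. The only remaining care is the rational-coefficient approximation, which is harmless because lowering coefficients preserves the minorant property uniformly over the bounded simplex.
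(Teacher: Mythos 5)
This theorem is imported by the paper from~\cite[Thm.~5]{2016alvimaxioms} without an internal proof, so there is nothing in the paper itself to compare against line by line. Your argument is correct and follows essentially the same route as the original source: represent the continuous convex $\mathbb{V}$ as the supremum of its affine minorants (which on $\mathbb{D}\mathcal{X}$ may be taken linear by absorbing the constant via $\sum_{x}\pi_x=1$), then thin to rational coefficient vectors to get a countable guess set $\mathcal{W}$ with $V_g=\mathbb{V}$; your Fenchel--Moreau detour through the $+\infty$-extension is simply a clean packaging of the supporting-hyperplane step that also disposes of the boundary-of-simplex issue you correctly flag.
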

    According to \Thm{thm:Vg-satisfy} and \Thm{thm:bbV-Vg}, without loss of generality, we consider \( V_{g}(\pi) \) as the definition of prior vulnerability.
    
    The following axioms are adopted specifically for posterior vulnerability.
    \begin{itemize}
        \item \textbf{Non-interference} (\NI): The vulnerability of a point-hyper equals the vulnerability of the unique inner distribution of that hyper:
        \[\forall \pi: \quad \widehat{\mathbb{V}}[\pi] = \mathbb{V}(\pi).\]
        \item \textbf{Data-processing inequality} (\DPI): Post-processing does not increase vulnerability:
        \[\forall \pi, C, R: \quad \widehat{\mathbb{V}}[\pi, C] \geq \widehat{\mathbb{V}}[\pi, CR],\]
        where $R$ is any valid channel.
        \item \textbf{Monotonicity} (\MONO): Pushing a prior through a channel does not decrease vulnerability:
        \[\forall \pi, C: \quad \widehat{\mathbb{V}}[\pi, C] \geq \mathbb{V}(\pi).\]
    \end{itemize}
    \begin{figure}[htbp]
        \centering   
        \subfigure[\label{fig:AVGa} Defining $\widehat{\mathbb{V}}$ as \AVG]{\tikzset{every picture/.style={line width=0.75pt}} 

\begin{tikzpicture}[x=0.75pt,y=0.75pt,yscale=-1,xscale=1]

\draw   (284,51) -- (312,51) -- (312,65) -- (284,65) -- cycle ;
\draw   (284,72) -- (312,72) -- (312,86) -- (284,86) -- cycle ;
\draw   (284,101) -- (312,101) -- (312,115) -- (284,115) -- cycle ;
\draw   (324,107) -- (352,107) -- (352,121) -- (324,121) -- cycle ;
\draw  (244,107) -- (272,107) -- (272,121) -- (244,121) -- cycle ;
\draw    (298,86.12) -- (298,98.12) ;
\draw [shift={(298,101.12)}, rotate = 270] [fill={rgb, 255:red, 0; green, 0; blue, 0 } ][line width=0.08]  [draw opacity=0] (3.57,-1.72) -- (0,0) -- (3.57,1.72) -- cycle;

\draw    (312,58) .. controls (321,60) and (338,62) .. (338,103) ;
\draw [shift={(338,106.56)}, rotate = 268.99] [fill={rgb, 255:red, 0; green, 0; blue, 0 }  ][line width=0.08]  [draw opacity=0] (3.57,-1.72) -- (0,0) -- (3.57,1.72) -- cycle;

\draw    (258,107) .. controls (258,62) and (276,60) .. (281,58) ;
\draw [shift={(284.3,58)}, rotate = 173.52] [fill={rgb, 255:red, 0; green, 0; blue, 0 }  ][line width=0.08]  [draw opacity=0] (3.57,-1.72) -- (0,0) -- (3.57,1.72) -- cycle;

\draw    (338,121.88) .. controls (334.62,137.16) and (273.79,138.05) .. (259,123.58) ;
\draw [shift={(258,121.43)}, rotate = 55.78] [fill={rgb, 255:red, 0; green, 0; blue, 0 }  ][line width=0.08]  [draw opacity=0] (3.57,-1.72) -- (0,0) -- (3.57,1.72) -- cycle;

\draw  [color={rgb, 255:red, 0; green, 0; blue, 0 } ,draw opacity=1 ][fill={rgb, 255:red, 255; green, 255; blue, 255 } ,fill opacity=1 ] (274.96,131.24) .. controls (274.97,130.43) and (276.04,129.28) .. (277.45,129.29) .. controls (278.86,129.30) and (279.91,130.43) .. (279.90,131.84) .. controls (279.89,133.25) and (278.82,134.40) .. (277.41,134.39) .. controls (276.00,134.38) and (274.95,133.25) .. (274.96,131.74) -- cycle ;

\draw    (280,131.24) .. controls (278.55,132.88) and (296.8,129.38) .. (297.8,115.38) ;
\draw    (268,72) .. controls (268.8,76) and (278.05,79) .. (284,78.88) ;
\draw    (312,79) .. controls (316,78) and (334,84) .. (337,91) ;

\draw  [color={rgb, 255:red, 0; green, 0; blue, 0 } ,draw opacity=1 ][fill={rgb, 255:red, 255; green, 255; blue, 255 } ,fill opacity=1 ] (334.05,91.34) .. controls (334.06,90.03) and (335.13,88.88) .. (336.54,88.89) .. controls (337.95,88.9) and (339,90.03) .. (338.99,91.44) .. controls (338.98,92.85) and (337.91,94.00) .. (336.50,93.99) .. controls (335.09,93.98) and (334.04,92.85) .. (334.05,91.34) -- cycle ;

\draw  [color={rgb, 255:red, 0; green, 0; blue, 0 } ,draw opacity=1 ][fill={rgb, 255:red, 255; green, 255; blue, 255 } ,fill opacity=1 ] (264.31,69.32) .. controls (264.32,68.51) and (265.39,67.36) .. (266.80,67.37) .. controls (268.21,67.38) and (269.26,68.51) .. (269.25,69.92) .. controls (269.24,71.33) and (268.17,72.48) .. (266.76,72.47) .. controls (265.35,72.46) and (264.30,71.33) .. (264.31,69.82) -- cycle ;


\draw (298.5,58.4) node [anchor=center][inner sep=0.75pt] [align=center] {\fontfamily{ptm}\selectfont {\scriptsize \CVX}};
\draw (298.5,78.25) node [anchor=center][inner sep=0.75pt] [align=center] {\scriptsize {\fontfamily{ptm}\selectfont \AVG}};
\draw (298.5,108) node [anchor=center][inner sep=0.75pt] [align=center] {\fontfamily{ptm}\selectfont {\scriptsize \NI}};
\draw (339,114.5) node [anchor=center][inner sep=0.75pt] [align=center] {\fontfamily{ptm}\selectfont {\scriptsize \DPI}};
\draw (258,114.5) node [anchor=center][inner sep=0.75pt] [align=center] {\fontfamily{ptm}\selectfont {\scriptsize \MONO}};

\end{tikzpicture}}
        \subfigure[\label{fig:AVG} Defining $\widehat{\mathbb{V}}$ as \MAX]{\tikzset{every picture/.style={line width=0.75pt}} 

\begin{tikzpicture}[x=0.75pt,y=0.75pt,yscale=-1,xscale=1]

\draw   (284,51) -- (312,51) -- (312,65) -- (284,65) -- cycle ;
\draw   (284,72) -- (312,72) -- (312,86) -- (284,86) -- cycle ;
\draw   (284,101) -- (312,101) -- (312,115) -- (284,115) -- cycle ;
\draw   (324,107) -- (352,107) -- (352,121) -- (324,121) -- cycle ;
\draw  (244,107) -- (272,107) -- (272,121) -- (244,121) -- cycle ;
\draw    (298,86.12) -- (298,98.12) ;
\draw [shift={(298,101.12)}, rotate = 270] [fill={rgb, 255:red, 0; green, 0; blue, 0 } ][line width=0.08]  [draw opacity=0] (3.57,-1.72) -- (0,0) -- (3.57,1.72) -- cycle;

\draw    (312,58) .. controls (321,60) and (338,62) .. (338,103) ;
\draw [shift={(338,106.56)}, rotate = 268.99] [fill={rgb, 255:red, 0; green, 0; blue, 0 }  ][line width=0.08]  [draw opacity=0] (3.57,-1.72) -- (0,0) -- (3.57,1.72) -- cycle;

\draw    (258,107) .. controls (258,62) and (276,60) .. (281,58) ;
\draw [shift={(284.3,58)}, rotate = 173.52] [fill={rgb, 255:red, 0; green, 0; blue, 0 }  ][line width=0.08]  [draw opacity=0] (3.57,-1.72) -- (0,0) -- (3.57,1.72) -- cycle;

\draw    (338,121.88) .. controls (334.62,137.16) and (273.79,138.05) .. (259,123.58) ;
\draw [shift={(258,121.43)}, rotate = 55.78] [fill={rgb, 255:red, 0; green, 0; blue, 0 }  ][line width=0.08]  [draw opacity=0] (3.57,-1.72) -- (0,0) -- (3.57,1.72) -- cycle;

\draw  [color={rgb, 255:red, 0; green, 0; blue, 0 } ,draw opacity=1 ][fill={rgb, 255:red, 255; green, 255; blue, 255 } ,fill opacity=1 ] (274.96,131.24) .. controls (274.97,130.43) and (276.04,129.28) .. (277.45,129.29) .. controls (278.86,129.30) and (279.91,130.43) .. (279.90,131.84) .. controls (279.89,133.25) and (278.82,134.40) .. (277.41,134.39) .. controls (276.00,134.38) and (274.95,133.25) .. (274.96,131.74) -- cycle ;

\draw    (280,131.24) .. controls (278.55,132.88) and (296.8,129.38) .. (297.8,115.38) ;
\draw    (268,72) .. controls (268.8,76) and (278.05,79) .. (284,78.88) ;
\draw    (312,79) .. controls (316,78) and (334,84) .. (337,91) ;

\draw  [color={rgb, 255:red, 0; green, 0; blue, 0 } ,draw opacity=1 ][fill={rgb, 255:red, 255; green, 255; blue, 255 } ,fill opacity=1 ] (334.05,91.34) .. controls (334.06,90.03) and (335.13,88.88) .. (336.54,88.89) .. controls (337.95,88.9) and (339,90.03) .. (338.99,91.44) .. controls (338.98,92.85) and (337.91,94.00) .. (336.50,93.99) .. controls (335.09,93.98) and (334.04,92.85) .. (334.05,91.34) -- cycle ;

\draw  [color={rgb, 255:red, 0; green, 0; blue, 0 } ,draw opacity=1 ][fill={rgb, 255:red, 255; green, 255; blue, 255 } ,fill opacity=1 ] (264.31,69.32) .. controls (264.32,68.51) and (265.39,67.36) .. (266.80,67.37) .. controls (268.21,67.38) and (269.26,68.51) .. (269.25,69.92) .. controls (269.24,71.33) and (268.17,72.48) .. (266.76,72.47) .. controls (265.35,72.46) and (264.30,71.33) .. (264.31,69.82) -- cycle ;


\draw (298.5,58.4) node [anchor=center][inner sep=0.75pt] [align=center] {\fontfamily{ptm}\selectfont {\scriptsize \QCVX}};
\draw (298.5,78.25) node [anchor=center][inner sep=0.75pt] [align=center] {\scriptsize {\fontfamily{ptm}\selectfont \MAX}};
\draw (298.5,108) node [anchor=center][inner sep=0.75pt] [align=center] {\fontfamily{ptm}\selectfont {\scriptsize \NI}};
\draw (339,114.5) node [anchor=center][inner sep=0.75pt] [align=center] {\fontfamily{ptm}\selectfont {\scriptsize \DPI}};
\draw (258,114.5) node [anchor=center][inner sep=0.75pt] [align=center] {\fontfamily{ptm}\selectfont {\scriptsize \MONO}};

\end{tikzpicture}}
        \caption{Implications of axioms. The merging arrows indicate joint implication: for example, on the left-hand side, we have that \MONO+\AVG~imply \CVX~\cite{2016alvimaxioms}.\label{fig:implications} } 
    \end{figure}
    \begin{figure}[htbp]
        \centering   
        \subfigure[\label{fig:CCV} Defining $\widehat{\mathbb{V}}$ as \AVG]{\tikzset{every picture/.style={line width=0.75pt}} 

\begin{tikzpicture}[x=0.75pt,y=0.75pt,yscale=-1,xscale=1]

\draw   (284,51) -- (312,51) -- (312,65) -- (284,65) -- cycle ;
\draw   (284,72) -- (312,72) -- (312,86) -- (284,86) -- cycle ;
\draw   (324,107) -- (352,107) -- (352,121) -- (324,121) -- cycle ;
\draw  (244,107) -- (272,107) -- (272,121) -- (244,121) -- cycle ;

\draw    (298,86.12) -- (298,116) ;


\draw    (312,58) .. controls (321,60) and (338,62) .. (338,103) ;
\draw [shift={(338,106.56)}, rotate = 268.99] [fill={rgb, 255:red, 0; green, 0; blue, 0 }  ][line width=0.08]  [draw opacity=0] (3.57,-1.72) -- (0,0) -- (3.57,1.72) -- cycle;

\draw    (258,107) .. controls (258,62) and (276,60) .. (281,58) ;
\draw [shift={(284.3,58)}, rotate = 173.52] [fill={rgb, 255:red, 0; green, 0; blue, 0 }  ][line width=0.08]  [draw opacity=0] (3.57,-1.72) -- (0,0) -- (3.57,1.72) -- cycle;

\draw    (338,121.88) .. controls (334.62,137.16) and (273.79,138.05) .. (259,123.58) ;
\draw [shift={(258,121.43)}, rotate = 55.78] [fill={rgb, 255:red, 0; green, 0; blue, 0 }  ][line width=0.08]  [draw opacity=0] (3.57,-1.72) -- (0,0) -- (3.57,1.72) -- cycle;

\draw    (280,131.24) .. controls (278.55,132.88) and (296.8,129.38) .. (298,116) ;
\draw    (268,72) .. controls (268.8,76) and (278.05,79) .. (284,78.88) ;
\draw    (312,79) .. controls (316,78) and (334,84) .. (337,91) ;

\draw  [color={rgb, 255:red, 0; green, 0; blue, 0 } ,draw opacity=1 ][fill={rgb, 255:red, 255; green, 255; blue, 255 } ,fill opacity=1 ] (274.96,131.24) .. controls (274.97,130.43) and (276.04,129.28) .. (277.45,129.29) .. controls (278.86,129.30) and (279.91,130.43) .. (279.90,131.84) .. controls (279.89,133.25) and (278.82,134.40) .. (277.41,134.39) .. controls (276.00,134.38) and (274.95,133.25) .. (274.96,131.74) -- cycle ;

\draw  [color={rgb, 255:red, 0; green, 0; blue, 0 } ,draw opacity=1 ][fill={rgb, 255:red, 255; green, 255; blue, 255 } ,fill opacity=1 ] (334.05,91.34) .. controls (334.06,90.03) and (335.13,88.88) .. (336.54,88.89) .. controls (337.95,88.9) and (339,90.03) .. (338.99,91.44) .. controls (338.98,92.85) and (337.91,94.00) .. (336.50,93.99) .. controls (335.09,93.98) and (334.04,92.85) .. (334.05,91.34) -- cycle ;

\draw  [color={rgb, 255:red, 0; green, 0; blue, 0 } ,draw opacity=1 ][fill={rgb, 255:red, 255; green, 255; blue, 255 } ,fill opacity=1 ] (264.31,69.32) .. controls (264.32,68.51) and (265.39,67.36) .. (266.80,67.37) .. controls (268.21,67.38) and (269.26,68.51) .. (269.25,69.92) .. controls (269.24,71.33) and (268.17,72.48) .. (266.76,72.47) .. controls (265.35,72.46) and (264.30,71.33) .. (264.31,69.82) -- cycle ;

\draw (298.5,58.4) node [anchor=center][inner sep=0.75pt] [align=center] {\fontfamily{ptm}\selectfont {\scriptsize \CCVX}};
\draw (298.5,78.25) node [anchor=center][inner sep=0.75pt] [align=center] {\scriptsize {\fontfamily{ptm}\selectfont \GAVG}};
\draw (339,114.5) node [anchor=center][inner sep=0.75pt] [align=center] {\fontfamily{ptm}\selectfont {\scriptsize \DPI}};
\draw (258,114.5) node [anchor=center][inner sep=0.75pt] [align=center] {\fontfamily{ptm}\selectfont {\scriptsize \CIV}};

\end{tikzpicture}}
        \subfigure[\label{fig:CCVMAX} Defining $\widehat{\mathbb{V}}$ as \MAX]{\tikzset{every picture/.style={line width=0.75pt}} 

\begin{tikzpicture}[x=0.75pt,y=0.75pt,yscale=-1,xscale=1]

\draw   (284,51) -- (312,51) -- (312,65) -- (284,65) -- cycle ;
\draw   (284,72) -- (312,72) -- (312,86) -- (284,86) -- cycle ;
\draw   (324,107) -- (352,107) -- (352,121) -- (324,121) -- cycle ;
\draw  (244,107) -- (272,107) -- (272,121) -- (244,121) -- cycle ;
\draw    (298,86.12) -- (298,116) ;

\draw    (312,58) .. controls (321,60) and (338,62) .. (338,103) ;
\draw [shift={(338,106.56)}, rotate = 268.99] [fill={rgb, 255:red, 0; green, 0; blue, 0 }  ][line width=0.08]  [draw opacity=0] (3.57,-1.72) -- (0,0) -- (3.57,1.72) -- cycle;

\draw    (258,107) .. controls (258,62) and (276,60) .. (281,58) ;
\draw [shift={(284.3,58)}, rotate = 173.52] [fill={rgb, 255:red, 0; green, 0; blue, 0 }  ][line width=0.08]  [draw opacity=0] (3.57,-1.72) -- (0,0) -- (3.57,1.72) -- cycle;

\draw    (338,121.88) .. controls (334.62,137.16) and (273.79,138.05) .. (259,123.58) ;
\draw [shift={(258,121.43)}, rotate = 55.78] [fill={rgb, 255:red, 0; green, 0; blue, 0 }  ][line width=0.08]  [draw opacity=0] (3.57,-1.72) -- (0,0) -- (3.57,1.72) -- cycle;

\draw  [color={rgb, 255:red, 0; green, 0; blue, 0 } ,draw opacity=1 ][fill={rgb, 255:red, 255; green, 255; blue, 255 } ,fill opacity=1 ] (274.96,131.24) .. controls (274.97,130.43) and (276.04,129.28) .. (277.45,129.29) .. controls (278.86,129.30) and (279.91,130.43) .. (279.90,131.84) .. controls (279.89,133.25) and (278.82,134.40) .. (277.41,134.39) .. controls (276.00,134.38) and (274.95,133.25) .. (274.96,131.74) -- cycle ;

\draw    (280,131.24) .. controls (278.55,132.88) and (296.8,129.38) .. (298,116) ;
\draw    (268,72) .. controls (268.8,76) and (278.05,79) .. (284,78.88) ;
\draw    (312,79) .. controls (316,78) and (334,84) .. (337,91) ;

\draw  [color={rgb, 255:red, 0; green, 0; blue, 0 } ,draw opacity=1 ][fill={rgb, 255:red, 255; green, 255; blue, 255 } ,fill opacity=1 ] (334.05,91.34) .. controls (334.06,90.03) and (335.13,88.88) .. (336.54,88.89) .. controls (337.95,88.9) and (339,90.03) .. (338.99,91.44) .. controls (338.98,92.85) and (337.91,94.00) .. (336.50,93.99) .. controls (335.09,93.98) and (334.04,92.85) .. (334.05,91.34) -- cycle ;

\draw  [color={rgb, 255:red, 0; green, 0; blue, 0 } ,draw opacity=1 ][fill={rgb, 255:red, 255; green, 255; blue, 255 } ,fill opacity=1 ] (264.31,69.32) .. controls (264.32,68.51) and (265.39,67.36) .. (266.80,67.37) .. controls (268.21,67.38) and (269.26,68.51) .. (269.25,69.92) .. controls (269.24,71.33) and (268.17,72.48) .. (266.76,72.47) .. controls (265.35,72.46) and (264.30,71.33) .. (264.31,69.82) -- cycle ;


\draw (298.5,58.4) node [anchor=center][inner sep=0.75pt] [align=center] {\fontfamily{ptm}\selectfont {\scriptsize \QCVX}};
\draw (298.5,78.25) node [anchor=center][inner sep=0.75pt] [align=center] {\scriptsize {\fontfamily{ptm}\selectfont \MAX}};
\draw (339,114.5) node [anchor=center][inner sep=0.75pt] [align=center] {\fontfamily{ptm}\selectfont {\scriptsize \DPI}};
\draw (258,114.5) node [anchor=center][inner sep=0.75pt] [align=center] {\fontfamily{ptm}\selectfont {\scriptsize \CIV}};

\end{tikzpicture}}
        \caption{Implications of axioms in~\cite{2020CondEntropyAxiom}. \label{fig:CCVaxiom}} 
    \end{figure}
    
    It has been shown that both \AVG~and \MAX~definitions of posterior $g$-vulnerability satisfy the \NI~axiom. Then, for \AVG, the axioms of \CVX, \MONO, and \DPI~axioms are equivalent, and for \MAX, the \QCVX, \MONO, and \DPI~ are equivalent. These results are shown in Fig.~\ref{fig:implications}~\cite[Fig. 2]{2016alvimaxioms}. 
    The same axioms hold for uncertainty (entropy) measures by replacing convexity with concavity and quasi-convexity with 
    quasi-concavity. Also, note that the \MAX~will be replaced with \MIN.

    In~\cite{2020CondEntropyAxiom}, the convexity axiom was relaxed to core-convexity (\CCVX)\footnote{In~\cite{2020CondEntropyAxiom}, uncertainty rather than vulnerability was considered. We replaced vulnerability axioms for the sake of consistency.}, 
    and the standard notion of averaging was generalized to \textit{generalized averaging} (\GAVG). 
    Additionally, the axioms of \NI~and \MONO~were combined into a single axiom called \textit{conditioning increases vulnerability} (\CIV), resulting in the revised set of axioms illustrated in Figure~\ref{fig:CCVaxiom}. 
    This axiomatization was further developed in~\cite{2021ConvCorConV}, where a broader framework for posterior vulnerabilities was introduced. 
    This framework employs a limit construction over sequences of core-convex vulnerabilities, demonstrating that quasiconcave functions emerge as such limits. 
    The generalized framework eliminates the dichotomy between \AVG~and \MAX~by encompassing both within a unified set of axioms. 
    
    Despite the generalizations proposed in~\cite{2020CondEntropyAxiom,2021ConvCorConV}, we adhere to the conventional set of axioms introduced in~\cite{2016alvimaxioms} and retain the dichotomy for simplicity. 
    While core-convexity is a reasonable assumption for vulnerability measures, it does not inherently guarantee convexity, which may be essential for certain applications. 
    
    \subsection{Axioms of Generalized Vulnerabilities}

        We first study the axiomatic relations of the generalized prior vulnerability.
        
        \begin{theorem}\label{thm:Vfg axioms}
            $V_{f,g}(\pi)$ satisfies axioms of prior vulnerability. 
        \end{theorem}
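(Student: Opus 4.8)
The two axioms to establish are \CNTY{} and \CVX{}; by \Thm{thm:Vg-satisfy} and \Thm{thm:bbV-Vg} these are precisely the defining properties of a prior vulnerability, so verifying them suffices. Throughout, for fixed $w\in\W$ I write $L_w(\pi) := \sum_{x\in\X}\pi_x\,f\big(g(w,x)\big)$, which is \emph{affine} in $\pi$, so that $V_{f,g}(\pi)=\sup_{w\in\W} f^{-1}\big(L_w(\pi)\big)$.

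For \CVX{} I would argue directly and without any case split. Since $f^{-1}$ is convex by hypothesis and $L_w$ is affine, the composition $\pi\mapsto f^{-1}\big(L_w(\pi)\big)$ is convex for every fixed $w$ (a convex function precomposed with an affine map is convex). The pointwise supremum of a family of convex functions is convex, hence $V_{f,g}=\sup_{w} f^{-1}\circ L_w$ is convex in $\pi$. This is where the standing assumption on $f^{-1}$ is used, and it is exactly what is needed.

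For \CNTY{} I would reduce to the ordinary $g$-vulnerability, whose continuity on all of $\Db\X$ (including the boundary) is guaranteed by \Thm{thm:Vg-satisfy}. Here the monotonicity of $f^{-1}$ lets me pull the supremum through $f^{-1}$. Recall that convexity of $f^{-1}$ splits into two cases. If $f^{-1}$ is increasing, then by continuity and monotonicity $V_{f,g}(\pi)=f^{-1}\big(\sup_{w}L_w(\pi)\big)=f^{-1}\big(V_{\tilde g}(\pi)\big)$ with the gain $\tilde g := f\circ g$; if $f^{-1}$ is decreasing, the supremum becomes an infimum, and writing $\hat g := -\,f\circ g$ gives $\inf_{w}L_w(\pi)=-V_{\hat g}(\pi)$, so that $V_{f,g}(\pi)=f^{-1}\big(-V_{\hat g}(\pi)\big)$. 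In either case $V_{\tilde g}$ (resp. $V_{\hat g}$) is a supremum of affine functions and therefore continuous on the whole simplex by \Thm{thm:Vg-satisfy}, while $f^{-1}$ is continuous; composing continuous maps yields continuity of $V_{f,g}$ everywhere on $\Db\X$. The same reduction re-derives \CVX{}: in the increasing case $f^{-1}$ convex nondecreasing composed with the convex $V_{\tilde g}$ is convex, and in the decreasing case $f^{-1}$ convex nonincreasing composed with the concave $-V_{\hat g}$ is convex.

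The only delicate step is the decreasing branch of the continuity argument: one must observe that the outer supremum turns into an infimum when moved inside a decreasing $f^{-1}$, re-express that infimum of affine maps as $-V_{\hat g}$ via the sign-flipped gain, and then invoke continuity of $V_{\hat g}$. I would emphasise that interchanging $f^{-1}$ with the sup/inf is justified solely by continuity and strict monotonicity of $f^{-1}$, so it does not require the optimal $w$ to be attained; and that I am using only the continuity and convexity of $V_{\tilde g}, V_{\hat g}$ as suprema of affine functions, which hold regardless of the non-negativity normalisation of the underlying gain. The increasing branch, and the whole \CVX{} argument, are then routine.
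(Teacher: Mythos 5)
Your proposal is correct, and its convexity half is essentially the paper's own argument: the paper also fixes $w$, notes that $\pi\mapsto\sum_{x\in\X}\pi_{x}f\big(g(w,x)\big)$ is affine, uses convexity of $f^{-1}$ (written out there as an explicit Jensen-type inequality rather than quoted as the convex-composed-with-affine rule), and concludes by taking the supremum over a family of convex functions, citing \cite[Prop.~2, Thm.~3]{2016alvimaxioms} for the fact that this supremum is convex and continuous. Where you genuinely diverge is the continuity argument. The paper treats each map $\pi \mapsto f^{-1}\big(\sum_{x\in\X}\pi_{x}f(g(w,x))\big)$ as continuous and convex and lets the cited result absorb the passage to the supremum; you instead pull $f^{-1}$ through the supremum (with the sup/inf flip in the decreasing branch), rewriting $V_{f,g}$ as $f^{-1}\big(V_{\tilde g}\big)$ with $\tilde g = f\circ g$, or as $f^{-1}\big(-V_{\hat g}\big)$ with $\hat g = -f\circ g$, and then invoke \Thm{thm:Vg-satisfy} together with continuity of $f^{-1}$. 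This reduction is a clean uniform device: it derives both \CNTY~and \CVX~from the classical theorem on ordinary $g$-vulnerabilities, and it makes visible exactly where strict monotonicity versus convexity of $f^{-1}$ enter; moreover, the interchange of $\sup_{w\in\W}$ with $f^{-1}$ is a move the paper itself licenses in its remark on the case $h=f$. The price is the caveat you correctly flag: $\tilde g$ and $\hat g$ need not respect the non-negativity normalisation of gains, so \Thm{thm:Vg-satisfy} must be read as a statement about suprema of affine maps rather than about normalised vulnerabilities --- but this is the same reliance the paper has, since its citation of \cite[Prop.~2, Thm.~3]{2016alvimaxioms} plays precisely that role for the only delicate point, namely continuity of an infinite supremum.
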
  
        \begin{proof}
            We follow similar steps in~\cite[Sec. IV.A]{2016alvimaxioms}. 
            Let $$f\!\circ\!{g}_{w}(\pi)=f^{-1}\bigg(\sum_{x\in\X}\pi_{x}f\big(g(w,x)\big)\bigg),$$ 
            which is the generalized expected gain for a specific guess $w$. 
            Consider $\pi=\sum_{i}a_{i}\pi^{i}$ for some priors $\pi^{1},\cdots, \pi^{n}$ and non-negative reals $a_{1},a_{2},\cdots,a_{n}$ such that $\sum_{i}a_i=1$. Due to the convexity of $f^{-1}$ (assumed in def.~\ref{def:generalprior}), we have:
            \begin{align*}
                &f\!\circ\!{g}_{w}(\pi)=f^{-1}\bigg( \sum_{x\in\X} \Big(\sum_{i}a_{i}\pi^{i}_{x}\Big)f\big(g(w,x)\big)\bigg)\\
                &= f^{-1}\bigg(\sum_{i}a_{i}\sum_{x\in\X}\pi^{i}_{x}f\big(g(w,x)\big)\bigg)\\
                &\leq \sum_{i} a_{i}f^{-1}\bigg(\sum_{x\in\X}\pi^{i}_{x}f\big(g(w,x)\big)\bigg)
                =\sum_{i} a_{i} f\!\circ\!g_{w}(\pi^{i}).
            \end{align*}
            Therefore, $f\!\circ\!{g}_{w}(\pi)$ is convex and continuous w.r.t to  $\pi$. Accordingly, $V_{f,g}(\pi)=\sup_{w\in\W}f\!\circ\!{g}_{w}(\pi)$ is the supremum over a family of convex and continuous functions, so it is convex and continuous due to the argument in~\cite[Prop. 2, Thm. 3]{2016alvimaxioms}. 
            The convexity of $V_{f,g}(\pi)$ implies  \textbf{quasi-convexity}. 
        \end{proof}
        \begin{remark}
             For the generalized uncertainty measure, $f^{-1}$ should be concave to keep the concavity of $U_{f,l}(\pi)$. 
        \end{remark}
        \begin{remark}
            Note that the assumption of $f^{-1}$ convexity is a sufficient condition that makes the proof straightforward and may not be necessary. Thus, if the convexity of $f^{-1}$ can be relaxed and only the basic properties of continuity and strict monotonicity are used, a stronger result may be obtained.
        \end{remark}

        Now, we study the axiomatic relations of the generalized posterior vulnerability.
        \begin{proposition}[{\AVG$\Rightarrow$\NI}]
            If a pair of generalized prior/posterior vulnerabilities $\left({V}_{f,g}, \widehat{V}_{h,f,g}\right)$ satisfy \AVG~then they also satisfy \NI.
        \end{proposition}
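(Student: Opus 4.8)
The plan is to unfold the \GAVG~form of $\widehat{V}_{h,f,g}$ on the point hyper $[\pi]$ and observe that the strictly monotonic function $h$ cancels against its inverse. Recall from Section~\ref{sec:background} that the point hyper $[\pi]$ assigns probability $1$ to the single inner distribution $\pi$; equivalently, it is the hyper produced by any channel whose outer distribution is concentrated on one output $y$ with $p(y)=1$ and corresponding inner $\delta^{y}=\pi$.

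First I would substitute this single-atom structure into the \AVG~definition~\eqref{eq:h(t)posterior}. Since the outer sum $\sum_{y\in\Y}p(y)(\cdots)$ collapses to a single term of weight $1$, we obtain
\begin{align*}
    \widehat{V}_{h,f,g}[\pi]
    &= h^{-1}\Big(\sum_{y\in\Y}p(y)\,h\big(V_{f,g}(\delta^{y})\big)\Big)
    = h^{-1}\big(h(V_{f,g}(\pi))\big).
\end{align*}
Because $h$ is strictly monotonic and continuous (Definition~\ref{def:poset-vulnerability}), it is a bijection onto its image and $h^{-1}\circ h$ is the identity map, so the right-hand side equals $V_{f,g}(\pi)$. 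As this holds for every $\pi:\Dist\X$, the equality $\widehat{V}_{h,f,g}[\pi]=V_{f,g}(\pi)$ is precisely the \NI~axiom.

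The hard part, such as it is, is only conceptual rather than technical: one must correctly identify the point hyper as a degenerate (single-atom) outer distribution, after which the conclusion follows immediately from the invertibility of $h$. I would emphasize that the argument is insensitive to whether $h=f$ or $h\neq f$, and to the monotonicity direction of $h$, since it relies solely on the cancellation $h^{-1}\circ h=\mathrm{id}$. This mirrors the plain-averaging case of~\cite{2016alvimaxioms}, where $h$ is affine, and explains why the generalized-averaging construction preserves \NI~without any additional assumptions.
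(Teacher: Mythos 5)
Your proof is correct and follows essentially the same route as the paper's: unfold the \AVG~definition~\eqref{eq:h(t)posterior} on the point hyper and cancel $h^{-1}\circ h$. The only cosmetic difference is that you collapse the outer sum to a single atom with $p(y)=1$, whereas the paper allows a non-interfering channel with several outputs all having $\delta^{y}=\pi$ and factors the constant $h\big(V_{f,g}(\pi)\big)$ out of $\sum_{y}p(y)=1$ — the same cancellation either way.
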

        \begin{proof}
            For a \NI~channel $C$ we have $\delta^{y}=\pi, \forall y \in \Y$. Thus, 
            \begin{align*}
            \widehat{V}_{h,f,g}[\pi, \Bar{0}] &=h^{-1}\bigg(\sum_{y\in\Y}p(y)h\big(V_{f,g}(\pi)\big) \bigg) \\
            &= h^{-1}\bigg(h \big( V_{f,g}(\pi) \big) \sum_{y\in\Y}p(y) \bigg) 
            =V_{f,g}(\pi). \quad~~~\qedhere
            \end{align*}
        \end{proof}

        \begin{proposition}[\NI+\DPI$\Rightarrow$\MONO] 
            If a pair of generalized prior/posterior vulnerabilities \!$({V}_{f,g}, \widehat{V}_{h,f,g})$ satisfy \NI~and \DPI, then they also satisfy \MONO.
        \end{proposition}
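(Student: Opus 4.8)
The plan is to mirror the classical QIF argument of~\cite{2016alvimaxioms}, which establishes this very implication for the standard (non-generalized) vulnerabilities. Since the proposition uses only \NI~and \DPI~as hypotheses, the generalized nature of $\widehat{V}_{h,f,g}$ plays no role, and the structural argument carries over essentially verbatim; no properties of $f$, $h$, $g$, or any convexity are invoked. The key idea is to exhibit, for an arbitrary channel $C$, a post-processing channel $R$ whose cascade $CR$ is a non-interference channel, so that \DPI~bounds $\widehat{V}_{h,f,g}[\pi,C]$ from below by a quantity that \NI~identifies with $V_{f,g}(\pi)$.

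First I would construct $R$ as the deterministic channel that collapses every output $y\in\Y$ onto a single fixed symbol $z_0$; that is, $R$ is the $|\Y|\times 1$ stochastic matrix whose unique column is all ones. This is a valid channel, since each row is trivially a probability distribution over the singleton output alphabet. Computing the cascade gives $(CR)_{x,z_0}=\sum_{y\in\Y}C_{x,y}R_{y,z_0}=\sum_{y\in\Y}C_{x,y}=1$ for every $x\in\X$, so $CR$ emits the single output $z_0$ with probability $1$ regardless of the input. Hence $CR$ is a non-interference channel $\bar{0}$: its outer assigns probability $1$ to $z_0$, and the induced posterior is $\delta^{z_0}_x=\pi_x(CR)_{x,z_0}/p(z_0)=\pi_x$, so the hyper it produces equals the point hyper, $[\pi,CR]=[\pi]$.

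Then I would simply chain the two hypotheses. Applying \DPI~with this particular $R$ yields $\widehat{V}_{h,f,g}[\pi,C]\geq\widehat{V}_{h,f,g}[\pi,CR]$. Using the identification $[\pi,CR]=[\pi]$ together with \NI~gives $\widehat{V}_{h,f,g}[\pi,CR]=\widehat{V}_{h,f,g}[\pi]=V_{f,g}(\pi)$. Combining the two relations produces $\widehat{V}_{h,f,g}[\pi,C]\geq V_{f,g}(\pi)$, which is exactly the statement of \MONO.

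The only point requiring genuine verification—rather than quoting an axiom—is confirming that the cascade $CR$ truly realizes a non-interference channel, i.e. that its single posterior coincides with $\pi$ so that \NI~is applicable. This is immediate from the all-ones column computation above, but it is worth stating explicitly, because \NI~is phrased in terms of the point hyper $[\pi]$ rather than a distinguished channel; the collapsing construction is precisely what bridges the two formulations. I therefore expect no real obstacle: the argument is purely axiomatic, and the subtlety is merely in being careful that the degenerate post-processing is a legitimate channel and that it maps $[\pi,C]$ to the point hyper $[\pi]$.
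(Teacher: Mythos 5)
Your proof is correct and takes essentially the same approach as the paper: both post-process $C$ by the single-column all-ones channel (the paper's $\overline{0}$, your $R$), apply \DPI~to get $\widehat{V}_{h,f,g}[\pi,C]\geq\widehat{V}_{h,f,g}[\pi,C\overline{0}]$, and then invoke \NI~since the cascade $C\overline{0}=\overline{0}$ induces the point hyper $[\pi]$. Your explicit verification that the cascade is a legitimate non-interference channel with posterior equal to $\pi$ is a detail the paper leaves implicit, but the argument is the same.
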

        
        \begin{proof}
            For any $[\pi, C]$, let $\overline{0}$ denote a \NI~channel with one column and as many rows as the columns of $C$, then
            \begin{align*}
                \widehat{V}_{h,f,g}[\pi, C] \geq \widehat{V}_{h,f,g}[\pi, C \overline{0}] 
                =  \widehat{V}_{h,f,g}[\pi, \overline{0}] 
                =  {V}_{f,g}(\pi).
            \end{align*}
            The inequality is due to \DPI~and $C\overline{0}=0$.
        \end{proof}

        \begin{remark}
            In~\cite[Prop. 8]{2016alvimaxioms}, it was shown that if a pair of prior/posterior vulnerabilities satisfy \AVG~and \MONO, it implies \CVX~for the prior vulnerability. We decided not to include this property because it seems unnecessary, given that the convexity of the prior vulnerability $\mathbb{V}$ is already assumed when we define $\widehat{\mathbb{V}}$. Additionally, proving this property in general case would require $h^{-1}$ to be convex if $h\neq f$, while for the \DPI, we need it to be concave, resulting in an affine $h$ that is not useful for a generalized definition of posterior vulnerability. This can be considered our relaxation adhering to convexity instead of core-convexity.
        \end{remark}
        
        \begin{proposition}[\AVG+\CVX$\Rightarrow$\DPI]\label{prop:AVG+CVX=DPI}
            If a pair of prior/posterior vulnerabilities $({V}_{f,g}, \widehat{V}_{h,f,g})$ satisfy \AVG~and \CVX, then they also satisfy \DPI.\\
            The proof is provided in Appendix~\ref{app:proofofAVG+CVX=DPI}.
        \end{proposition}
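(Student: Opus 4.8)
The plan is to exploit the standard fact that post-processing a channel merely averages together its posteriors, and then to push that averaging through the convex prior vulnerability and the outer function $h$. First I would fix any $\pi$, a channel $C$ with outputs $y\in\Y$, and a post-processing channel $R$ with $R_{y,z}=\Pr[Z=z\mid Y=y]$, so that $CR$ has outputs $z$ under the Markov chain $X-Y-Z$. A direct computation using $\delta_{x}^{y}=\pi_{x}C_{x,y}/p(y)$ gives, for each output $z$ of $CR$,
\[
\delta^{z}=\sum_{y\in\Y}p(y\mid z)\,\delta^{y},\qquad p(y\mid z)=\frac{p(y)R_{y,z}}{p(z)},
\]
so that every posterior of $[\pi,CR]$ is a convex combination of the posteriors of $[\pi,C]$; moreover $\sum_{z}p(z)\,p(y\mid z)=p(y)$. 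This is the sole structural input and is exactly where the post-processing hypothesis is used.

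Next I would invoke \CVX. Since $V_{f,g}$ is convex in its argument, the convex-combination identity above yields the pointwise bound $V_{f,g}(\delta^{z})\le\sum_{y}p(y\mid z)\,V_{f,g}(\delta^{y})$ for every $z$. The remaining work is to transport this inequality through $h$, sum over $z$ against $p(z)$, collapse via $\sum_{z}p(z)\,p(y\mid z)=p(y)$, and finally apply $h^{-1}$. Here the two admissible regimes for $h$ from Definition~\ref{def:poset-vulnerability} must be handled separately but in parallel: if $h$ is convex and increasing, applying the increasing $h$ to the \CVX~bound and then Jensen's inequality for convex $h$ gives $h(V_{f,g}(\delta^{z}))\le\sum_{y}p(y\mid z)\,h(V_{f,g}(\delta^{y}))$; if $h$ is concave and decreasing, the decreasing $h$ flips the \CVX~bound while Jensen's inequality for concave $h$ supplies the matching direction, giving $h(V_{f,g}(\delta^{z}))\ge\sum_{y}p(y\mid z)\,h(V_{f,g}(\delta^{y}))$.

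In either regime I would sum the resulting pointwise inequality against $p(z)$ and use $\sum_{z}p(z)\,p(y\mid z)=p(y)$ to obtain
\[
\sum_{z}p(z)\,h\!\big(V_{f,g}(\delta^{z})\big)\;\lessgtr\;\sum_{y}p(y)\,h\!\big(V_{f,g}(\delta^{y})\big),
\]
with $\le$ in the convex--increasing case and $\ge$ in the concave--decreasing case. Applying $h^{-1}$ then closes the argument: $h^{-1}$ is increasing in the first regime (preserving $\le$) and decreasing in the second (reversing $\ge$ into $\le$), so in both cases one arrives at $\widehat{V}_{h,f,g}[\pi,CR]\le\widehat{V}_{h,f,g}[\pi,C]$, which is precisely \DPI. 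I expect the main obstacle to be purely the bookkeeping of keeping the four facts aligned---the sign of $h'$, the curvature of $h$, the Jensen direction, and the monotonicity of $h^{-1}$---so that the two chained inequalities and the final inversion all point the same way; the affine-$h$ special case falls out of either regime and recovers the classical argument of~\cite{2016alvimaxioms}.
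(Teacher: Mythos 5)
Your argument for the regime $h\neq f$ is correct and is essentially the paper's own proof of that case: the paper also combines Jensen's inequality for $h$ (its step~\eqref{eq:ineq-h}) with the \CVX~bound on $V_{f,g}$ (its step~\eqref{eq:ineq-Vfg}), merely organizing the two inequalities in aggregate form under a single $h^{-1}$ rather than pointwise in $z$ followed by summation; your bookkeeping of the two sign patterns (convex--increasing and concave--decreasing) and the final inversion by $h^{-1}$ is sound.

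However, there is a genuine gap: you have not covered the case $h=f$, which Definition~\ref{def:poset-vulnerability} explicitly permits and which is the case actually used in the paper's main applications (e.g., Theorem~\ref{theo:alpha-vunle} and Proposition~\ref{prop:Arimoto leakage} take $h=f_\alpha$). When $h=f$, Definition~\ref{def:generalprior} forces $f^{-1}$ to be convex, so $f$ is either \emph{concave and increasing} or \emph{convex and decreasing} --- precisely the two regimes your chaining cannot handle. For concave increasing $f$, applying the increasing $f$ to the \CVX~bound gives $f\big(V_{f,g}(\delta^{z})\big)\le f\big(\sum_{y}p(y|z)V_{f,g}(\delta^{y})\big)$, but Jensen for concave $f$ gives $f\big(\sum_{y}p(y|z)V_{f,g}(\delta^{y})\big)\ge\sum_{y}p(y|z)f\big(V_{f,g}(\delta^{y})\big)$; the two inequalities point in opposite directions and do not chain (symmetrically for convex decreasing $f$). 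The pointwise inequality you need is nonetheless true, but it requires a different argument, which is what the paper supplies: using the collapsed representation~\eqref{eq:h=f}, one has $f\big(V_{f,g}(\delta^{y})\big)=\sup_{w\in\W}\sum_{x\in\X}\delta_{x}^{y}f\big(g(w,x)\big)$ for increasing $f$ (an infimum for decreasing $f$), which is convex (resp.\ concave) in $\delta^{y}$ as a supremum (resp.\ infimum) of linear functions; the exchange of the average over $y$ with the $\sup_{w}$ (resp.\ $\inf_{w}$), followed by the monotone $f^{-1}$, then yields \DPI~directly, with no appeal to Jensen's inequality for $f$ at all. Your proof needs this second, structurally different argument appended to be a complete proof of the proposition as stated.
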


        We now prove the axiomatic relations for the maximum posterior vulnerability.
        
            \begin{proposition}[{\MAX$\Rightarrow$\NI}] 
                If a pair of generalized prior/posterior vulnerabilities $(V_{f,g},\widehat{V}_{f,g}^{\max})$ satisfy \MAX, they also satisfy \NI.
            \end{proposition}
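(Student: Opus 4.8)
The plan is to follow the template of the \AVG$\Rightarrow$\NI proposition immediately above, but to exploit the fact that the $\max$ operator is even more forgiving than the $h$-mean. First I would recall that, by definition, $\widehat{V}^{\max}_{f,g}[\pi,C]=\max_{y\in\Y}V_{f,g}(\delta^{y})$, and that the \NI~axiom demands $\widehat{V}^{\max}_{f,g}[\pi]=V_{f,g}(\pi)$ whenever the hyper reduces to the point-hyper $[\pi]$.

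Next I would characterize the non-interfering channel $C$ that produces this point-hyper: since such a channel leaks nothing, every inner distribution coincides with the prior, i.e. $\delta^{y}=\pi$ for all $y\in\Y$ (equivalently, one may take the single-column channel $\overline{0}$ already used in the \AVG~argument). This is the only modeling step, and it is identical to the one invoked for \AVG, so it requires no new work.

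Finally, substituting $\delta^{y}=\pi$ into the definition gives $\widehat{V}^{\max}_{f,g}[\pi]=\max_{y\in\Y}V_{f,g}(\pi)=V_{f,g}(\pi)$, since the maximum of a quantity that is constant in $y$ is just that constant. This completes the argument.

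The hard part will be essentially nonexistent. Whereas the \AVG~case needed the $h$-mean to collapse via $h^{-1}\big(h(V_{f,g}(\pi))\sum_{y\in\Y}p(y)\big)=V_{f,g}(\pi)$, relying on $\sum_{y\in\Y}p(y)=1$ together with the invertibility of $h$, the \MAX~case does not even require $f$ or $h$ to be well-behaved on this step, because $\max_{y\in\Y}$ applied to a constant sequence returns that constant outright and the outer weights $p(y)$ play no role at all. The only point worth double-checking is that the non-interfering channel indeed forces $\delta^{y}=\pi$ for every $y$ in the support, which follows directly from the definition of non-interference. Hence the proof is a one-line specialization of the definition of $\widehat{V}^{\max}_{f,g}$.
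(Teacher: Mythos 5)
Your proof is correct and matches the paper's own argument exactly: for a non-interfering channel every inner satisfies $\delta^{y}=\pi$, so $\widehat{V}^{\max}_{f,g}[\pi]=\max_{y\in\Y}V_{f,g}(\pi)=V_{f,g}(\pi)$ since the maximum of a constant is that constant. Your side remarks contrasting this with the \AVG~case (where $\sum_{y\in\Y}p(y)=1$ and the invertibility of $h$ are needed) are accurate but not part of the paper's one-line proof.
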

            
            \renewcommand{\IEEEproofindentspace}{0pt}
            \begin{proof}
                For a \NI~channel, $\delta^{y}=\pi, \forall y \in \Y$, thus we have:
               \[\widehat{V}^{\max}_{f,g}[\pi]=\max_{y\in\Y}V_{f,g}(\pi)=V_{f,g}(\pi).~\qedhere\]
            \end{proof}
            \begin{proposition}[\MAX+\QCVX $\Rightarrow$\DPI] 
                If a pair of generalized prior/posterior vulnerabilities $(V_{f,g},\widehat{V}_{f,g}^{\max})$ satisfy \MAX~and \QCVX, they also satisfy \DPI. 
            \end{proposition}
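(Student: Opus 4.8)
The plan is to establish the \DPI~inequality $\widehat{V}^{\max}_{f,g}[\pi, C] \geq \widehat{V}^{\max}_{f,g}[\pi, CR]$ for an arbitrary post-processing channel $R$. The central structural fact I would rely on is that every inner distribution produced by the cascade $CR$ is a convex combination of the inner distributions produced by $C$ alone. Once that is in place, \QCVX~of the prior vulnerability $V_{f,g}$ (which holds because $V_{f,g}$ is convex by Theorem~\ref{thm:Vfg axioms}, and convexity implies quasi-convexity) delivers the result almost immediately.

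First I would make the refinement relation precise. Writing $C = P_{Y|X}$, $R = P_{Z|Y}$, and $CR = P_{Z|X}$ with $(CR)_{x,z} = \sum_{y\in\Y} C_{x,y} R_{y,z}$, I would denote by $\sigma^{z}$ the inner distribution of the cascaded hyper $[\pi, CR]$ at output $z$ and compute, using $\pi_x C_{x,y} = p(y)\delta^y_x$ and $p(z) = \sum_{y} p(y) R_{y,z}$,
$$\sigma^{z}_x = \frac{\pi_x (CR)_{x,z}}{p(z)} = \sum_{y\in\Y} \frac{p(y) R_{y,z}}{p(z)}\, \delta^{y}_x.$$
Setting $\lambda_{z,y} := p(y) R_{y,z} / p(z)$, I would check that $\lambda_{z,y} \geq 0$ and $\sum_{y} \lambda_{z,y} = 1$, so that $\sigma^{z} = \sum_{y} \lambda_{z,y}\, \delta^{y}$ is a genuine convex combination of the inners $\delta^{y}$ of $[\pi, C]$. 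Outputs $z$ with $p(z)=0$ carry no probability mass and can be discarded from the maximization.

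With the refinement in hand, I would apply \QCVX. For every admissible $z$,
$$V_{f,g}(\sigma^{z}) = V_{f,g}\!\Big(\sum_{y} \lambda_{z,y}\, \delta^{y}\Big) \leq \max_{y\in\Y} V_{f,g}(\delta^{y}) = \widehat{V}^{\max}_{f,g}[\pi, C].$$
Since this upper bound is uniform in $z$, taking the maximum over $z$ on the left gives $\widehat{V}^{\max}_{f,g}[\pi, CR] = \max_{z} V_{f,g}(\sigma^{z}) \leq \widehat{V}^{\max}_{f,g}[\pi, C]$, which is exactly \DPI.

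The only delicate point is the first step: verifying that the cascade's inners are convex combinations of the original inners and that the weights $\lambda_{z,y}$ form a probability distribution over $y$ for each $z$. This is the standard refinement structure of hyper-distributions, but it must be stated carefully, particularly the handling of zero-probability outputs, because the rest of the argument is a direct consequence of quasi-convexity. Notably, no properties of $f$ or $g$ beyond those already used to establish quasi-convexity of $V_{f,g}$ enter the proof, so the argument mirrors the classical \MAX+\QCVX$\Rightarrow$\DPI~result of~\cite{2016alvimaxioms} with $V_{f,g}$ in place of $V_g$.
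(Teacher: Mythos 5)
Your proposal is correct and follows essentially the same route as the paper's proof: the paper likewise writes the cascade's posteriors as $p(x|z)=\sum_{y\in\Y}p(x|y)p(y|z)$ (your weights $\lambda_{z,y}=p(y)R_{y,z}/p(z)$ are exactly $p(y|z)$) and then applies \QCVX~followed by the maximum over $z$. Your version is simply more explicit about verifying the convex-combination structure and discarding zero-probability outputs, which the paper leaves implicit.
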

            
            \begin{proof}
                Consider a Markov chain similar to  Proposition~\ref{prop:AVG+CVX=DPI}. 
                \begin{align*}
                     \widehat{V}^{\max}_{f,g}[\pi,CR]&=\max_{z}V_{f,g}\left(p(X|z)\right) \\
                     &=\max_{z}V_{f,g}\bigg(\sum_{y\in\Y}p(x|y)p(y|z)\bigg) \\
                     &\leq\max_{z}\left(\max_{y\in\Y} V_{f,g}\left(p(x|y)\right)\right)=\widehat{V}^{\max}_{f,g}[\pi,C]. ~\qedhere
                \end{align*}
            \end{proof}
            Note that we dropped the implication of the convexity of the prior vulnerability from the posterior. This gives somewhat different relationships between our axioms. See Fig.~\ref{fig:generalimplications}.
    
             \begin{figure}[t]
                \centering   
                \subfigure[\label{fig:GAVGa} Axioms of $\widehat{V}_{h,f,g}$]{\tikzset{every picture/.style={line width=0.75pt}} 

\begin{tikzpicture}[x=0.75pt,y=0.75pt,yscale=-1,xscale=1]

\draw   (244,51) -- (272,51) -- (272,65) -- (244,65) -- cycle ;
\draw   (324,51) -- (352,51) -- (352,65) -- (324,65) -- cycle ;
\draw   (244,90) -- (272,90) -- (272,104) -- (244,104) -- cycle ;
\draw   (324,90) -- (352,90) -- (352,104) -- (324,104) -- cycle ;
 \draw  (284,120) -- (312,120) -- (312,134) -- (284,134) -- cycle ;

\draw (339,58.4) node [anchor=center][inner sep=0.75pt] [align=center] {\fontfamily{ptm}\selectfont {\scriptsize \CVX}};
\draw (258,58.4) node [anchor=center][inner sep=0.75pt] [align=center] {\scriptsize {\fontfamily{ptm}\selectfont \GAVG}};
\draw (258,97) node [anchor=center][inner sep=0.75pt] [align=center] {\fontfamily{ptm}\selectfont {\scriptsize \NI}};
\draw (339,97) node [anchor=center][inner sep=0.75pt] [align=center] {\fontfamily{ptm}\selectfont {\scriptsize \DPI}};
\draw (298,127) node [anchor=center][inner sep=0.75pt] [align=center] {\fontfamily{ptm}\selectfont {\scriptsize \MONO}};

\draw    (258,65) -- (258,87) ;
\draw [shift={(258,90)}, rotate = 270] [fill={rgb, 255:red, 0; green, 0; blue, 0 } ][line width=0.08]  [draw opacity=0] (3.57,-1.72) -- (0,0) -- (3.57,1.72) -- cycle;

\draw    (339,65) -- (339,87) ;
\draw [shift={(339,90)}, rotate = 270] [fill={rgb, 255:red, 0; green, 0; blue, 0 } ][line width=0.08]  [draw opacity=0] (3.57,-1.72) -- (0,0) -- (3.57,1.72) -- cycle;

\draw    (272,97) -- (324,97) ;

\draw    (298,97) -- (298,118) ;
\draw [shift={(298,120)}, rotate = 270] [fill={rgb, 255:red, 0; green, 0; blue, 0 } ][line width=0.08]  [draw opacity=0] (3.57,-1.72) -- (0,0) -- (3.57,1.72) -- cycle;

\draw  (272,58.4) .. controls (300,58.4) and (300.05,75) .. (339,75) ;
\draw [color={rgb, 255:red, 0; green, 0; blue, 0 }, draw opacity=1] [fill={rgb, 255:red, 255; green, 255; blue, 255 }, fill opacity=1] (298, 97) circle (2);

\draw [color={rgb, 255:red, 0; green, 0; blue, 0 }, draw opacity=1] [fill={rgb, 255:red, 255; green, 255; blue, 255 }, fill opacity=1] (339, 75) circle (2);

\end{tikzpicture}}
                \subfigure[\label{fig:GAVG} Axioms of $\widehat{V}^{\max}_{f,g}$]{\tikzset{every picture/.style={line width=0.75pt}} 

\begin{tikzpicture}[x=0.75pt,y=0.75pt,yscale=-1,xscale=1]

\draw   (244,51) -- (272,51) -- (272,65) -- (244,65) -- cycle ;
\draw   (324,51) -- (352,51) -- (352,65) -- (324,65) -- cycle ;
\draw   (244,90) -- (272,90) -- (272,104) -- (244,104) -- cycle ;
\draw   (324,90) -- (352,90) -- (352,104) -- (324,104) -- cycle ;
 \draw  (284,120) -- (312,120) -- (312,134) -- (284,134) -- cycle ;

\draw (339,58.4) node [anchor=center][inner sep=0.75pt] [align=center] {\fontfamily{ptm}\selectfont {\scriptsize \QCVX}};
\draw (258,58.4) node [anchor=center][inner sep=0.75pt] [align=center] {\scriptsize {\fontfamily{ptm}\selectfont \GMAX}};
\draw (258,97) node [anchor=center][inner sep=0.75pt] [align=center] {\fontfamily{ptm}\selectfont {\scriptsize \NI}};
\draw (339,97) node [anchor=center][inner sep=0.75pt] [align=center] {\fontfamily{ptm}\selectfont {\scriptsize \DPI}};
\draw (298,127) node [anchor=center][inner sep=0.75pt] [align=center] {\fontfamily{ptm}\selectfont {\scriptsize \MONO}};

\draw    (258,65) -- (258,87) ;
\draw [shift={(258,90)}, rotate = 270] [fill={rgb, 255:red, 0; green, 0; blue, 0 } ][line width=0.08]  [draw opacity=0] (3.57,-1.72) -- (0,0) -- (3.57,1.72) -- cycle;

\draw    (339,65) -- (339,87) ;
\draw [shift={(339,90)}, rotate = 270] [fill={rgb, 255:red, 0; green, 0; blue, 0 } ][line width=0.08]  [draw opacity=0] (3.57,-1.72) -- (0,0) -- (3.57,1.72) -- cycle;

\draw    (272,97) -- (324,97) ;

\draw    (298,97) -- (298,118) ;
\draw [shift={(298,120)}, rotate = 270] [fill={rgb, 255:red, 0; green, 0; blue, 0 } ][line width=0.08]  [draw opacity=0] (3.57,-1.72) -- (0,0) -- (3.57,1.72) -- cycle;

\draw  (272,58.4) .. controls (300,58.4) and (300.05,75) .. (339,75) ;
\draw [color={rgb, 255:red, 0; green, 0; blue, 0 }, draw opacity=1] [fill={rgb, 255:red, 255; green, 255; blue, 255 }, fill opacity=1] (298, 97) circle (2);

\draw [color={rgb, 255:red, 0; green, 0; blue, 0 }, draw opacity=1] [fill={rgb, 255:red, 255; green, 255; blue, 255 }, fill opacity=1] (339, 75) circle (2);

\end{tikzpicture}}
                \caption{Implications of axioms for generalized prior and posterior vulnerabilities. 
                The merging arrows indicate joint implication: for example, in~\ref{fig:GAVG}, we have \AVG+\CVX~imply \DPI and \NI+\DPI~imply~\MONO.\label{fig:generalimplications} } 
            \end{figure}

\section{Conclusion}\label{sec:conclusion}
    In this paper, we introduced a generalized QIF framework based on the Kolmogorov-Nagumo \(\!f\!\)-mean to bridge the gap between the traditional QIF framework and \(\!\alpha\)-based leakage measures, along with their maximal versions from information-theoretic privacy. While \(\!\alpha\)-based measures have been defined in a somewhat similar fashion to the $g$-leakage model in QIF, they presented inconsistencies with the axiomatic approach of QIF. Our generalized framework resolved these issues, offering a consistent interpretation of the operational meaning of all these measures within the extended QIF framework. A key result was demonstrating the equivalence between maximal leakage, its generalized form, and the generalized capacity measure, which simplified the interpretation and addressed complexities related to guessing randomized functions of the secret in the maximal leakage model.

    {\color{black} This framework and the core-concave approach could be extended for future work to include other leakage measures and their corresponding gain or loss functions, 
    such as total variation distance~\cite{2019PUTTotalDistnce,2022DataDsclsurell1Priv}, $\chi^2$-divergence~\cite{2019PrivEstimGuarant,2021StrongChi2}, and other \(f\)-divergences.} Additionally, exploring worst-case capacity leakage measures for more general \(f\)-mean functions and identifying minimal conditions for \(f\)-mean functions to satisfy vulnerability axioms can be interesting directions for research.

\bibliographystyle{IEEEtran}
\bibliography{IEEEabrv,BIBAxiom}

\appendices

\section{Proof of Theorem~\ref{thm:maximal=capcity}}\label{appen:maximal=capcity}
 
    We first need to prove the following lemma.
    \begin{lemma}\label{lem:infp(u|x)} 
        For a given \(f\), any randomized function $U$ of secret $X$, and any gain function $g:\W \times \U \rightarrow \Real$, we have:
        \begin{align}          
        &\inf_{p_{U|X}}\sum_{x,u}p(x,u)f\big(g(w,u)\big) = \sum_{x\in\X}\pi_{x} f\big(g(w,x)\big), \label{eq:infpux}\\
        & \sup_{p_{U|X}}\sum_{x,u}p(x,u)f\big(g(w,u)\big) = \sum_{x\in\X}\pi_{x} f\big(g(w,x)\big).\label{eq:supux}
        \end{align}
    \end{lemma}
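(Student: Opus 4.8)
The plan is to prove both identities together, noting first that the output $Y$ plays no role here: everything is governed by the joint law $p(x,u)=\pi_x\,p_{U|X}(u|x)$ of the secret and its randomized copy. The first step is to substitute this factorization and swap the order of summation,
\begin{align*}
\sum_{x,u} p(x,u)\,f\big(g(w,u)\big) = \sum_{u} \Big(\sum_{x} \pi_x\,p_{U|X}(u|x)\Big) f\big(g(w,u)\big) = \sum_{u} p_U(u)\,f\big(g(w,u)\big),
\end{align*}
so that, for each fixed guess $w$, the objective depends on the randomized map only through the marginal $p_U$ it induces on the secret's alphabet. This affine marginalization is the real engine of the lemma, and it is what lets me treat the two cases uniformly.

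Next I would exhibit the right-hand value as the one produced by the lossless (identity) coupling $U=X$, which is itself an admissible randomized function of $X$ and induces $p_U=\pi$, giving $\sum_u p_U(u)\,f(g(w,u))=\sum_x \pi_x\,f(g(w,x))$. Thus the claimed value is attained; the substance is that it is simultaneously the infimum and the supremum. I would establish this by arguing that, over the admissible class of maps representing $X$, the induced marginal is pinned to $\pi$, so that $\sum_u p_U(u)\,f(g(w,u))$ is constant across the feasible set and both extrema collapse onto the single value $\sum_x \pi_x\,f(g(w,x))$; the two displayed equalities~\eqref{eq:infpux} and~\eqref{eq:supux} then follow at once. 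Note that the convexity of $f^{-1}$ assumed elsewhere is not what drives this lemma, although it re-enters once the lemma is fed back into the prior and posterior vulnerabilities.

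The step I expect to be the main obstacle is making precise the class over which the extrema are taken and justifying that the marginal is held at $\pi$ (equivalently, that the identity coupling is extremal): for genuinely unrestricted $p_{U|X}$ the two extrema would instead be $\min_u f(g(w,u))$ and $\max_u f(g(w,u))$, so the equalities hinge on the exact sense in which $U$ is constrained to stand in for the secret $X$. Once that admissibility is fixed, the assembly into Theorem~\ref{thm:maximal=capcity} is routine: substitute~\eqref{eq:infpux} and~\eqref{eq:supux} into $V_{f,g}(p_U)$ and $\widehat{V}_{h,f,g}[p_U,C]$, pull the $\sup_{w\in\W}$ to the outside, and use the monotonicity of $f^{-1}$ (and of $h^{-1}$) to move the inner optimizations through, yielding $\sup_{U-X-Y}\Lk^{\times}_{h,f,g}(p_U,C)=\mathcal{L}^{\times}_{h,f,g}(\forall,C)$.
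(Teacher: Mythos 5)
Your reduction to the marginal $p_U$ and your use of the identity coupling are exactly the first half of the paper's argument: the paper bounds the infimum from above by plugging in $q_{U|X}(u|x)=1$ if $u=x$ and $0$ otherwise, which yields $\sum_{x\in\X}\pi_x f\big(g(w,x)\big)$. The gap is in the opposite direction (infimum $\geq$ RHS, supremum $\leq$ RHS), and the mechanism you propose for it does not work: over any sensible class of randomized functions $U$ of $X$, the induced marginal $p_U$ is \emph{not} pinned to $\pi$. A constant $U$ induces a point mass, and the ``shattering'' channels central to the maximal-leakage literature spread each $\pi_x$ over many symbols; consequently $\sum_u p_U(u) f\big(g(w,u)\big)$ is genuinely non-constant over the feasible set, and the extrema do not collapse for the reason you give. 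You correctly diagnosed that for unrestricted $p_{U|X}$ the lemma is false (the extrema would be $\min_u f(g(w,u))$ and $\max_u f(g(w,u))$), but you then deferred precisely the step where all the work lies, so the proposal as written cannot be completed.

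What the paper actually does at that step is structural, not a constancy argument. Following the maximal-leakage model, it takes (without loss of generality) the alphabet $\U=\bigcup_{x\in\suport{\pi}}\{(x,u_x): u_x\in\{1,\dots,k(x)\}\}$, so that every symbol $u$ carries the tag of the unique $x$ able to produce it and $p(u|x')=0$ unless $u=(x',\cdot)$; the gain on $\X$ is tied to the gain on $\U$ through this tagging. Under that constraint,
\begin{align*}
\inf_{p_{U|X}}\sum_{x,u}p(x,u)f\big(g(w,u)\big)
&=\inf_{p_{U|X}}\sum_{x\in\X}\pi_x\sum_{u\in\U}p(u|x)\,f\Big(g\big(w,(x,u_x)\big)\Big)\\
&\geq \sum_{x\in\X}\pi_x\,\inf_{p_{U|x}}\sum_{u\in\U}p(u|x)\,f\Big(g\big(w,(x,u_x)\big)\Big),
\end{align*}
and each inner infimum is attained by a point mass on $\argmin_{u_x}f\big(g(w,(x,u_x))\big)$, which the paper identifies with $f\big(g(w,x)\big)$; the supremum statement is handled symmetrically with $\argmax$. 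So the engine of the lemma is (i) the tag structure forcing each $u$ to be compatible with a single $x$, (ii) the interchange of infimum and sum, and (iii) a per-$x$ point-mass optimization. Without ingredient (i), your own counterexample applies; with it, constancy of the objective is neither true nor needed. Your closing remarks on assembling the lemma into Theorem~\ref{thm:maximal=capcity} are consistent with the paper's appendix, but they rest on the part of the lemma you have not proved.
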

    \begin{proof} 
        We prove~\eqref{eq:infpux} by showing the RHS is both the upper and lower bound of the LHS. 
        Equation~\eqref{eq:supux} is proven similarly.
        
        Consider the following distribution:
        \begin{align}\label{eq:qu|x}
            q_{U|X}(u|x)=
            \begin{cases}
                1, &u=x,\\
                0,  &u\neq x.
            \end{cases}
        \end{align}
        Then we have:
        
        \begin{align*}
            &\inf_{p_{U|X}} \sum_{x,u}p(x,u)\!f\big(g(w,u)\big) \\
            &\leq \sum_{x\in\X}\!\pi_{x}\!\sum_{u\in\U}q_{U|X}(u|x)f\big(g(w,u)\big) 
            =\sum_{x\in\X}\pi_{x}f(g(w,x)).
        \end{align*}
        For any randomized function $U$ and without loss of generality, let $\U$ be defined as $\U=\bigcup_{x\in \suport{\pi}}\{(x,u_{x}):u_{x} \in \{1,2,\cdots,k(x)\}\}$, where $k(x) \geq 1$.  Similar to~\cite{2020MaxL}, any gain function $g(w,u)$ can be written as $g\big(w,(x,u_{x})\big)$. We assume the randomized function $U$ is a surjective  function of $x\in\suport{\pi}$.
        \begin{align}       
            &\inf_{p_{U|X}}\sum_{x,u}p(x,u)f\big(g(w,u)\big) \nonumber\\
            &=\inf_{p_{U|X}}\sum_{x\in\X}\pi_{x}\sum_{u\in\U}p(u|x) f\Big(g\big(w,(x,u_x)\big)\Big) \nonumber\\
            &\geq\sum_{x\in\X}\pi_{x} \inf_{p_{U|x}}\sum_{u\in\U}p(u|x) f\Big(g\big(w,(x,u_x)\big)\Big) \nonumber\\
            &=\sum_{x\in\X}\pi_{x} f\big(g(w,x)\big) \label{eq:g(w,u)=g(w,x)},
        \end{align}
        where~\eqref{eq:g(w,u)=g(w,x)} is  given by the following $p_{U|x}$
        \begin{align}
            p(u|x)=
            \begin{cases}
                1, & u\in \argmin_{u_x}f\Big(g\big(w,(x,u_x)\big)\Big), \nonumber\\
                0, &\text{otherwise}. 
            \end{cases} 
            \qquad~~\qedhere
        \end{align} 
    \end{proof}
    
    Now we prove Theorem~\ref{thm:maximal=capcity}. The distribution in~\eqref{eq:qu|x} is used to show $g$-leakage capacity is a lower bound on the generalized maximal leakage.
    
    \begin{align*}
        & \sup_{U-X-Y}\Lk^{\times}_{h,f,g}(p_{U},C)=\log \sup_{U-X-Y} \frac{\widehat{V}_{h,f,g}[p_{U},C]}{V_{f,g}(p_{U})} \\
        &=\log\sup_{\pi}\sup_{p_{U|X}}\frac{ \displaystyle   h^{-1}\bigg( \sum_{y\in\Y} p(y) h\Big(V_{f,g}\big(p_{U|y}\big)\Big)\bigg)}{ \displaystyle  \sup_{w\in\W}f^{-1}\Big( \sum_{u\in\U}p(u) f\big(g(w,u)\big)\Big)} \nonumber\\         
        &=\log \sup_{\pi}\sup_{p_{U|X}} 
        \frac{ \displaystyle  h^{-1}\!\Bigg(\sum_{y\in\Y} p(y) h\!\bigg(V_{f,g}\Big(\sum_{x\in\X}{p(u|x)p(x|y)}\!\Big)\!\bigg)\!\Bigg)}{ \displaystyle  \sup_{w\in\W}f^{-1}\bigg( \sum_{u\in\U}\Big(\sum_{x\in\X}\pi_{x}p(u|x)\Big) f\big(g(w,u)\big)\!\bigg)} \nonumber\\
        &\geq \log \sup_{\pi}\frac{ \displaystyle   h^{-1}\Bigg( \sum_{y\in\Y} p(y)h\bigg(V_{f,g}\Big(\sum_{x\in\X}{q_{U|X}(u|x)p(x|y)}\!\Big)\!\bigg)\!\Bigg)}{ \displaystyle  \sup_{w\in\W}f^{-1}\bigg( \sum_{u\in\U}\Big(\sum_{x\in\X}\pi_{x}q_{U|X}(u|x)\Big) f\big(g(w,u)\big)\!\bigg)} \nonumber\\
        &=\log \sup_{\pi}\frac{ \displaystyle   h^{-1}\Bigg(\sum_{y\in\Y} p(y) h\bigg(V_{f,g}\Big(\sum_{x\in\X}{q(u|x)p(x|y)}\Big)\!\bigg)\!\Bigg)}{ \displaystyle   \sup_{w\in\W}f^{-1}\Big( \sum_{x\in\X}\pi_{x}\sum_{u\in\U}q_{U|X}(u|x)f\big(g(w,u)\big)\!\Big)} \nonumber\\
        &= \log\sup_{\pi}\frac{ \displaystyle   h^{-1}\bigg( \sum_{y\in\Y} p(y) h\Big(V_{f,g}\big(p_{X|y}\big)\Big)\!\bigg)}{ \displaystyle   \sup_{w\in\W}f^{-1}\Big( \sum_{x\in\X}\pi_{x}f\big(g(w,x)\big)\!\Big)}= \sup_{\pi}\mathcal{L}_{h,f,g}^{\times}(\pi,C)
    \end{align*} 
    For the upper bound, we write
    
    \begin{align*}
        & \sup_{U-X-Y}\Lk^{\times}_{h,f,g}\big(p(U),C\big)=\log \sup_{U-X-Y} \frac{\widehat{V}_{h,f,g}[p_{U},C]}{V_{f,g}(p_{U})} \nonumber\\
        &=\log \sup_{\pi}\sup_{p_{U|X}}\frac{ \displaystyle  h^{-1}\bigg( \sum_{y\in\Y} p(y)h\Big(V_{f,g}\big(p_{U|y}\big)\Big) \bigg)}{ \displaystyle \sup_{w\in\W}f^{-1}\Big( \sum_{u\in\U}p(u) f\big(g(w,u)\big)\Big)} \nonumber\\
        &\leq \log \sup_{\pi}\frac{  \displaystyle \sup_{p_{U|X}} h^{-1}\bigg( \sum_{y\in\Y} p(y) h\Big( V_{f,g}\big(p_{U|y}\big)\Big)\bigg)}{ \displaystyle  \inf_{p_{U|X}} \sup_{w\in\W}f^{-1}\Big( \sum_{u\in\U}p(u) f\big(g(w,u)\big)\Big)}\\
        &\leq \log \sup_{\pi}\frac{ \displaystyle  \displaystyle  \sup_{p_{U|X}} h^{-1}\bigg( \sum_{y\in\Y} p(y) h\Big(V_{f,g}\big(p_{U|y}\big)\Big)\bigg)}{ \displaystyle   \sup_{w\in\W} \inf_{p_{U|X}}f^{-1}\Big( \sum_{u\in\U}p(u) f\big(g(w,u)\big)\Big)}\\
        &= \log \sup_{\pi}\frac{ \displaystyle   \sup_{p_{U|X}} h^{-1}\bigg( \sum_{y\in\Y} p(y) h\Big( V_{f,g}\big(p_{U|y}\big)\Big)\bigg)}{ \displaystyle   \sup_{w\in\W} f^{-1}\bigg(\inf_{p_{U|X}}\sum_{u\in\U}\Big(\sum_{x\in\X}\pi_{x}p(u|x)\Big)f\big(g(w,u)\big) \bigg)} \nonumber\\
        &=\log \sup_{\pi}\frac{ \displaystyle  \sup_{p_{U|X}} h^{-1}\bigg( \sum_{y\in\Y} p(y) h\Big(V_{f,g}\big(p_{U|y}\big)\Big)\bigg)}{\displaystyle\sup_{w\in\W} f^{-1}\Big(\inf_{p_{U|X}}\sum_{u\in\U}\sum_{x\in\X}\pi_{x}p(u|x) f\big(g(w,u)\big)\Big)} \nonumber\\
        &\leq \log \sup_{\pi}\frac{ \displaystyle  \sup_{p_{U|X}} h^{-1}\bigg( \sum_{y\in\Y} p(y) h\Big( V_{f,g}\big(p_{U|y}\big)\Big)\bigg)}{\displaystyle\!\sup_{w\in\W} f^{-1}\Big(\!\sum_{x\in\X} \!\pi_{x}\!\inf_{p_{U|X}} \!\sum_{u\in\U} \! p(u|x) f\big(g(w,u)\big) \!\Big)  } \\
        &=\log \sup_{\pi}\frac{ \displaystyle   h^{-1}\bigg( \sum_{y\in\Y} p(y) h\Big(V_{f,g}\big(p_{X|y}\big)\Big)\bigg) }{  \displaystyle \sup_{w\in\W} f^{-1}\Big( \sum_{x\in\X}\pi_{x}f\big(g(w,x)\big) \Big) }= \sup_{\pi}\mathcal{L}_{h,f,g}^{\times}(\pi,C).\nonumber ~~\qedhere
    \end{align*}

\section{Proof of Proposition~\ref{prop:maximalalphabeta} }\label{app:proofofalphabeta}

 \begin{proof} 
    For the given functions, we have:
    
    \begin{align*}
        \widehat{V}_{h_{(\alpha,\beta)},f_{\alpha},g}[\pi,C] &=h_{(\alpha,\beta)}^{-1}\left( \sum_{y\in\Y} p(y) h_{(\alpha,\beta)}\left( V_{f_{\alpha},g}\left(p_{X|y}\right)\right) \right) \nonumber\\
                &=\left( \sum_{y\in\Y}p(y) \left(\sum_{x\in\X}p({x|y})\right)^{\frac{\beta}{\alpha}} \right)^{\frac{\alpha}{(\alpha-1)\beta}}.
    \end{align*}
    And the corresponding leakage  $\Lk^{\times}_{h_{(\alpha,\beta)},f_{\alpha},g}\left(\pi,C\right)$ is 
    
    \begin{align*}
        &\Lk^{\times}_{h_{(\alpha,\beta)},f_{\alpha},g}\left(\pi,C\right)= \log\frac{\widehat{V}_{h_{(\alpha,\beta)},f_{\alpha},g}[\pi,C]}{V_{f_{\alpha},g}(\pi)} \nonumber\\
        &=\log \frac{\left( \sum_{y\in\Y}p(y) \left(\sum_{x\in\X}p^{\alpha}({x|y})\right)^{\frac{\beta}{\alpha}} \right)^{\frac{\alpha}{(\alpha-1)\beta}}}{\left(\sum_{x\in\X}\pi_{x}^{\alpha}\right)^{\frac{1}{\alpha-1}}} \nonumber\\
        &=\frac{\alpha}{\alpha-1}\log \frac{\left( \sum_{y\in\Y}p(y) \left(\sum_{x\in\X}p^{\alpha}({x|y})\right)^{\frac{\beta}{\alpha}} \right)^{\frac{1}{\beta}}}{\left(\sum_{x\in\X}\pi_{x}^{\alpha}\right)^{\frac{1}{\alpha}}} \nonumber\\
        &=\frac{\alpha}{\alpha-1}\log \frac{\left( \sum_{y\in\Y}p(y)^{1-\beta} \left(\sum_{x\in\X}p^{\alpha}({x,y})\right)^{\frac{\beta}{\alpha}} \right)^{\frac{1}{\beta}}}{\left(\sum_{x\in\X}\pi_{x}^{\alpha}\right)^{\frac{1}{\alpha}}} \nonumber\\
        &=\frac{\alpha}{(\alpha-1)\beta}\log  \sum_{y\in\Y}p(y)^{1-\beta} \left[ \frac{\sum_{x\in\X}\pi_{x}^{\alpha}C_{x,y}^{\alpha}}{{\sum_{x\in\X}\pi_{x}^{\alpha}}} \right]^{\frac{\beta}{\alpha}}. \nonumber ~~~~\qedhere
    \end{align*}
\end{proof}
        
\section{Proof of Equation~\ref{eq:localreyni}}\label{app:proofoflocalreyni}
    By~\eqref{eq:alphabeatleak} and~\eqref{eq:alphabetacap}, for \(\alpha=\beta\) we obtain:
    \begin{align}
    &\sup_{\pi}\Lk^{\times}_{h_{(\alpha,\alpha)},f_{\alpha},g}(\pi,C)  \nonumber\\
    &=\sup_{\pi} \frac{1}{\alpha-1}\log {\mathlarger{\sum_{y\in\Y}}}{p(y)^{1-\alpha}\left[ \sum_{x\in\X}\frac{\pi_{x}^{\alpha}C_{x,y}^{\alpha}}{{\sum_{x\in\X}\pi_{x}^{\alpha}}} \right]} \nonumber\\
    &=\sup_{\pi} \frac{1}{\alpha-1}\log {\mathlarger{\sum_{y\in\Y}}}\left(\sum_{x'}\pi_{x'}C_{x',y}\right)^{1-\alpha}\left[ \sum_{x\in\X}\frac{\pi_{x}^{\alpha}C_{x,y}^{\alpha}}{{\sum_{x\in\X}\pi_{x}^{\alpha}}} \right].\label{eq:a=balphabta}
    \end{align}
    To achieve the $\sup_{\pi}$, we apply the approach in~\cite[Thm. 3]{2022EXplainEps} as follows: Let $\displaystyle x^{*}=\argmax_{x}C_{x,y}$, and define a sequence of priors as:
    \begin{align}\label{eq:pin}
        \pi_{x^{*}}^{n}=1-\frac{1}{n}, \quad \pi_{x}^{n}=\frac{1}{n\left(|\X|-1\right)}~~\text{for}~~x\neq x^{*}.
    \end{align}
    It is evident that $\pi^n$ has full support, and also that $\displaystyle \lim_{n\rightarrow\infty}\sum_{x\in\X}\pi_{x}^{n}C_{x,y}=\max_{x\in\X}C_{x,y}$. Additionally, if we let $$\pi_{\alpha}=\frac{(\pi_{x})^{\alpha}}{\sum_{x\in\X}(\pi_{x})^{\alpha}},$$ then we have:
    \begin{align}\label{eq:pinalpah}
        \pi_{\alpha}^{n\rightarrow \infty}=\begin{cases}
            1, &x=x^{*},\\
            0, &x\neq x^{*}.
        \end{cases}
    \end{align}
     Applying $\pi^{n}$ in~\eqref{eq:a=balphabta}, by~\eqref{eq:pin} and~\eqref{eq:pinalpah} we get:
     \begin{align}
         \Lk^{\times}_{h_{(\alpha,\alpha)},f_{\alpha},g}(\forall,C)=\max_{x,x'}\frac{1}{\alpha-1}\log\sum_{y\in\Y}C_{x',y}^{1-\alpha}C_{x,y}^{\alpha}.
     \end{align}

\section{Proof of Theorem~\ref{thm:LDPf-1}}\label{app:proofofLDPf-1}
    Firstly, consider the following result for $\widehat{V}_{f,g}^{\max}[\pi,C]$ and increasing \(f\) and $f^{-1}$: 
    \begin{align}
        &\max_{y\in\Y} V_{f,g}(\delta^{y})=\max_{y\in\Y}\sup_{w\in\W}f^{-1}\bigg(\sum_{x\in\X}\delta_{x}^{y} f\big(g(w,x)\big)\bigg)\\
        &=  \max_{y\in\Y} \sup_{w\in\W}f^{-1}\bigg(\sum_{x\in\X}\frac{\pi_{x}C_{x,y}}{p(y)} f\big(g(w,x)\big)\bigg)\\
        &\leq  \max_{y\in\Y} \sup_{w\in\W} f^{-1}\bigg( \Big(\frac{\max_{x\in\X}C_{x,y}}{p(y)} \Big) \sum_{x\in\X} \pi_{x}f\big(g(w,x)\big)\bigg) \label{eq:maxcxyp(y)}\\
        &=\max_{y\in\Y}  f^{-1}\left( \frac{\max_{x\in\X}C_{x,y}}{p(y)}\right) \sup_{w\in\W} f^{-1}\bigg( \sum_{x\in\X} \pi_{x}f\big(g(w,x)\big)\bigg) \label{eq:multiplicativefh}\\
        &= \max_{y\in\Y}  f^{-1}\bigg( \frac{\max_{x\in\X}C_{x,y}}{p(y)}\bigg) V_{f,g}(\pi),
    \end{align}
    For the generalized max-case capacity $\mathcal{L}^{\max}_{f,\forall}(\forall,C)$, we have:
    \begin{align}
        &\mathcal{L}^{\max}_{f,\forall}(\forall,C)=\sup_{\pi,g}\mathcal{L}^{\max}_{f,g}(\pi,C)=\log \sup_{\pi,g} \frac{\widehat{V}_{f,g}^{\max}[\pi,C]}{V_{f,g}(\pi)}\\
        &\leq \log \sup_{\pi,g} \frac{\max_{y\in\Y}  f^{-1}\left( \frac{\max_{x\in\X}C_{x,y}}{p(y)}\right) V_{f,g}(\pi)}{V_{f,g}(\pi)}\\
        &=\log \max_{y\in\Y} f^{-1}\left( \sup_{\pi}\frac{\max_{x\in\X}C_{x,y}}{p(y)}\right)\\
        &=\log \max_{y\in\Y} f^{-1}\left( \frac{\max_{x\in\X}C_{x,y}}{\min_{x\in\X}C_{x,y}}\right)\label{eq:f-1LDP},
    \end{align}
    where~\eqref{eq:f-1LDP} has been proved in~\cite[Thm. 3]{2022EXplainEps}.
    If \(f\) and $f^{-1}$ are decreasing, we replace $\displaystyle \frac{\max_{x\in\X}C_{x,y}}{p(y)}$ with $\displaystyle \frac{\min_{x\in\X}C_{x,y}}{p(y)}$ in~\eqref{eq:maxcxyp(y)} and the result is achieved.

\section{Proof of Proposition~\ref{prop:AVG+CVX=DPI}}\label{app:proofofAVG+CVX=DPI}

        Assume \(\mathcal{X}\), \(\mathcal{Y}\), and \(\mathcal{Z}\) are sets of possible values. 
        Let \(\pi\) represent a prior distribution over \(\mathcal{X}\), \(C\) denote a channel from \(\mathcal{X}\) to \(\mathcal{Y}\), and \(R\) be a channel from \(\mathcal{Y}\) to \(\mathcal{Z}\). 
        The sequential combination of channels \(C\) and \(R\), symbolized by \(C R\), forms a new channel that maps \(\mathcal{X}\) to \(\mathcal{Z}\). 
        Consequently, the corresponding inner of $[\pi,C]$ for each $y \in \Y$ is  $\delta^{y}=p(X|y)$ and the corresponding inner of $[\pi,CR]$ for each $z \in \mathcal{Z}$ is $\delta^{z}=p(X|z).$ 
        Define the joint probability distribution \(p(x, y, z)\) as $p(x,y,z) = \pi_x C_{x,y} R_{y,z}$ for each \((x, y, z) \in \mathcal{X} \times \mathcal{Y} \times \mathcal{Z}\). 
        This joint distribution makes a Markov chain $X-Y-Z$. Thus, we have \(p(z|x, y) = p(z|y)\) and  $p(x|z)=\sum_{y\in\Y}p(x|y)p(y|z)$.
        
        First, assume $h\neq f$ and let $h$ be either convex and increasing or concave and decreasing. In the following, we consider the first case:
        \begin{align}
            & \widehat{V}_{h,f,g}[\pi, C] =
            h^{-1}\bigg( \sum_{y\in\Y} p(y) h\Big( V_{f,g}\big(p_{X|y}\big) \Big)\bigg) \nonumber\\
            &= h^{-1}\bigg( \sum_{y\in\Y} \Big(\sum_{z\in\mathcal{Z}}p(z)p(y|z)\Big) h\Big( V_{f,g}\big(p_{X|y}\big)\Big) \bigg) \nonumber\\
            &= h^{-1}\Bigg(\sum_{z\in\mathcal{Z}}p(z) \bigg( \sum_{y\in\Y} p(y|z) h\Big( V_{f,g}\big(p_{X|y}\big)\Big)\bigg)\Bigg) \\
            &\geq h^{-1}\bigg(\sum_{z\in\mathcal{Z}}p(z) h\Big(\sum_{y\in\Y} p(y|z)V_{f,g}\big(p_{X|y}\big)\Big)\bigg) \label{eq:ineq-h}\\
            &\geq h^{-1}\Bigg(\sum_{z\in\mathcal{Z}}p(z) h \bigg( V_{f,g}\Big(\sum_{y\in\Y} p(y|z)p(x|y)\Big) \bigg) \Bigg) \label{eq:ineq-Vfg}\\
            &= h^{-1}\bigg(\sum_{z\in\mathcal{Z}}p(z)  h\Big( V_{f,g}\big(p_{X|z}\big)\Big) \bigg)= \widehat{V}_{h,f,g}[\pi,CR],\nonumber
        \end{align}
        where~\eqref{eq:ineq-h} holds since $h$ is convex and increasing which implies that $h^{-1}$ is also  increasing. Similarly,~\eqref{eq:ineq-Vfg} is due to the convexity of $V_{f,g}(p_{X|y})$ and $h$ and $h^{-1}$ being increasing. When $h$ is concave and decreasing, the same inequalities hold.
        
        When $h=f$, it can be either convex and decreasing or concave and increasing, which are duals of the case when $h\neq f$. However, \DPI~still holds due to~\eqref{eq:h=f}. We prove \DPI~for increasing \(f\). The same proof applies to the decreasing case. 
        \begin{align*}
            &\widehat{V}_{f,f,g}[\pi, C]=f^{-1}\bigg(\sum_{y\in\Y} p(y) \Big( \sup_{w\in\W} \sum_{x\in\X} \delta_{x}^{y} f\big(g(w, x)\big) \Big)\bigg) \\
            &=f^{-1}\bigg(\sum_{z\in\mathcal{Z}}p(z)\sum_{y\in\Y}p(y|z) \Big( \sup_{w\in\W} \sum_{x\in\X} p(x|y) f\big( g(w,x)\big)\Big)\bigg) \\
            &\geq f^{-1}\bigg(\sum_{z\in\mathcal{Z}}p(z)\Big( \sup_{w\in\W} \sum_{x\in\X} \sum_{y\in\Y}p(y|z)p(x|y) f\big(g(w,x)\big)\Big)\bigg) \\
            &= f^{-1}\bigg(\sum_{z\in\mathcal{Z}}p(z)\Big( \sup_{w\in\W} \sum_{x\in\X} p(x|z) f\big(g(w,x)\big) \Big)\bigg) \\
            &=\widehat{V}_{f,f,g}[\pi, CR],\nonumber
        \end{align*}
        where the inequality is true since $\displaystyle \sup_{w\in\W}\sum_{x\in\X} p(x|y) f\big( g(w, x) \big)$ is convex according to the proof of Theorem~\ref{thm:Vfg axioms}.

\end{document}